\documentclass[12pt,reqno]{amsart}
\usepackage{amscd}
\usepackage{amsmath,amssymb,amsmath,amsthm,amsopn}
\usepackage{mathtools}
\usepackage{fixltx2e}
\usepackage{units,xfrac}
\PassOptionsToPackage{normalem}{ulem}
\usepackage{ulem}
\usepackage[inline]{enumitem}
\usepackage{multirow}
\usepackage{nameref}
\usepackage{graphicx,tikz}
\usetikzlibrary{patterns}

\newtheorem{thm}{\protect\theoremname}[section]
\newtheorem*{thm*}{\protect\theoremname}
\newtheorem{lem}{\protect\lemname}[section]

\newtheorem{prop}[lem]{\protect\propname}
\newtheorem{conj}[thm]{\protect\conjname}
\theoremstyle{definition}
\newtheorem{defn}{\protect\defnname}[section]
\newtheorem{ass}{\protect\assname}[section]
\theoremstyle{remark}
\newtheorem*{rem*}{\protect\remname}
\newtheorem*{rems*}{\protect\remsname}

\providecommand{\theoremname}{Theorem}
\providecommand{\lemname}{Lemma}
\providecommand{\corname}{Corollary}
\providecommand{\propname}{Proposition}
\providecommand{\conjname}{Conjecture}
\providecommand{\defnname}{Definition}
\providecommand{\assname}{Assumption}
\providecommand{\remname}{Remark}
\providecommand{\remsname}{Remarks}

\newcommand{\ipc}[2]{\left \langle #1 , \ #2 \right \rangle }
\newcommand{\Ev}[1]{\E \left( #1 \right)}  
\newcommand{\Evc}[2]{\E \left ( #1 \middle | #2 \right )}
\newcommand{\norm}[1]{\left\Vert#1\right\Vert}
\newcommand{\abs}[1]{\left\vert#1\right\vert}
\newcommand{\set}[1]{\left\{#1\right\}}
\newcommand{\setb}[2]{\left \{ #1 \ \middle | \ #2 \right \} }
\newcommand{\com}[2]{\left[ #1 , #2 \right ]}

\newcommand{\bb}[1]{\mathbb{#1}}
\newcommand{\mc}[1]{\mathcal{#1}}
\newcommand{\wt}[1]{\widetilde{#1}}
\newcommand{\wh}[1]{\widehat{#1}}
\renewcommand{\vec}[1]{\mathbf{#1}}

\def\Z{\mathbb Z}

\def\R{\mathbb R}

\def\E{\mathbb E}

\def\e{\mathrm e}
\def\im{\mathrm i}
\def\di{\mathrm d}
\def\const{\mathrm{const.}}


\def\half {\frac{1}{2}}
\def\1{{\mathsf 1}}
\def\tem{\textemdash}

\def\Pr{\operatorname{Prob}} 
\def\ran{\operatorname{ran}} 
\def\dist{\operatorname{dist}}   
\def\tr{\operatorname{tr}}    
\def\Re{\operatorname{Re}} 
\def\Im{\operatorname{Im}} 

\usepackage[letterpaper]{geometry}
\geometry{verbose,tmargin=2.5cm,bmargin=2.5cm,lmargin=2.5cm,rmargin=2.5cm}
\usepackage[unicode=true, bookmarks=true,bookmarksnumbered=false,bookmarksopen=false, breaklinks=false,pdfborder={0 0 1},backref=false,colorlinks=false]{hyperref}
\hypersetup{pdfauthor={Jeffrey Schenker}}
\numberwithin{equation}{section}
\numberwithin{figure}{section}
\newcommand{\Alpha}{A}
\newcommand{\wc}[1]{\wh{\mc{#1}}}

\usepackage[utf8]{inputenc}
\usepackage{mathpazo,helvet,courier}
\usepackage[T1]{fontenc}

\usepackage[backend=bibtex, style=nature, isbn=false,url=false,sorting=nyt,arxiv=abs]{biblatex}
%
%
\addbibresource{/users/jeffrey/Library/texmf/bibtex/library.bib}
\DeclareFieldFormat{titlecase}{#1}
\AtEveryBibitem{\clearfield{month}}

\newlist{inenum}{enumerate*}{10}
\setlist[inenum]{label=\alph*,afterlabel={{)~}}}
\newlist{inenumr}{enumerate*}{10}
\setlist[inenumr]{label=\roman*,afterlabel={{)~}}}

\newcommand{\Evac}[2]{\bb{E}_{\Alpha} \left ( #1 \middle | #2 \right )}
\newcommand{\Eva}[1]{\bb{E}_{\Alpha} \left ( #1  \right )}
\newcommand{\Oh}[1]{\mc{O} \left (#1 \right )}

\setlist[enumerate]{leftmargin=*}

\begin{document}

\title[Diffusion for a Schrödinger equation perturbed by a fluctuating potential]{Diffusion in the Mean for an Ergodic Schrödinger Equation Perturbed by a Fluctuating
Potential}

\author{Jeffrey Schenker}

\address{Michigan State University Department of Mathematics, Wells Hall, 619 Red Cedar Road, East Lansing, MI
48823}

\email{jeffrey@math.msu.edu}
\thanks{I would like to express my gratitude for the hospitality of the Institute for Advanced Study where I was a member while writing this paper. The work was supported by NSF Award DMS-0846325 and The Fund For Math. Final revisions to the manuscript were completed while I visited the Isaac Newton Institute in May 2015. }

\date{June 18, 2014; revised May 14, 2015}
\begin{abstract}
Diffusive scaling of position moments and a central limit theorem are obtained for the mean position of a quantum particle hopping on a cubic lattice and subject to a  random potential consisting of a large static part and a small part that fluctuates stochastically in time.  If the static random potential is strong enough to induce complete localization in the absence of time dependent noise, then the diffusion constant is shown to go to zero proportional to the square of the strength of the time dependent part.
\end{abstract}
\maketitle

\section{Introduction}  Proving diffusive propagation of the quantum wave function in a weakly disordered background over arbitrarily long time scales (in dimension $d\ge 3$)  remains one of the outstanding open problems of mathematical physics.  This is so despite the fact that there is a well developed physical theory of this phenomenon as a multiple scattering process \tem \ see \cite{Langer1966} and also \cite{Lee1985} and references therein. Heuristically, the multiple scattering picture of wave diffusion is as follows.  Scattering by the disordered background leads to a build up of random phases over time, resulting in decoherence among different possible scattering paths.  Thus we expect, to a high degree of accuracy, that  propagation may be understood \emph{classically}, as a superposition of reflections from random obstacles.  Provided recurrence effects do not dominate, the central limit theorem suggests a diffusive evolution for the amplitude in the long run. 

So far it has not been possible to turn the heuristic argument outlined above into mathematical analysis, at least without restricting to time scales that are not too long, as in \cite{Erdos2007,Erdos2008a}. There are mathematical difficulties with each step of the heuristic argument. In particular, one substantial obstacle to making the analysis precise is \emph{recurrence}.  The wave packet in a multiple scattering expansion may return often to regions visited previously.  In a static medium, the environment seen at each return is \emph{identical} to that seen before, denying us the stochastic independence needed to use a version of the central limit theorem.  

In fact, recurrence is  not simply a technical difficulty. The phenomenon of Anderson localization --- which can be seen as a recurrence phenomenon \cite{Anderson1958} and is well understood mathematically, see \cite{Frohlich1983,VonDreifus1989,Aizenman1993,Aizenman2001}  --- shows that, under the right hypotheses (large disorder or low dimension), recurrence \emph{can} dominate, resulting in complete localization of the wave function, up to exponentially small tails uniformly bounded for all time. It is worth noting that the above heuristic argument does not support diffusion in dimensions $d=1$ or $2$, because of the high recurrence of random walks in these dimensions.  Not coincidentally, localization has been proved to dominate at any disorder strength in $d=1$, e.g., \cite{Goldshtein1977,Delyon1985}. The exact nature of the dynamics in $d=2$ remains an open problem, although based on the scaling theory of localization \cite{Abrahams1979} it is widely believed that localization occurs at any disorder strength in $d=2$ as well.

It is reasonable to expect that diffusion occurs more readily for a model in which recurrence is eliminated or reduced.  This was the idea behind prior work of the author and collaborators  \cite{Kang2009b,Musselmana}, in which diffusive propagation was shown to occur for solutions to a tight binding random Schrödinger equation with a random potential evolving stochastically in time.  (The models treated in \cite{Kang2009b,,Musselmana} had been considered previously by \citeauthor{Tcheremchantsev1997} \cite{Tcheremchantsev1997,Tcheremchantsev1998},  who obtained diffusive scaling for position moments up to logarithmic corrections.)

The  aim of this paper is to consider the more general, and more subtle, situation in which the environment is a superposition of two parts: a large static part that, on its own, would lead to Anderson localization and a small dynamic part that evolves stochastically as in \cite{Kang2009b,Musselmana}. We will obtain diffusive propagation for the evolution, however diffusion will occur at a slow rate that can be controlled quantitatively in terms of the size of the dynamic part of the environment. 

In some ways the problem considered here is a quantum analogue of the classical dynamics of disordered oscillator systems perturbed by noise in the form of a momentum jump process, considered in \cite{Bernardin2011,Bernardin2012} and reviewed in \cite{Bernardin2014}.  
In those works, heat transport is considered in the limit of weak noise in a regime for which transport is known to vanish for the disordered oscillator system without noise.  
A key feature of the noise in \cite{Bernardin2011,Bernardin2012} is that energy is conserved in the system with noise; this is necessary so that one can speak about heat flux. 
By contrast, in the present work energy conservation is broken by the noise.  
Indeed the \emph{only} conserved quantity for the evolution we consider is quantum probability; and it is this quantity which is subject to diffusive transport.

Specifically, we consider below solutions to a Schrödinger equation of the form
\begin{equation}
\im \partial_{t}\psi_{t}(x)=H_{\omega}\psi_{t}(x)+ g V(x,t)\psi_{t}(x),\label{eq:SE}\end{equation}
on $\ell^{2}(\Z^{d})$, with $H_\omega$ an ergodic Schrödinger operator and $V(x,t)$ a random potential with time dependent stochastic fluctuations.  The analysis below is applicable to a broad class of operators $H_\omega$ and $V(x,t)$ \tem \ the specific assumptions are presented in \S\ref{sec:ass}. To avoid technicalities in this introduction, let us state the main results in terms of the following non-trivial, and somewhat typical, example of operators satisfying the general requirements:
\begin{enumerate}
\item Let $H_\omega$ be a discrete random Schrödinger operator of the form
\begin{equation}\label{eq:Homega} H_\omega \psi(x) \ = \ \sum_{|y-x|=1} \psi(y) + \lambda  U_\omega(x) \psi(x)
\end{equation}
where $\{U_\omega(x)\}_{x\in \Z^d}$ are independent, identically distributed random variables with a distribution  having a bounded density supported in $[-1,1]$. 
\item Let $V(x,t)$ be a random potential that evolves stochastically in time as follows,
\begin{equation}V(x,t) \ = \ v(\alpha_x(t))\label{eq:Vxt}\end{equation}
where $\{\alpha_x(t)\}_{x\in \Z^d}$  are  independent periodic Brownian motions on $[0,1]$ and $v:[0,1] \rightarrow [-1,1]$ is a non-constant, piecewise continuous function.
\end{enumerate}
The parameters   $\lambda$ and $g$ measure the strength of the static and dynamic disorder, respectively.  Although we could have absorbed these parameters into the definitions of the random variables $v_x(t)$ and $u_x(\omega)$, it is convenient to keep them in the analysis for the consideration of limiting regimes. However, without loss, we take $\lambda,g\ge 0$, redefining $v_x$ and $u_x$ if necessary.

A hallmark of diffusion is the existence of the \emph{diffusion constant} for eq.\ \eqref{eq:SE} \begin{equation}\label{eq:DC}
D(g,\lambda) \ := \ \lim_{t \rightarrow \infty} \frac{1}{t} \sum_{x} |x|^2 \Ev{\abs{\psi_t(x)}^2},
\end{equation}
characterized by the relationship $x\sim \sqrt{t}$. Here, and throughout this introduction, $\Ev{\cdot}$ denotes averaging with respect to: \begin{inenumr} \item the static disorder $\{u_x\}_{x\in \Z^d}$, \item the dynamic disorder $t \mapsto v(\alpha_x(t))$ and \item the initial values $\{\alpha_x(0)\}$ of the Brownian motions, taken independent and uniform on $[0,1]$.
\end{inenumr}
We will show below that the limit in eq.\ \eqref{eq:DC} exists for $g > 0$, and furthermore $D(g,\lambda)$ is positive and finite.  To give an unambiguous definition, one may take the initial value $\psi_0(x) = \delta_0(x) = 1$ when $x=0$ and $0$ otherwise.  However, as we will show, the limit remains the same for any other choice of (normalized) $\psi_0$ with $\sum_x |x|^2 \abs{\psi_0(x)}^2 < \infty$.  

We refer to the existence of a finite, positive diffusion constant as in eq.\ \eqref{eq:DC} as \emph{diffusive scaling}.  It is a consequence of the following more general result. 
\begin{thm}[Central limit theorem for single time position marginals]\label{thm:CLT}If $g>0$ then there is $D=D(g,\lambda) \in(0,\infty)$ such that for any bounded continuous $f:\R^d\rightarrow \R$ and any normalized $\psi_0\in \ell^2(\Z^d)$ 
 we have
\begin{equation}\lim_{t\rightarrow \infty} \sum_{x\in \Z^d} f\left (  \frac{x}{\sqrt{D t}} \right ) \Ev{\abs{\psi_t(x)}^2} \ = \ \int_{\R^d} f(\vec{r}) \left ( \frac{d}{ 2\pi   }\right )^{\frac{d}{2}} \e^{-\frac{d}{2 }\abs{\vec{r}}^2} \di \vec{r} .
\label{eq:CLT}
\end{equation}
If  $\sum_{x} (1+|x|^2) \abs{\psi_0(x)}^2 < \infty$, then eq.\ \eqref{eq:CLT} extends to quadratically bounded continuous $f$ with $\sup_x (1+|x|^2) \abs{f(x)} < \infty$. 
\end{thm}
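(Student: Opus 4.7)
My plan is to reduce the CLT to a spectral perturbation analysis for an explicit family of ``transfer operators'' parametrized by the Fourier variable $k$. The starting observation is that, although $\psi_{t}$ is random in both the static disorder $\omega$ and the dynamic noise $\alpha(t)$, the doubly averaged covariance
\begin{equation*}
\rho_{t}(x,y) \ := \ \Ev{\psi_t(x)\, \overline{\psi_t(y)}}
\end{equation*}
satisfies a closed, autonomous linear evolution. This is because the dynamic noise is itself Markovian: applying It\^o's formula to $\psi_{t}(x)\overline{\psi_{t}(y)}$ and averaging over the initial phases $\{\alpha_x(0)\}$ produces a Lindblad-type generator consisting of a Hamiltonian part $-\im\, [H_\omega,\,\cdot\,]$ together with a $g^{2}$-dissipator whose strength is controlled by the covariance of $V(x,t)$. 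Averaging further over the ergodic static disorder yields a translation-invariant evolution of the two-point function.

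Next I would work in Fourier. With
\begin{equation*}
S_{t}(k) \ = \ \sum_{x\in \Z^d} \e^{\im k\cdot x}\, \Ev{\abs{\psi_t(x)}^2},
\end{equation*}
translation covariance of the joint law of $(H_{\omega},V)$ lets one write $S_{t}(k) = \ip{\Phi_{0},\, \e^{t\mc{M}(k)}\Phi_{0}}$ for a family of generators $\mc{M}(k)$ acting on a Hilbert space of $\omega$-dependent two-point kernels of fixed separation. The operator $\mc{M}(0)$ generates the averaged, Markov-like semigroup on probability densities; conservation of probability places $0$ in its spectrum with a natural invariant state. The central analytic input is that, for $g>0$, the dynamic dissipation opens a \emph{uniform spectral gap} above this eigenvalue, so that it is simple and isolated. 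Granted this, analytic perturbation theory for the family $k \mapsto \mc{M}(k)$ yields a smooth eigenvalue branch $\lambda(k)$ with expansion
\begin{equation*}
\lambda(k) \ = \ -\half\, k\cdot \vec{D}\, k + \Oh{\abs{k}^3},
\end{equation*}
whose linear term vanishes by reflection symmetry of the laws of $H_\omega$ and $V$, and whose Hessian reduces to $D\,\vec{I}$ by the cubic symmetry of $\Z^{d}$. Inserting $k \mapsto k/\sqrt{Dt}$ and letting $t\to\infty$ gives $t\,\lambda(k/\sqrt{Dt}) \to -\half\abs{k}^2$, hence pointwise convergence of characteristic functions, which by L\'evy's theorem and Fourier inversion establishes \eqref{eq:CLT} for bounded continuous $f$.

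The main obstacle I anticipate is establishing the uniform spectral gap for $\mc{M}(0)$. This requires combining (i)~ergodicity of $\omega$, to exclude any additional invariant directions of the averaged evolution, with (ii)~a quantitative lower bound on the $g^{2}$-dissipation transverse to the invariant state. The bound must hold with a dependence on the static strength $\lambda$ that is controllable in the strong-disorder regime, since this will also be what permits the promised quantitative scaling $D(g,\lambda) = \Oh{g^{2}}$ via a second-order expansion about $g=0$ using fractional-moment or multiscale estimates on the resolvent of $H_\omega$.

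Finally, to extend \eqref{eq:CLT} to $f$ obeying $\sup_x (1+\abs{x}^2)\abs{f(x)}<\infty$, I would invoke uniform integrability. The identities $\sum_x \abs{x}^2 \Ev{\abs{\psi_t(x)}^2} = Dt + o(t)$ and $\sum_x \abs{x}^4 \Ev{\abs{\psi_t(x)}^2} = \Oh{t^{2}}$, obtained by differentiating $\lambda(k)$ to second and fourth order respectively, show that the measures $\Ev{\abs{\psi_t(x)}^2}$ rescaled by $1/\sqrt{Dt}$ have uniformly integrable $\abs{x}^2$-moment. A finite second moment for $\psi_0$ contributes only an additive term that is washed out by the $1/\sqrt{Dt}$ rescaling, so a standard truncation argument promotes the weak convergence proved above to the class of quadratically-growing test functions.
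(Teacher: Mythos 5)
Your high-level architecture — pass to the averaged two-point function, Fourier transform in the separation variable, analyze a transfer operator $\mc{M}(k)$ near $k=0$ via spectral perturbation theory, then apply Lévy's continuity theorem — is indeed the framework the paper uses, right down to the Tauberian/uniform-integrability upgrade for quadratically growing $f$. But there is a genuine gap at the very first step, and it propagates through the rest.

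You claim that the doubly averaged covariance $\rho_t(x,y) = \E\bigl(\psi_t(x)\overline{\psi_t(y)}\bigr)$ satisfies ``a closed, autonomous linear evolution'' of Lindblad type once one applies Itô's formula and averages over the initial phases. This is true only in the white-noise limit, where $V(x,t)$ is $\delta$-correlated in $t$. For the Markov noise considered in the paper (periodic Brownian motion $\alpha_x(t)$, or more generally any Markov process with correlation time $\tau>0$), the averaged $\rho_t$ does \emph{not} satisfy a closed equation: $\partial_t \rho_t$ depends on the joint law of $(\psi_t,\alpha(t))$, not just on $\rho_t$ itself. The paper's fix is Pillet's formula: the \emph{conditioned} object $F_t(a,\omega) = \Evac{\psi_t\ipc{\psi_t}{\cdot}}{\alpha(t)=a}$ satisfies a closed evolution $\partial_t F_t = -\mc{L}F_t$ with $\mc{L} = \im\mc{K}+\im\mc{U}+\im g\mc{V}+B$ on the augmented Hilbert space $L^2(\Alpha\times\Omega;\,\mc{HS}(\Z^d))$, where $B$ is the Markov generator. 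Only after integrating out $a$ does one recover $\E(\rho_t)$. So the transfer operator you need to analyze is genuinely the one on the augmented space, not a Lindbladian.

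This matters for your spectral-gap step. The Markov generator $B$ annihilates \emph{all} functions constant in $a$ (in particular, everything in $L^2(\Omega)\otimes\ell^2(\Z^d\times\Z^d)$), so the ``dissipative'' block has an enormous kernel. The isolated simple eigenvalue near $0$ arises not from a single dissipator but from the interplay of the off-diagonal hopping terms $\Phi_k$, $Q_k$, the dissipation $\wh{V}^\dagger\wc{L}_{\vec{0};3}^{-1}\wh{V}$ on the static-only sector, and the equivalence of twisted shifts (Ass.\ \ref{ass:sigmax}) which identifies $\ran Q_k$ with $\ran Q_0$. The paper extracts this via a Schur complement ($\Gamma_k(w)$ in eq.\ \eqref{eq:Gamma}) and a chain of strong-operator limits, rather than by isolating an analytic eigenvalue branch of a single operator. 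Your plan is not hopeless if you replace ``Lindblad'' by ``Pillet/augmented space'' and then carry out the block decomposition carefully — but the ergodicity-plus-covariance argument you sketch for the gap does not by itself exclude the invariant directions lying in the kernel of $B$; the non-degeneracy condition \eqref{eq:vnondeg} on $B^{-1}v$ (not merely the covariance of $V$) is what is actually needed.
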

\begin{rems*}
\begin{inenum} \item Let $X_{t}$ denote the random variable supported on $(Dt)^{-\nicefrac{1}{2}} \Z^d$ with probability distribution $\Pr(X_{t} = y) :=  \E (|\psi_t(\sqrt{tD}y)|^2).$ Eq.\ \eqref{eq:CLT} states that $X_t$ converges in distribution to a centered Gaussian variable on $\R^d$ normalized to have variance one.  \item Diffusive scaling, eq.\ \eqref{eq:DC}, follows  by taking $f(\vec{r})=|\vec{r}|^2$.
\end{inenum}
\end{rems*}

We are primarily interested here in the regime $\lambda >> 1$, although we will demonstrate diffusion for all $\lambda$ (even $\lambda=0$) provided $g > 0$.  
When $\lambda >>1$, it is known that $H_\omega$ exhibits Anderson localization \cite{Frohlich1983,Aizenman1994} and, in particular,
\begin{equation}\label{eq:AL}
\sup_{t\ge 0} \sum_{x\in \Z^d} \abs{x}^2 \Ev{ \abs{\ipc{\delta_x}{\e^{-\im t H_\omega}\delta_0}}^2} \ < \ \infty .
\end{equation}
Thus
$$ D(0,\lambda ) \ = \ 0, \quad \lambda >>1.$$
(In one dimension this result is valid whenever $\lambda \neq 0$.)  Dynamical randomness destroys localization and furthermore induces diffusion whenever $g >0$. However, for small $g$ the diffusion constant will be small. In fact, we have.
\begin{thm}\label{thm:smallg}If eq.\ \eqref{eq:AL} holds, as it does if $d=1$ or $\lambda$ is sufficiently large, then 
\begin{equation}\label{eq:smallg1}
D(g,\lambda) \ = \ F(\lambda) g^2 \ + \ o(g^2), \quad \text{ as } g \rightarrow 0
\end{equation}
with $0<F(\lambda)<\infty.$  Whether or not eq.\ \eqref{eq:AL} holds,  $g \mapsto D(g,\lambda)$ is real analytic  on $\{g>0\}$. \end{thm}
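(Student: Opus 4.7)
The plan is to deduce both statements from a Green--Kubo/resolvent representation of $D(g,\lambda)$ in terms of the effective generator of the averaged evolution, of the kind that must already underlie the proof of Theorem~\ref{thm:CLT}. Averaging the two-point function $\rho_t(x,y):=\E[\psi_t(x)\overline{\psi_t(y)}]$ over the dynamic noise $V$ produces a closed Lindblad-type equation
\begin{equation*}
\partial_t \rho_t \ = \ -\im [H_\omega,\rho_t] \ - \ g^{2}\,\mathcal{C}\rho_t ,
\end{equation*}
in which the dephasing operator $\mathcal{C}$ is diagonal in the position basis and built from the covariance of $v(\alpha_x(t))$. Averaging further over $\omega$ and exploiting translation covariance yields a Markov generator $\mathcal{L}_g$ depending polynomially on $g$ (through the prefactor $g^2$), together with a Kubo formula of the schematic form $D(g,\lambda) = 2\ip{J,(-\mathcal{L}_g)^{-1}J}$ for an appropriate ``current'' vector $J$. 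I would take this representation as the starting point.

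\emph{Real analyticity on $\{g>0\}$.} For each such $g$, the conclusion of Theorem~\ref{thm:CLT} is equivalent to invertibility of $-\mathcal{L}_g$ on the subspace orthogonal to the invariant state. Since $\mathcal{L}_g$ is a polynomial in $g$, standard analytic perturbation theory of bounded-invertible operators gives that $g \mapsto (-\mathcal{L}_g)^{-1}$ is real analytic in operator norm on $(0,\infty)$, and hence so is $g\mapsto D(g,\lambda)$. No localization hypothesis enters here.

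\emph{Small-$g$ asymptotics under \eqref{eq:AL}.} I would rescale time as $\tau = g^2 t$ and perform an adiabatic (slow-manifold) reduction around the degenerate unperturbed generator $-\im[H_\omega,\cdot]$. Passing to the interaction picture $\tilde\rho_t := \e^{\im t H_\omega}\rho_t\e^{-\im tH_\omega}$ and averaging the resulting $g^2$ term over the fast unitary flow identifies
\begin{equation*}
F(\lambda) \ = \ 2\ip{J,(-\mathcal{L}_{\mathrm{eff}})^{-1}J},
\end{equation*}
where $\mathcal{L}_{\mathrm{eff}}$ is, after averaging over $\omega$, the long-time average of $\e^{\im s H_\omega}\mathcal{C}\e^{-\im sH_\omega}$. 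The localization bound~\eqref{eq:AL} supplies exactly the uniform-in-$t$ spatial second-moment control on $\e^{-\im tH_\omega}\delta_0$ needed to make this time average converge and to bound the remainder in the expansion of $(-\mathcal{L}_g)^{-1}$ in powers of $g^2$. Finiteness of $F(\lambda)$ is then immediate from~\eqref{eq:AL}; strict positivity follows because $v$ is non-constant (so $\mathcal{C}\not\equiv 0$) and ergodicity of $H_\omega$ prevents $J$ from lying in the kernel of $\mathcal{L}_{\mathrm{eff}}$.

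\emph{Main obstacle.} Analyticity on $\{g>0\}$ is essentially book-keeping once the Kubo representation and the spectral gap of $\mathcal{L}_g$ are in hand. The genuine work is the singular perturbation at $g=0$: the unperturbed generator $-\im[H_\omega,\cdot]$ is unitary, with purely imaginary spectrum and no spectral gap to open. It is the localization bound~\eqref{eq:AL} that plays the role of an effective gap, converting a would-be divergent time integral into a convergent spatial sum and thereby making the adiabatic elimination rigorous. Quantifying the $o(g^2)$ error against the unitary (non-dissipative) unperturbed dynamics will be the point where the most care is needed.
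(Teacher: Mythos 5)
Your general strategy (a Green--Kubo/resolvent formula for $D$, analytic perturbation theory for the $g>0$ regularity, and localization playing the role of an effective gap to tame the singular perturbation at $g=0$) captures the spirit of the paper's argument, but the starting point is wrong for this model. Averaging $\rho_t$ over the dynamic noise does \emph{not} produce a closed Lindblad equation $\partial_t\rho_t = -\im[H_\omega,\rho_t] - g^2\mathcal{C}\rho_t$: the potential $V_{\alpha(t);\omega}$ is an exponentially mixing Markov process with positive correlation time $\tau$, not white noise, so $\E[\rho_t]$ alone does not obey any autonomous equation. The paper's actual closure is Pillet's formula (Lemma~\ref{lem:pillet}): the \emph{conditional} density matrix $\Evac{\rho_t}{\alpha(t)=a}$ evolves under $\e^{-t\mc{L}}$ on the augmented space $L^2(\Alpha\times\Omega\times\Z^d\times\Z^d)$, where $\mc{L} = \im\mc{K}+\im\mc{U}+\im g\mc{V}+B$ depends \emph{linearly} on $g$. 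The $g^2$ dependence is not an intrinsic weak-coupling scaling of a dephasing term; it is produced later, after Fourier transform and a Schur complement onto $\wc{H}_2$, in the form $\big(\Pi_{\vec0}^\perp(\im\wc{K}_{\vec0;2}+\im\wc{U}_2 + g^2\wc{A}(g))\Pi_{\vec0}^\perp\big)^{-1}$ with $\wc{A}(g)=\wh{V}^\dagger(\im\wc{K}_{\vec0;3}+\im\wc{U}_3+B_3+g\,\im\wc{V}_3)^{-1}\wh{V}$. In particular the ``effective generator'' is \emph{not} a polynomial in $g$ (it contains $g$ inside an inverted operator), so your analyticity argument ``$\mathcal{L}_g$ is polynomial in $g$, hence $(-\mathcal{L}_g)^{-1}$ is analytic where invertible'' does not apply as stated. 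The paper instead must check that $\Re\wc{A}(g)$ remains strictly accretive for complex $g$ in a strip $|\Im g|<\nicefrac{1}{2\tau}$ (a re-run of the dissipativity estimate~\eqref{eq:dissipation!}), and the resulting analyticity holds only on a sector, not a full complex neighborhood of $(0,\infty)$.

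The second gap is in how the localization bound~\eqref{eq:AL} enters. Your proposal invokes an interaction-picture time average of $\e^{\im sH_\omega}\mathcal{C}\e^{-\im sH_\omega}$ and says~\eqref{eq:AL} makes this converge, but this is a heuristic and does not match a rigorous step of the argument; nor does it explain where the $\Pi_{\vec0}^\perp$ projection (central to the Kubo formula) goes as $g\to0$. The paper converts~\eqref{eq:AL}, via a Tauberian argument, into boundedness of $\wt{M}_{j,j}(\eta,0)=\eta^{-1}\langle f_{\vec0}^{(j)},(\eta+\im\wc{K}_{\vec0}+\im\wc{U})^{-1}f_{\vec0}^{(j)}\rangle$ as $\eta\to 0$, which is equivalent to the spectral statement that $f_{\vec0}^{(j)}\in\mathcal{D}\big((\wc{K}_{\vec0}+\wc{U})^{-1}\big)$. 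It is precisely this membership (not an abstract effective gap) that allows the $g^2$ scaling to be factored out of the resolvent; the remaining interchange of $\eta\to0$ and $g\to0$ limits requires the weak-operator-topology limit Lemma~\ref{lem:limres} rather than a time-average argument. So the proposal identifies the right ingredients but, by positing an incorrect white-noise Lindblad closure and an effective generator polynomial in $g$, it skips the augmented-space/block-decomposition machinery that actually makes both conclusions of the theorem rigorous.
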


Theorems \ref{thm:CLT} and \ref{thm:smallg} are special cases of a general result stated below in \S\ref{sec:results}.  The rest of the paper is organized as follows:
\begin{enumerate}
\item In \S\ref{sec:results} a more general class of operators is introduced and the main result Thm.\ \ref{thm:genCLT}, which generalizes of Theorems \ref{thm:CLT} and \ref{thm:smallg}, is formulated.
\item In \S\ref{sec:augmented} the basic analytic tools of ``augmented space analysis'' are developed.
\item \S\ref{sec:proof} is  devoted to a proof of the main result.
\item Certain technical results used below are collected in four appendices.
\end{enumerate}
Before turning to the general framework, let us close this introduction by considering the term diffusion and a conjecture for the evolution eq.\ \eqref{eq:SE} that is closely related to, but does not follow from, the work presented here.

\emph{Diffusion} for the Schrödinger evolution \eqref{eq:SE} refers to the emergence of  an effective parabolic equation for the evolution of $|\psi_t(\vec{x})|^2$ over long space and time scales. We may interpret Theorem \ref{thm:CLT} as  \emph{diffusion in the mean} as follows.  Consider the family $t \mapsto \mu_t$  of (random) Borel measures defined by
$$ \mu_t (\di \vec{r}) \ = \ \sum_{x\in \Z^d} \abs{\psi_{t}(x)}^2\delta(\vec{r} - x)\di \vec{r}.$$
So  $\mu_t$ is the distribution of the position of a quantum particle with wave function $\psi_t(x)$. The measure $\mu_t$, on its own, does not solve an initial value problem; to find $\mu_t$ we should first solve the Schrödinger equation \eqref{eq:SE} and then use this to find the measure.  However, it follows from Theorem \ref{thm:CLT} that $\E  ( \mu_{Tt}(\sqrt{T} \cdot )  )$ converges in the weak$^*$ sense as $T \rightarrow \infty$ to $\mu(\di \vec{r})  =  u(\vec{r},t) \di \vec{r}$ where
$$ u(\vec{r},t) \ = \ \left ( \frac{d}{2\pi  D(g,\lambda)}\right )^{\frac{d}{2}} \e^{-\frac{d}{2 D(g,\lambda) t}\abs{\vec{r}}^2}. $$
Furthermore the rate of convergence is uniform for $t$ restricted to bounded subsets of $[0,\infty)$.  Since $u$ solves the initial value problem $u(\vec{r},0) = \delta(\vec{r})$ for the diffusion equation
$$ \partial_t u(\vec{r},t) \ = \ \frac{D(g,\lambda)}{2d}  \nabla^2 u(\vec{r},t),$$
we are justified in saying that $\E (\mu_t(\cdot))$ is effectively described by a diffusion over long space and time scales. 

Note  that it remains open whether the measures $\mu_t$ themselves converge weakly (without averaging), either almost surely or in law.  However, it is natural to expect that the fluctuating dynamics produces a self averaging effect, leading to the following
\begin{conj}\label{conj}
If $g>0$ and $\psi_0$ is normalized in $\ell^2(\Z^d)$, then with probability one $\psi_t(x)$ evolves diffusively, which is to say that $\mu_{Tt}(\sqrt{T} \di \vec{r})$ converges weakly to $u(\vec{r},t) \di \vec{r}$, uniformly for $t$ restricted to bounded subsets $[0,\infty)$.  Equivalently, with probability one we have
$$\lim_{t\rightarrow \infty} \sum_{x\in \Z^d} f\left (\frac{x}{\sqrt{Dt}}  \right ) \abs{\psi_t(x)}^2  \ = \ \int_{\R^d} f(\vec{r}) \left ( \frac{d}{ 2\pi   }\right )^{\frac{d}{2}}  \e^{-\frac{d}{2 }\abs{\vec{r}}^2} \di \vec{r}$$
for all bounded continuous $f:\mathbb{R}^d \rightarrow \mathbb{R}$.
\end{conj}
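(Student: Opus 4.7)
The natural plan is to upgrade the in-mean convergence of Theorem~\ref{thm:CLT} to almost-sure convergence by means of a variance bound. For bounded continuous $f$ set
$$S_T(f) \ := \ \sum_{x\in\Z^d} f\!\left(\frac{x}{\sqrt{DT}}\right)\abs{\psi_T(x)}^2 .$$
Theorem~\ref{thm:CLT} already yields convergence of $\Ev{S_T(f)}$ to the Gaussian integral on the right hand side of \eqref{eq:CLT}. The strategy is to show $\Var(S_T(f))\to 0$ summably along a geometric subsequence $T_n=a^n$, for $f$ ranging over a countable convergence-determining class (e.g., real and imaginary parts of Fourier characters with rational frequencies). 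Chebyshev and Borel--Cantelli then give almost-sure convergence along $T_n$ for each such $f$ simultaneously, after which tightness of the rescaled measures $\mu_{Tt}(\sqrt T\,\cdot)$ together with the diffusive moment bounds from \S\ref{sec:proof} upgrade this to almost-sure weak convergence, uniformly for $t$ in bounded subsets of $[0,\infty)$.

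The variance expands as
$$\Var(S_T(f)) \ = \ \sum_{x,y} f\!\left(\frac{x}{\sqrt{DT}}\right) f\!\left(\frac{y}{\sqrt{DT}}\right) \Bigl[\, \Ev{\abs{\psi_T(x)}^2\abs{\psi_T(y)}^2} - \Ev{\abs{\psi_T(x)}^2}\,\Ev{\abs{\psi_T(y)}^2} \,\Bigr],$$
so the central object is the four-point correlator $\Ev{\psi_T(x)\overline{\psi_T(x)}\psi_T(y)\overline{\psi_T(y)}}$. The augmented-space formalism developed in \S\ref{sec:augmented} expresses the density $\Ev{\abs{\psi_T(x)}^2}$ through a two-copy ($\psi\otimes\bar\psi$) evolution averaged over the static and dynamic disorder. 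In the same spirit, the four-point object arises from a four-copy evolution $\psi\otimes\bar\psi\otimes\psi\otimes\bar\psi$ driven by a \emph{single} realization of both environments, whose disorder average produces a deterministic semigroup $\e^{T\wc L_{(4)}}$ on an augmented space over $(\Z^d)^4$; the correlator is its matrix element between diagonal initial and final states. The goal is to prove that, after integration against $f\otimes f$ on the diffusive scale, this action asymptotically factorizes into two independent copies of the two-copy semigroup of \S\ref{sec:augmented}.

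The principal obstacle is a quantitative decay-of-correlations estimate for $\wc L_{(4)}$, uniform in the static disorder strength $\lambda$. Heuristically, the spatial independence of the Brownian motions $\{\alpha_x(\cdot)\}_{x\in\Z^d}$ makes $\wc L_{(4)}$ act nearly as a tensor square of the two-copy generator off a diagonal ``collision'' region where the two pairs $(x^+,x^-)$ and $(y^+,y^-)$ come near each other. Two approaches suggest themselves: (i) prove a spectral-gap or Poincaré inequality for $\wc L_{(4)}$ on the orthogonal complement of its invariant sector, extending the resolvent analysis of \S\ref{sec:proof}; or (ii) couple the four-copy dynamics to two independent copies of the two-copy dynamics and bound the expected cumulative collision time, which by the diffusive scaling of Theorem~\ref{thm:CLT} should be $o(T)$ at least for $d\ge 3$. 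The delicate point in both approaches is that the static hopping contribution to $\wc L_{(4)}$ introduces off-diagonal couplings that do not respect the pair factorization and become increasingly singular as $\lambda\to\infty$; controlling them uniformly seems to require finer multiscale information about the localized eigenfunctions of $H_\omega$ than what enters the proof of Theorem~\ref{thm:CLT}, which is the principal reason the statement remains a conjecture.
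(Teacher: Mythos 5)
This statement is explicitly posed in the paper as a \emph{conjecture}: the author remarks that it ``does not follow directly from the results presented below which use averaging in an essential way,'' and provides no proof. So there is no argument of the author's to compare against; the right question is whether your sketch is a sound attack on the problem and whether you have correctly located the obstruction. On both counts you are on the right track. The second-moment/Borel--Cantelli scheme along a geometric subsequence, run over a countable convergence-determining class and upgraded through tightness, is the canonical way to pass from an in-mean CLT to an almost-sure one; and you are right that the heart of the matter is a decay estimate for $\Var(S_T(f))$, hence for the four-point correlator $\Ev{\abs{\psi_T(x)}^2\abs{\psi_T(y)}^2}$, and that its natural home is a four-copy (replica) analogue of the Pillet augmented-space semigroup of \S\ref{sec:augmented}.

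That said, what you have written is not a proof: the essential input --- a spectral-gap or decay-of-correlations bound for the four-copy generator $\wc{L}_{(4)}$ showing asymptotic factorization into two copies of the two-copy generator $\wc{L}$ on the diffusive scale --- is declared as a goal, not established, and you say so. To make the obstruction sharper than you do: in the two-copy setting the entire proof of Lemma~\ref{lem:heartofthematter} turns on the dissipation bound eq.~\eqref{eq:dissipation!}, which in turn rests on the non-degeneracy eq.~\eqref{eq:vnondeg} of the potential difference $v(\sigma_x a,\sigma_x\omega)-v(a,\omega)$ \emph{off the diagonal} $x\neq 0$ (this is why $\wc{H}_2$ is built from $L^2(\Omega\times\Z^d\setminus\{0\})$). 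In the four-copy problem the analogous non-degeneracy fails on a much larger ``collision'' set where the two replica pairs meet or interleave, and it is not clear the dissipation furnished elsewhere dominates the contribution of that set; this is exactly where both of your proposed routes (i) and (ii) stall, and where the additional structure of $H_\omega$ for large $\lambda$ would have to enter. So your proposal is a faithful and honest account of why the statement is open rather than a proof, which for a statement labeled Conjecture~\ref{conj} in the paper is the correct conclusion to reach.
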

\noindent Although Conjecture \ref{conj} is plausible, it does not follow directly from the results presented below which use averaging in an essential way.

\section{General results}\label{sec:results}
Diffusive scaling and a central limit theorem generalizing Thm.\ \ref{thm:CLT} my be proved for a more general class of equations in which hopping terms other than nearest neighbor are allowed in the random operator $H_\omega$ and the perturbing potential $V$ is not stochastically independent of $H_\omega$.  Specifically, we shall consider
\begin{equation}\label{eq:genSE}
\partial_t\psi_t(x) \ = \ -\im H_0 \psi_t(y) - \im U_\omega(x)\psi_t(x) - \im g V_{\alpha(t);\omega}(x) \psi_t
\end{equation}
where
\begin{enumerate} 
\item $t \mapsto \alpha(t)$ is an exponentially mixing Markov process taking values in a probability space $(\Alpha,\mu_\Alpha)$, with unique invariant measure $\mu_{\Alpha}$;
\item $H_0\psi(x) = \sum_{y\neq x} h(x-y) \psi(y)$ where the hopping $h$ is a non-degenerate function on $\Z^d$ satisfying $\sum_{x} |x| |h(x)| < \infty$;
\item $\omega \mapsto U_\omega(x)$ and $(a,\omega) \mapsto V_{a;\omega}(x)$ are stationary random potentials;
\item $V_{a;\omega}$ has non-trivial fluctuations when conditioned on $\omega$. 
\end{enumerate}
These assumptions will be made precise below. We denote the sum $H_0 + U_\omega$ by $H_\omega$.

The key requirements, as far as the proof of a central limit theorem is concerned, are the non-degeneracy of the hopping, which assures that a solution to eq.\ \eqref{eq:genSE} cannot remain localized on a lower dimensional sub-lattice of $\Z^d$, and the exponential mixing of $t \mapsto \alpha(t)$. Here \emph{exponentially mixing} indicates that there is $\tau >0$ such that 
\begin{multline}\label{eq:expmix} \abs{\int_\Alpha \Evc{f(\alpha(t))}{\alpha(0)=a} g(a) \mu_\Alpha(\di a) - \int_\Alpha f(a) \mu_\Alpha(\di a) \int g(a) \mu_\Alpha(\di a)} \\
\le  \ \norm{f}_{L^2(\mu_\Alpha)} \norm{g}_{L^2(\mu_\Alpha)} \e^{-\nicefrac{t}{\tau}}.
\end{multline}
We will refer to $\alpha(t)$ as the \emph{dynamic variable} and $\omega$ as the \emph{static variable}.

\subsection{Asumptions}\label{sec:ass}  
\subsubsection{Probability spaces}
We will work with two probability spaces: $(\Alpha,\mu_\Alpha)$ from which the dynamic variable is sampled and $(\Omega,\mu_\Omega)$ from which  the static variable is sampled.  To work in the framework of ``ergodic operators,'' we require each space to be endowed with a measure preserving group of translations.

\begin{ass}[Stationary probability spaces]\label{ass:spaces} The spaces $\Alpha$ and $\Omega$  are probability spaces with given probability measures $\mu_A$ and $\mu_\Omega$, respectively. Furthermore on each space $S= \Alpha$ or $\Omega$, there are $\mu_S$-measure preserving maps $\sigma_{S;x}:S \rightarrow S$, $x\in \Z^d$, such that $\sigma_{S;0}$ is the identity map and $\sigma_{S;x} \circ \sigma_{S;y} = \sigma_{S;x+y}$ for each $x,y\in \Z^d$.
\end{ass}
\noindent We will generally use $\sigma_x$ to denote either $\sigma_{\Omega;x}$ or $\sigma_{A;x}$, allowing context to make clear the space on which the map acts.

For example either space $S=\Alpha$ or $\Omega$ could be a product space $S=\R^{\Z^d}$,  with the shifts given by $\sigma_x s(y) = s(y-x)$ for $s\in S$ and $\mu_S$ the product measure $\mu_S = \bigtimes_x \nu$ where $\nu$ is a given probability measure on the real line.    Although we do not require either space to be of this form, we do require a technical condition on the measure preserving translations of $\Omega$ that holds in this case.
\begin{ass}[Equivalence of twisted shifts on $L^2(\Omega)$] \label{ass:sigmax}
For each $x\in \Z^d$, let $S_x :L^2(\Omega)\rightarrow L^2(\Omega)$ be the unitary map $ S_xf(\omega)  :=  f(\sigma_x \omega).$ 
We assume there is strongly continuous $d$-parameter unitary group $\vec{k}\in \R^d \mapsto U_{\vec{k}}$ on $L^2(\Omega)$ such that 
\begin{enumerate}
\item $U_{\vec{k}}\bb{1} = \bb{1}$, where $\bb{1}(\omega)=1$ for all $\omega$; and
\item $ U_{\vec{k}} S_xf \ = \ \e^{\im \vec{k} \cdot x} S_x U_{\vec{k}}f$ if $\int_\Omega f \di \mu_\Omega =0$.
\end{enumerate}
\end{ass}
\begin{rems*}\begin{inenum} \item In appendix \ref{app:unitarygroup}, Assumption \ref{ass:sigmax} is shown to hold in case $\Omega =\R^{\Z^d}$ with $\mu_\Omega =\bigtimes_x \nu$ a product measure. \item For each $\vec{k}\in \R^d$, the ``twisted shifts'' $\{ \e^{\im \vec{k} \cdot x} S_x \}_{x \in \Z^d}$ are a unitary representation of $\Z^d$ on $L^2(\Omega)$.  Representations with distinct $\vec{k}$ are not unitarily equivalent, since $\e^{\im \vec{k}\cdot x} S_x \bb{1} = \e^{\im \vec{k}\cdot x}\bb{1}$, where $\bb{1}(\omega)= 1$ for all $\omega$.  However, the assumption requires them to be unitarily equivalent when restricted to 
\end{inenum}
$$ L^2_0(\Omega) \ := \ \{\bb{1}\}^\perp \ = \ \setb{f\in L^2(\Omega)}{\int_\Omega f \di \mu_\Omega =0}.  $$   
\end{rems*}

The dynamic variable $a\in \Alpha$ evolves stochastically in time by a shift invariant, stationary Markovian dynamics.
\begin{ass}[Markov dynamics]\label{ass:dynamics}  The space $\Alpha$ is a compact Hausdorff space, $\mu_{\Alpha}$ is a Borel measure and for each $a \in \Alpha$ there is a probability measure  $\mathbb{P}_{a }$  on the $\Sigma$-algebra generated by Borel-cylinder subsets of the path space $\mc{P}(\Alpha)=\Alpha^{[0,\infty)}$.  Furthermore the collection of these measures has the following properties.
\begin{enumerate}
\item \emph{Right continuity of paths}: For each $a \in \Alpha$, with $\mathbb{P}_{a }$ probability one, every path $t\mapsto \alpha(t)$ is right continuous and has initial value $\alpha(0)=a$.
\item \emph{Shift invariance in distribution}: For each $a \in \Alpha$ and $x\in \Z^d$, $\bb{P}_{\sigma_x a}  =  \bb{P}_{a } \circ \mc{S}_x^{-1}$,
where $\mc{S}_x(\{\alpha _t\}_{t\ge 0}) = \{ \sigma_x \alpha(t)\}_{t\ge 0}$ is the shift $\sigma_x$ lifted to path space $\mc{P}(\Alpha)$.
\item \emph{Stationary strong Markov property}:  There is a filtration 
$\{ \mc{F}_t \}_{t\ge 0}$ on the Borel $\sigma$-algebra of $\mc{P}(A)$ such that $\alpha(t)$ is $\mc{F}_t$ measurable and    
$$ \bb{P}_{a}\left ( \left \{ \alpha(t+s) \right \}_{t\ge 0} \in \mc{E} \middle | \mc{F}_s \right )  =    \bb{P}_{\alpha(s)}(\mc{E})  $$
 for any measurable $\mc{E} \subset \mc{P}(\Alpha)$ and any $s>0$.
\item \emph{Invariance of $\mu_\Alpha$}: For any Borel measurable $E\subset \Alpha$ and each $t>0$,
$$\int_\Alpha \bb{P}_{a} (\alpha(t)\in E) \di \mu_\Alpha(a) \ = \ \mu_\Alpha(E).$$
\end{enumerate}
\end{ass} 
Invariance of $\mu_\Alpha$ under the dynamics is equivalent to the identity  $$\Eva{f(\alpha(t))}  \ = \ \Eva{f(\alpha(0))} \quad \text{for $f\in L^1(\Alpha)$},$$ 
where $\bb{E}_\Alpha(\cdot)$ denotes the joint average $\bb{E}_{\Alpha}(\cdot) =  \int_\Alpha \bb{E}_{a}(\cdot) \di \mu_\Alpha(a).$ 

\subsubsection{Markov Generator}
An important tool for studying Markov processes is conditioning on the value of a process at a given time.  In appendix \ref{sec:conditioning}, the proper definition of the conditional expectation $\Evac{\cdot}{\alpha(t)=a}$ is reviewed.   In particular, conditioning on the value of the processes at $t=0$ determines the initial value:
$$\bb{E}_\Alpha \left (\cdot \middle | \alpha(0)=a \right )  \ = \  \bb{E}_{a}(\cdot).$$
To the process $\{ \alpha(t)\}_{t\ge 0}$, there is associated a  Markov semigroup, obtained by averaging over the initial value conditioned on the value of the process at later times: 
$$ T_{t}f(a) \ := \ \bb{E}_\Alpha \left (  f(\alpha(0))\middle | \alpha(t)=a \right ). $$
As is well known, $T_{t}$ is a strongly continuous contraction semi-group on $L^p(\Alpha)$ for $1\le p < \infty$.\footnote{The semi-group property follows from the Markov property, while strong continuity follows from the right continuity of paths. The adjoint of $T_t$ is the backward semigroup $ T_{t}^\dagger F(a) \ := \ \Evac{ F(\alpha(t))}{\alpha(0)=a} .$
}

The semigroup $T_{t}$ has a generator
\begin{equation}\label{eq:Bdefn}
B f \ := \ \lim_{t \downarrow 0} \frac{1}{t} \left (f - T_{t} f \right ),
\end{equation}
defined on the domain $\mathcal{D}(B)$ where the right hand side exists in the $L^2$-norm. By the Lumer-Phillips theorem, $B$ is a \emph{maximally accretive operator}.\footnote{We use the term \emph{generator} to indicate that formally $T_{t} = \e^{-t B}$. Note the negative sign in the exponent. A closed densely defined operator $B$ on a Hilbert space is \emph{accretive} if $\Re \ipc{f}{Bf} \ge 0$ for all $f\in \mathcal{D}(B)$.  It is \emph{maximally accretive} if it is accretive and has no proper closed accretive extension; equivalently both $B$ and $B^\dagger$ are accretive. See \cite[\S V.3.10]{Kato1995} and \cite{Phillips1959}.} For technical reasons, related to controlling perturbations of the semigroup, we assume that $B$ is \emph{sectorial}.
\begin{ass}[\emph{Sectoriality of $B$}]\label{ass:sec}
There are $b,q \ge 0$  such that
\begin{equation}\label{eq:sector}
\abs{\Im \ipc{f}{Bf}} \ \le \ q \Re \ipc{f}{B f} + b \norm{f}^2 
\end{equation}
for all $f\in \mathcal{D}(B)$.  Here $\ipc{f}{g} = \int_{\Alpha} \overline{f} g \di \mu_\Alpha$ denotes the inner product on $L^2(\Alpha)$.
\end{ass}

The \emph{resolvent} of the semigroup $\e^{-t B}$ is the operator valued analytic function
$$ R(z) \ := \ (B - z)^{-1} \ = \ \int_0^\infty \e^{tz} \e^{-tB} \di t,$$
which is defined and satisfies $\norm{R(z)} \le \nicefrac{1}{\abs{\Re z}}$  when $\Re z <0$. Sectoriality is equivalent to the existence of a analytic continuation of $R(z)$ to $z\in\bb{C} \setminus K_{b,q}$ with the bound
\begin{equation}\label{eq:sectorialitybound}
 \norm{R(z)}  \ \le \ \frac{1}{\dist(z,K_{b,q})}
\end{equation}
where $K_{b,q}$ is the sector $\set{\Re z \ge   0} \cap \set{\abs{\Im z } \le b + q \abs{\Re z}}$ (see \cite[Theorem V.3.2]{Kato1995}). In particular Assumption \ref{ass:sec} holds (with $b=0$ and $q=0$) if the Markov dynamics is reversible, in which case $B$ is self-adjoint.  A key consequence of eq.\ \eqref{eq:sectorialitybound} is that the semigroup may be recovered from the absolutely convergent contour integral
\begin{equation}\label{eq:contour}
\e^{-t B} \ = \ \frac{1}{2\pi \im} \int_{\Gamma} \e^{-t z} R(z) \di z
\end{equation}
where $\Gamma$ is any contour for which $\dist(z,K_{b,q})$ is uniformly bounded below and $\Re z \rightarrow \infty$ at both ends. 

The exponential mixing condition eq.\ \eqref{eq:expmix} for $\{\alpha(t)\}_{t\ge 0}$ is conveniently expressed in terms of the following condition on the generator $B$.
\begin{ass}[\emph{Gap Condition for $B$}]\label{ass:gap}  There is $\tau   >0$ such that 
\begin{equation}\label{eq:gap} \Re \ipc{f}{B f} \ \ge \  \frac{1}{\tau}  \norm{f-\int_{\Alpha} f \di \mu_\Alpha}_{L^2(\Alpha)}^2
\end{equation}
for all $f\in \mathcal{D}(B)$. 
\end{ass}

The invariance of $\mu_\Alpha$ under the process $\{\alpha(t)\}_{t\ge 0}$ implies that $T_{t} \bb{1} = T_{t}^\dagger \bb{1} =\bb{1}$, where $\bb{1}(a)=1$ for all $a \in \Alpha$. It follows that 
$$L^2_0(\Alpha) \ := \ \setb{f\in L^2(\omega)}{\int_\Alpha f(a) \mu_\Alpha (\di a) \ = \ 0}$$
is invariant under the semi-group $T_t$ and its adjoint $T_t^\dagger.$ Assumption \ref{ass:gap} implies that the restriction of $B$ to $L^2_0(\Omega)$ is strictly accretive, and thus that
$$\norm{\left . T_{t} \right |_{L^2_0} } \ \le \ \e^{-\nicefrac{t}{\tau}}.$$
The exponential mixing condition eq.\ \eqref{eq:expmix} follows.

In what follows it will be convenient to consider the process$\{ \alpha(t)\}_{t\ge0}$ and the static variable $\omega$ together on the same space $\Alpha \times \Omega$.  Let $\mu$ denote the product measure
$$\mu(\di a,\di \omega )  \ = \ \mu_{\Alpha}(\di a) \mu_{\Omega}(\di \omega)$$
and let $\bb{E}$ denote the joint average with respect to $\bb{E}_\Alpha$ and $\mu_\Omega (\di \omega)$:
$$\bb{E} (\cdot) \ := \ \int_\Omega  \bb{E}_\Alpha(\cdot)\mu_\Omega(\di \omega). $$ 
Thus
\begin{multline*}\Ev{f(\alpha(t),\omega)} \ = \ \int_{\Alpha\times \Omega} f(a,\omega)\mu(\di a,\di \omega) , \quad  \Ev{f(\alpha(t))} \ = \ \int_{\Alpha} f(a) \mu_{\Alpha}(\di a) \\
\text{and} \quad \Ev{f(\omega)} \ = \ \int_\Omega f(\omega) \mu_\Omega(\di \omega).
\end{multline*}
We will consider $L^p(\Alpha)$ and $L^p(\Omega)$ to be subspaces of $L^p(\Alpha\times \Omega)$, identifying  $f\in L^p(\Alpha)$ with $(a,\omega) \mapsto f(a)$ and similarly for $g \in L^p(\Omega)$.

We extend the definition of $T_t$ to functions on $\Alpha\times \Omega$: $$ T_tf(a,\omega) \ := \ \bb{E}_\Alpha\left (f(\alpha(0),\omega) \middle | \alpha(t)=a \right ),  $$
with a generator $B$ defined on a dense subset $\mc{D}(B)$ of $L^2(\Alpha \times \Omega)$. Note that $T_{t}$  is linear with respect to functions of $\omega$:
$$ \left[ T_{t}(gF) \right ](a,\omega) \ = \ g(\omega) F(a,\omega) $$
where $F\in L^2(\Alpha,\Omega)$ and $g\in L^\infty(\Omega)$, say.
In particular $L^2(\Omega) \subset  \mathcal{D}(B)\cap \mathcal{D}(B^\dagger)$,
$$ B f(\omega) = B^\dagger f(\omega)  = 0, \quad f\in L^2(\Omega),$$
and 
$$\ipc{F}{BF}_{L^2(\Alpha \times \Omega)} \ \ge \ \frac{1}{\tau} \norm{F}_{L^2(\Alpha \times \Omega)}^2, \quad F\in L^2(\Omega)^\perp$$
where $L^2(\Omega)^\perp$ is the orthogonal complement of $L^2(\Omega)$ in $L^2(\Alpha\times \Omega)$.  In particular, $B$ is invertible on $L^2(\Omega)^\perp$.  We will use $B^{-1}$ to denote the inverse of $B \mid_{L^2_0(\Omega)^\perp}$.  

\subsubsection{The operators $H_0$, $U_\omega$ and $V_{a,\omega}$}
\begin{ass}\label{ass:H0} The operator $H_0$ appearing in eq.\ \eqref{eq:genSE} is given by
$$H_0 \psi(x) \ = \ \sum_{y\neq x} h(x-y) \psi(y),$$
where the hopping kernel $h:\Z^d \setminus \{0\} \rightarrow \bb{C}$ is 
\begin{enumerate} 
\item Self adjoint,
$$h( -\zeta) \ = \ \overline{h(\zeta)} \quad
\text{for all $\zeta\in \Z^d\setminus \{0\}$;} $$
\item Short range,
$$\sum_{\zeta \in \Z^d\setminus \{0\} } \abs{\zeta} \abs{h(\zeta)} \ < \ \infty; \quad \text{and} $$
\item Non-degenerate, 
$$\sum_{\zeta\in \Z^d\setminus\{0\} } \abs{\vec{k}\cdot \zeta }^2 \abs{h(\zeta)}^2 \ > \ 0, \quad  \text{for each $\vec{k}\in \R^d\setminus \{\vec{0}\}$.} $$
\end{enumerate}
\end{ass}
\noindent It follows from the short range bound on the hopping that
\begin{equation}\norm{H_0}_{\ell^2(\Z^d) \rightarrow \ell^2(\Z^d)} \ \le \ \sum_{\zeta \neq 0} \abs{h(\zeta)} \ \le \ \sum_{\zeta \neq 0} |\zeta| |h(\zeta)| \   < \ \infty.\label{eq:H0norm}
\end{equation}

\begin{ass} \label{ass:V}The potentials $U_\omega(x)$ and $V_{a,\omega}(x)$ appearing in the Schrödinger equation \eqref{eq:genSE} are given by 
$$U_\omega(x) \ = \ u(\sigma_x\omega) \quad \text{and} \quad V_{a;\omega}(x) \ = \  v(\sigma_xa,\sigma_x\omega),$$
where $u\in L^\infty(\Omega)$ and $v\in L^\infty(\Alpha \times \Omega)$. Furthermore,  $\int_{\Alpha} v(a,\omega) \mu_\Alpha (\di a) = 0$ for $\mu_\Omega$ almost every $\omega$ and $v$ is non-degenerate in the sense that there is $\chi >0$  such that
\begin{equation}\label{eq:vnondeg}
 \int_{\Alpha} \abs{B^{-1}v(\sigma_x a,\sigma_x\omega) - {B}^{-1} v(a,\omega)}^2 \mu_\Alpha(\di a)  \ \ge \ \chi \ ,
\end{equation}
for all $x\neq 0$ and $\mu_\Omega$ almost every $\omega$.  \end{ass}
\begin{rems*} \begin{inenum} \item Since $\int_\Alpha v(a,\omega)\mu_{\Alpha}(\di a) =0$ for almost every $\omega$, it follows from the gap condition (Assumption \ref{ass:gap}) that $v$ is in the domain of 
 $B^{-1}$. \item There is no loss in assuming $\int_\Alpha v(a,\omega)\mu_{\Alpha}(\di a) =0$, since this can be achieved by adding   $\int_{\Alpha} v(a,\omega) \mu_{\Alpha}(\di a)$ to  $u(\omega)$ and subtracting it from $v(a,\omega)$. \item The non-degeneracy condition \eqref{eq:vnondeg} guarantees, in particular, that $v$ is non-zero. By scaling we assume, without loss of generality, that $g>0$ and $  \norm{v}_{L^\infty(\Alpha \times \Omega)}  =   1.$  
 \end{inenum}
\end{rems*}

The non-degeneracy condition eq.\ \eqref{eq:vnondeg} is equivalent to the inequality
\begin{equation*} 2 \Re \ipc{B^{-1}v_x(\cdot,\omega)}{B^{-1}v(\cdot,\omega)}_{L^2(\Alpha)} \ \le \ \norm{ B^{-1}v_x(\cdot,\omega)}_{L^2(\Alpha)}^2 + \norm{B^{-1} v(\cdot,\omega)}_{L^2(\Alpha)}^2 - \chi,
\end{equation*}
where $v_x(a,\omega)=v(\sigma_x a,\sigma_x\omega)$.
If $v$ does not depend on $\omega$ (as in the example in the introduction), then this is equivalent to 
$$ \Re \ipc{B^{-1} v_x} {B^{-1} v}_{L^2(A)}  \ \le \ 
\norm{ B^{-1} v}_{L^2(A)}^2 - \frac{\chi}{2} ,$$
since $\|B^{-1} v_x\| = \|B^{-1} v\|$ by translation invariance. 
Hence, non-degeneracy amounts essentially to requiring that $B^{-1} v_x$ are uniformly non parallel to $B^{-1} v$ for $x \neq 0$,  at least for $v$ that depends only on $a$.  In particular, the condition is trivially satisfied if the inner product vanishes for all $x\neq 0$, which happens for example if the processes $v_x(\alpha(t))$ and $v(\alpha(t))$ are independent for $x \neq 0$, as in the introduction.

\subsection{Theorems}
The main result is the following 
\begin{thm}[Central Limit Theorem]\label{thm:genCLT} If $g>0$ then there is a positive definite $d\times d$ matrix $\vec{D}=\vec{D}(g)$ such that for any bounded continuous function $f:\R^d \rightarrow \R$ and any normalized $\psi_0\in \ell^2(\Z^d)$ we have
\begin{equation}\label{eq:genCLT}
\lim_{t\rightarrow \infty} \sum_{x\in \Z^d} f\left ( \frac{x}{\sqrt{t}}  \right ) \Ev{\abs{\psi_t(x)}^2} \ = \ \int_{\R^d} f(\vec{r}) \left ( \frac{1}{ 2\pi   }\right )^{\frac{d}{2}} \e^{-\frac{1}{2 }\ipc{\vec{r}}{\vec{D}^{-1} \vec{r}}} \di \vec{r} .
\end{equation}
If furthermore $\sum_{x} (1+|x|^2) \abs{\psi_0(x)}^2 < \infty$, then eq.\ \eqref{eq:genCLT} extends to quadratically bounded continuous $f$ with $\sup_x (1+|x|^2) \abs{f(x)} < \infty$. In particular, diffusive scaling  eq.\ \eqref{eq:DC} holds with the diffusion constant 
$$ D(g) \ := \ \lim_{t\rightarrow \infty} \sum_{x\in \Z^d} \abs{x}^2 \Ev{\abs{\psi_t(x)}^2} \ = \ \tr \vec{D}(g).$$
Furthermore, the diffusion matrix $\vec{D}(g)$ is a real analytic function of $g$ on $\{g>0\}$ and  if Anderson localization eq.\ \eqref{eq:AL} holds for the evolution generated by $H_\omega = H_0 + U_\omega$,  then 
$$\vec{D}(g) \ = \ g^2 \vec{F} + o(g^2) \quad \text{as } g \rightarrow 0.$$
with $\vec{F}$ a positive definite $d\times d$ matrix. \end{thm}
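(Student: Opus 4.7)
The plan is to recast the $\alpha$-averaged dynamics as a linear semigroup on an augmented space, use the combined lattice and twisted-shift translations (Assumption~\ref{ass:sigmax}) to diagonalize into a fibered family of generators $\wh{\mc{L}}(\vec{k})$, and extract the diffusion matrix from the Hessian at $\vec{k}=0$ of a simple leading eigenvalue. First, I would form the random two-point function
\begin{equation*}
R_t(x,y;\omega) \;:=\; \Ev{\overline{\psi_t(x)}\psi_t(y)}.
\end{equation*}
Since the noise enters only through the Markov variable $\alpha(t)$ and $\{\psi_t\}$ is adapted, $R_t$ satisfies a closed linear equation $\partial_t R_t = -\mc{L}_\omega R_t$, where $\mc{L}_\omega$ acts on $\ell^2(\Z^d\times\Z^d)\otimes L^2(\Alpha)$ by a sum of the commutator with $H_\omega$, the Markov generator $B$, and a dissipative term quadratic in $gV$. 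Conservation of $\ell^2$-norm supplies a zero mode on the ``diagonal'' vector $\bb{1}_{\mathrm{diag}}\otimes \bb{1}$.

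Next, using the unitary representation of $\Z^d$ by $\vec{k}$-twisted shifts provided by Assumption~\ref{ass:sigmax} together with the lattice shift on the center-of-mass coordinate $z=(x+y)/2$, one obtains a direct integral decomposition producing a fibered family $\vec{k}\mapsto \wh{\mc{L}}(\vec{k})$ acting on $\ell^2(\Z^d_\xi)\otimes L^2(\Alpha\times\Omega)$, where $\xi=x-y$. The gap condition (Assumption~\ref{ass:gap}) and the non-degeneracy of $v$ (eq.~\eqref{eq:vnondeg}) supply the dissipation needed to make $0$ an isolated simple eigenvalue of $\wh{\mc{L}}(0)$, and sectoriality (Assumption~\ref{ass:sec}) lets us apply analytic perturbation theory. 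This produces a simple eigenvalue $\lambda(\vec{k})$ of $\wh{\mc{L}}(\vec{k})$ in a neighbourhood of $\vec{k}=0$, real analytic in $(\vec{k},g)$ for $g>0$, with expansion
\begin{equation*}
\lambda(\vec{k}) \;=\; -\im\,\vec{v}\cdot\vec{k}\;+\;\half\,\vec{k}^{T}\vec{D}\vec{k}\;+\;\mc{O}(|\vec{k}|^{3}).
\end{equation*}
A complex-conjugation symmetry of $\wh{\mc{L}}$ forces $\vec{v}=0$; positive definiteness of $\vec{D}$ follows from the non-degeneracy of the hopping $h$ (Assumption~\ref{ass:H0}(iii)) together with eq.~\eqref{eq:vnondeg}, since these prevent $\vec{D}$ from having a null direction in either the ``drift'' or the ``dissipative'' contribution.

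The central limit theorem then follows by Fourier inversion. The Fourier transform of $x\mapsto \Ev{|\psi_t(x)|^2}$ at scaled wavevector $\vec{k}/\sqrt{t}$ equals $\ipc{\Psi_0}{\e^{-t\wh{\mc{L}}(\vec{k}/\sqrt{t})}\Psi_0}$ for a natural initial vector $\Psi_0$ built from $\psi_0$. Spectral projection on the eigenvalue $\lambda(\vec{k}/\sqrt{t})$, together with uniform decay of the complementary spectrum, gives
\begin{equation*}
\lim_{t\to\infty}\e^{-t\lambda(\vec{k}/\sqrt{t})} \;=\; \e^{-\half\,\vec{k}^{T}\vec{D}\vec{k}},
\end{equation*}
which is the characteristic function of the Gaussian on the right-hand side of eq.~\eqref{eq:genCLT}. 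L\'evy continuity upgrades this to convergence in distribution for bounded continuous $f$. The extension to quadratically bounded $f$ under $\sum_x(1+|x|^2)|\psi_0(x)|^2<\infty$ uses two $\vec{k}$-derivatives of the same formula to bound the scaled second moment uniformly in $t$.

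Analyticity of $g\mapsto \vec{D}(g)$ on $\{g>0\}$ is immediate from analytic perturbation theory for the simple isolated eigenvalue $\lambda(\vec{k},g)$. The small-$g$ asymptotics is the most delicate step and the main obstacle: at $g=0$ the unitary flow generated by $H_\omega$ preserves every diagonal $|\psi_t(x)|^2$, so $\wh{\mc{L}}(0;g=0)$ has an infinite-dimensional kernel and the spectral gap collapses as $g\to 0$; the perturbation is therefore singular. I would attack it by a Feshbach--Schur reduction: project onto the almost-invariant diagonal subspace and expand the reduced operator to second order in $g$. This produces an effective generator whose matrix elements involve integrals of the form $\int_0^\infty \Ev{|\ipc{\delta_x}{\e^{-\im t H_\omega}\delta_0}|^2}\,\cdots\,\di t$, and Anderson localization, eq.~\eqref{eq:AL}, is exactly what is needed for their absolute convergence and for $g^{-2}\vec{D}(g)$ to converge to a finite positive-definite $\vec{F}$. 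Controlling the error terms in this singular expansion uniformly for $\vec{k}$ near $0$ is where I expect the bulk of the technical work to lie.
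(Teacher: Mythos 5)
Your overall plan—augment with the Markov variable (this is Pillet's formula, which the paper uses), fiber over the dual torus using the lattice shift combined with the twisted shift of Assumption~\ref{ass:sigmax}, and read a diffusion matrix off the low-$\vec{k}$ behaviour of the fibered generator—is exactly the right framework and matches the paper's \S\ref{sec:augmented}--\ref{sec:proof}. The spirit of the small-$g$ argument (Feshbach--Schur to an ``almost invariant'' subspace, with Anderson localization controlling the time integrals) is also consonant with the paper's final section. But there is a genuine gap at the heart of the argument: you assert that the gap condition and the non-degeneracy of $v$ make $0$ an \emph{isolated} simple eigenvalue of $\wc{L}_{\vec{0}}$, and then run analytic eigenvalue perturbation theory. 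This isolation is never established in the paper, and it is not visible from the assumptions. In the block decomposition eq.~\eqref{eq:block}, the real part of $\wc{L}_{\vec{0}}$ vanishes identically on the infinite-dimensional subspace $\wc{H}_1\oplus\wc{H}_2$—only $B_3$ on $\wc{H}_3$ contributes strict accretivity—so there is no uniform spectral gap separating $0$ from the rest of $\sigma(\wc{L}_{\vec{0}})$. The skew part $\im\wc{K}_{\vec{0};2}+\im\wc{U}_2$ is a bounded \emph{self-adjoint} operator (times $\im$) whose spectrum sits on $\im\R$ and can accumulate at $0$; the dissipative coupling to $\wc{H}_3$ that you correctly identify as arising from eq.~\eqref{eq:vnondeg} appears only through a \emph{$z$-dependent} Schur complement (eq.~\eqref{eq:Gamma} and eq.~\eqref{eq:dissipation!}, which is a bound on $\Re w\ge 0$), and does not translate directly into spectral isolation of the eigenvalue. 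Without that isolation, the Riesz projection onto $\lambda(\vec{k})$ and the ``uniform decay of the complementary spectrum'' you invoke are not available, and the step
$$\e^{-t\wc{L}_{\vec{k}/\sqrt{t}}}\ \approx\ \e^{-t\lambda(\vec{k}/\sqrt{t})}\,\Pi(\vec{k}/\sqrt{t})\ +\ \text{(small)}$$
is unjustified.

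The paper sidesteps this entirely: it never claims isolation. Instead, it represents $\e^{-t\wc{L}_{\vec{k}}}$ by a sectorial contour integral (eq.~\eqref{eq:contourint}), rescales the contour so the spectral parameter is $z/t$ with $\Re z=1$ (eq.~\eqref{eq:truncatedcontour}), and computes the $t\to\infty$ limit of the \emph{resolvent} directly via the Schur complement formula eq.~\eqref{eq:heartcalc}. The uniform accretivity bound eq.~\eqref{eq:dissipation!} and the norm estimate eq.~\eqref{eq:uniformnorm} then control the denominator $1+\im y+\Lambda(t,\vec{k},1+\im y)$, and Lemma~\ref{lem:heartofthematter} delivers the characteristic-function limit without ever locating or isolating an eigenvalue. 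Reality and positive definiteness of $\vec{D}$ come not from a conjugation symmetry, but by identifying $\vec{D}_{i,j}$ with the $t\to\infty$ limit of $t^{-1}M_{i,j}(t)$ via the Tauberian theorem (Feller), applied to the Laplace transform of the second moments. This is the step you would need to supply in place of your eigenvalue argument: either prove the spectral gap for $\wc{L}_{\vec{0}}$ (which appears to be genuinely hard, and may fail), or switch to a resolvent/contour/Tauberian analysis along the paper's lines.
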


Translation symmetry plays an essential role in the analysis below.  Before proceeding, let us consider the consequences of two additional symmetries that are present in the example in the introduction and many other natural models:
$$P_{i,j}\psi(x) \ = \ \psi(\pi_{i,j} x) \quad \text{and}\quad T_i \psi(x) \ = \ \psi(\tau_i x) ,$$
where
\begin{enumerate}
\item $\tau_i$, for $i=1,\ldots,d$, denotes coordinate inversion 
$$(\tau_i x)_j \ = \ \begin{cases}
 x_j & j\neq i,\\
 -x_i & j =0;
 \end{cases}$$
 and
\item $\pi_{i,j}$, $1\le i < j \le d$ denotes coordinate permutation
$$(\pi_{i,j} x)_k \ = \ \begin{cases}
 x_k & k\neq i,j, \\
 x_i & k=j, \\
 x_j & k=i.
 \end{cases}$$
\end{enumerate}
The maps $P_{i,j}$ and $T_i$ are each unitary on $\ell^2(\Z^d)$.

\begin{defn}We say that the random potential $U_\omega(x)$   is \emph{stationary under} 
\begin{enumerate}
\item \emph{inversions} if the random field $\{ U_\omega(\tau_i x)\}_{x\in \Z^d} $ has the same distribution as $\{ U_\omega(x)\}_{x\in \Z^d}$ for each $i=1,\ldots,d$;
\item \emph{coordinate permutations} if $\{ U_\omega(\pi_{i,j} x)\}_{x\in \Z^d}$ has the same distribution as $\{U_\omega(x)\}_{x\in \Z^d}$ for each $1\le i < j \le d$.  \end{enumerate}
Likewise, $V_{\alpha(t),\omega}(x)$ is  \emph{stationary  under} 
\begin{enumerate}
\item \emph{inversions} if the process $\{ V_{\alpha(t),\omega}(\tau_i x)\}_{x\in \Z^d; t\ge 0} $ has the same distribution as $\{ V_{\alpha(t),\omega}( x)\}_{x\in \Z^d;t\ge 0}$ for each $i=1,\ldots,d$;
\item \emph{coordinate permutations} if $\{ V_{\alpha(t),\omega}(\pi_{i,j} x)\}_{x\in \Z^d;t\ge 0}$ has the same distribution as $\{V_{\alpha(t),\omega}(x)\}_{x\in \Z^d;t\ge 0}$ for each $1\le i < j \le d$.  \end{enumerate}
\end{defn}
\begin{thm} \begin{inenum}\item  If $U_\omega$ and $V_{\alpha(t);\omega}$ are stationary under inversions and $h(\tau_i \xi)=h(\xi)$ for all $\xi\in \Z^d\setminus \{0\}$ and $i=1,\ldots,d$, then $\vec{D}$ is diagonal, i.e., $\vec{D}_{i,j}=0$ for $i\neq j$.
\item If $U_\omega$ and $V_{\alpha(t);\omega}$ are stationary under coordinate permutations and  $h(\pi_{i,j} \xi)=h(\xi)$ for all $\xi\in \Z^d\setminus \{0\}$ and $1\le i < j \le d$,  then $\vec{D}$ is permutation invariant, i.e.,
\end{inenum}
$$\vec{D}_{i,i} \ = \ \vec{D}_{1,1}$$
for each $i=1,\ldots,d$ and
$$\vec{D}_{i,j} \ = \ \vec{D}_{1,2}$$
for each $1\le i < j \le d$.
\end{thm}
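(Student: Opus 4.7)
The plan is to convert each prescribed spatial symmetry of the data $(h, U_\omega, V_{\alpha(t);\omega})$ into an invariance of the averaged density $x\mapsto \Ev{\abs{\psi_t(x)}^2}$, and then read off the corresponding invariance of $\vec{D}$ from the second--moment identity
$$\vec{D}_{i,j} \ = \ \lim_{t\to\infty}\frac{1}{t}\sum_{x\in\Z^d} x_i x_j\,\Ev{\abs{\psi_t(x)}^2},$$
obtained by applying the quadratic extension of \eqref{eq:genCLT} to the continuous, quadratically bounded test function $f(\vec r)=r_i r_j$. I fix $\psi_0=\delta_0$, which trivially satisfies the second--moment condition and is invariant under every $T_i$ and $P_{i,j}$; because Theorem \ref{thm:genCLT} asserts that $\vec{D}$ is independent of the normalized initial datum, this choice costs nothing.

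Writing $R$ for either $\tau_k$ (part (a)) or $\pi_{i,j}$ (part (b)), I set $\wt\psi_t(x):=\psi_t(Rx)$. A direct computation starting from \eqref{eq:genSE}, with the change of summation index $y\mapsto Ry$ inside $H_0\psi_t$, shows that $\wt\psi_t$ satisfies the same Schrödinger equation but with hopping kernel $h(R\cdot)$, static potential $U_\omega(R\cdot)$, dynamic potential $V_{\alpha(t);\omega}(R\cdot)$, and initial datum $\wt\psi_0=\delta_0\circ R=\delta_0$. The hypothesis $h(R\xi)=h(\xi)$ leaves the hopping literally unchanged, while the assumed stationarity of $U_\omega$ and $V_{\alpha(t);\omega}$ under $R$ makes the driving random data equidistributed with the original. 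Consequently $\wt\psi_t\stackrel{d}{=}\psi_t$ and, in particular, $\Ev{\abs{\psi_t(Rx)}^2}=\Ev{\abs{\psi_t(x)}^2}$ for every $x\in\Z^d$ and $t\ge 0$. Substituting this identity into the second--moment formula and relabelling the summation variable gives, in the limit, $\vec{D}=R\vec{D}R^{T}$. For $R=\tau_k$ this forces $\vec{D}_{k,\ell}=-\vec{D}_{k,\ell}=0$ whenever $\ell\ne k$, yielding part (a); for $R=\pi_{i,j}$ it yields $\vec{D}_{\pi_{i,j}(k),\pi_{i,j}(\ell)}=\vec{D}_{k,\ell}$, and since the transpositions generate the full symmetric group on $\{1,\dots,d\}$, all diagonal entries coincide and all off--diagonal entries coincide, yielding part (b).

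The one point where the argument is not purely algebraic is the upgrade from the stated \emph{marginal} stationarities of $U_\omega$ and $V_{\alpha(t);\omega}$ to the \emph{joint} distributional invariance of the whole driving data under $R$, which is what is really needed for the step $\wt\psi_t\stackrel{d}{=}\psi_t$. In the setting of Assumption \ref{ass:V}, where $U_\omega(x)=u(\sigma_x\omega)$ and $V_{a;\omega}(x)=v(\sigma_x a,\sigma_x\omega)$, this amounts to producing measure--preserving maps $\wh R_\Omega:\Omega\to\Omega$ and $\wh R_\Alpha:\Alpha\to\Alpha$ that intertwine $R$ with the shift groups $\sigma_\cdot$ and preserve $u$, $v$, and the Markov dynamics on $\Alpha$. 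Such maps are essentially forced by the marginal stationarity statements, but verifying that they act compatibly on both base spaces and commute with the generator $B$ is the key distributional lemma underlying the argument; once it is in place, the rest of the proof reduces to the elementary manipulations outlined above.
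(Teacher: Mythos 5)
Your argument is essentially the paper's own: fix $\psi_0=\delta_0$, use the second-moment identity $\vec{D}_{i,j}=\lim_{t\to\infty}t^{-1}\sum_x x_i x_j\,\Ev{|\psi_t(x)|^2}$, observe that under the stated symmetry hypotheses $\psi_t(Rx)\overset{d}{=}\psi_t(x)$ for $R=\tau_i$ or $\pi_{i,j}$, and then relabel the sum to get $\vec{D}=R\vec{D}R^T$. The paper's proof is a compressed version of exactly this, and your unpacking of the intermediate step (showing $\wt\psi_t$ solves the transformed equation with hopping $h(R\cdot)$ and transformed potentials) is correct. Your closing remark about needing \emph{joint} distributional invariance of $(U_\omega,V_{\alpha(t);\omega})$ under $R$, rather than two separate marginal invariances, is a legitimate point of tidiness — the paper's definitions state the two invariances separately and its proof silently uses the joint one — but this does not distinguish your route from the paper's; both rely on the same (reasonable but unstated) strengthening, which is automatic when, as in the introduction's model, $U$ and $V$ depend on independent base spaces.
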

Theorem \ref{thm:CLT} follows from this result and Theorem \ref{thm:genCLT} since the model in the introduction is stationary under both inversions and coordinate permutations.
\begin{proof}
By Thm. \ref{thm:genCLT}, 
$$\vec{D}_{i,j} \ = \ \lim_{t\rightarrow \infty} \sum_{x} x_i x_j \Ev{\abs{\psi_t(x)}^2}$$
for $\psi_0=\delta_0$.  Note that $\psi_0$ is invariant under inversions and permutations. 

If the potentials are distributionally invariant under inversions and the hopping terms are invariant under inversions then $\psi_t(\tau_i x)$ has the same distribution as $\psi_t(x)$.  Thus
$$\sum_{x} x_i x_j \Ev{\abs{\psi_t(x)}^2} \ = \ 
\sum_{x} x_i x_j \Ev{\abs{\psi_t(\tau_i x)}^2}
\ = \ - \sum_{x} x_i x_j \Ev{\abs{\psi_t(x)}^2}$$
if $i\neq j$.  So $\sum_{x} x_i x_j \Ev{\abs{\psi_t(x)}^2}=0$ and hence $\vec{D}_{i,j}=0$.

The argument in case there is invariance under permutations is similar.  We simply note that using permutations we may transform any diagonal matrix element of $\vec{D}$ into any other diagonal element and likewise for off-diagonal elements.
\end{proof}

\section{Augmented space analysis}\label{sec:augmented}
\subsection{The Markov semigroup on  augmented spaces} The term ``augmented spaces'' refers here to certain spaces of functions $F:A \times \Omega \times X \rightarrow \bb{C}$ where $X$ is an auxiliary space \textemdash \ in the examples below $X$ will be $\Z^d$ or $\Z^d \times \Z^d$. The spaces we consider will be of the following form.
\begin{defn}\label{defn:augmented} Let $X$ be a set\footnote{More generally $X$ could be a measure space, provided we replace ``for every'' with ``for almost every'' in (1) and replace (2) by the assumption that $\mc{B}(X)$ is the dual of a Banach space of functions satisfying (1). However there is no need to introduce such complications in the present work since $X$ will always be either $\Z^d$ or $\Z^d \times \Z^d$ (with counting measure).} and let $\mc{B}(X)$ be a Banach space of functions on $X$, with norm $\norm{\cdot}_{\mc{B}(X)}$,  such that
\begin{enumerate}
\item If $g\in \mc{B}(X)$ and $0 \le \abs{f(x)} \le \abs{g(x)}$  for every  $x\in X$, then $f\in \mc{B}(X)$ and $\norm{f}_{\mc{B}(X)}\le \norm{g}_{\mc{B}(X)}$.
\item For every $x\in X$, evaluation $x \mapsto f(x)$ is a continuous linear functional on $\mc{B}(X)$.
\end{enumerate}
For $p\ge 1$, the \emph{augmented space} $\mc{B}^p(\Alpha\times \Omega \times X)$ is the set of maps $F:\Alpha \times \Omega \times X \rightarrow \bb{C}$ such that 
$N_F\in \mc{B}(X)$ where $N_F(x)= \norm{F(\cdot,\cdot,x)}_{L^p(\Alpha\times \Omega)}.$
\end{defn}
\noindent The parameter $p$ is the \emph{exponent} of the augmented space. Each of the spaces used in the analysis below has exponent $1$ or $2$. 

The notation is intended to be used with other symbols in place of $\mc{B}$. For example, $\ell^{q;p}(\Alpha\times\Omega\times X)$ denotes the augmented space with exponent $p$ and $\mc{B}(X)=\ell^q(X)$, i.e.,
\begin{equation}\label{eq:lpq} \ell^{q;p}(\Alpha\times \Omega \times X) \ := \ \setb{F}{ \sum_{x\in X} \abs{ \int_{\Alpha \times \Omega} \abs{F(a,\omega,x)}^p \mu(\di a,\di \omega)}^{\frac{q}{p}} < \infty}.
\end{equation}
When it is clear from context which space is intended, we will write $\mc{B}$ for $\mc{B}(X)$ and $\mc{B}^p$ for $\mc{B}^p(\Alpha \times \Omega \times X)$.

\begin{prop}\label{prop:augmented} With the norm
$$\norm{F}_{\mc{B}^p(\Alpha\times \Omega \times X)} \ := \ \norm{N_F}_{\mc{B}(X)} \ = \ \norm{ \left ( \int_{\Alpha \times \Omega} \abs{F(a,\omega,\cdot )}^p \mu(\di a,\di \omega) \right)^{\frac{1}{p}}}_{\mc{B}(X)},$$ 
$\mc{B}^p(\Alpha\times \Omega \times X)$ is a Banach space. Furthermore 
\begin{equation}\label{eq:Lpupper}\norm{F}_{\mc{B}^p(\Alpha\times \Omega \times X)} \ \le \ \left [ \int_{\Alpha\times \Omega} \norm{F(a,\omega,\cdot)}_{\mc{B}(X)}^p \mu(\di a,\di \omega) \right ]^{\frac{1}{p}}.\end{equation}
\end{prop}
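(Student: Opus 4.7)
The plan is to verify the norm axioms, then completeness, and finally the Minkowski-type bound; all three parts rest on the structural properties of $\mc{B}(X)$. For the norm axioms, homogeneity is immediate from $N_{\lambda F}=|\lambda|N_F$. Definiteness: if $\norm{N_F}_{\mc{B}(X)}=0$ then continuity of evaluation (property (2)) forces $N_F(x)=0$ for every $x$, which in turn forces $F(\cdot,\cdot,x)=0$ in $L^p(\mu)$ for every $x$. The triangle inequality is obtained by first applying the scalar $L^p$-triangle inequality pointwise in $x$ to get $N_{F+G}(x)\le N_F(x)+N_G(x)$, then using property (1) together with the $\mc{B}$-triangle inequality to conclude $\norm{N_{F+G}}_{\mc{B}}\le \norm{N_F}_{\mc{B}}+\norm{N_G}_{\mc{B}}$.

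For completeness, I would follow a standard pointwise-limit-and-upgrade template. Given a Cauchy sequence $\{F_n\}\subset \mc{B}^p$, continuity of evaluation ensures $N_{F_n-F_m}(x)\to 0$ for each fixed $x$, so $\{F_n(\cdot,\cdot,x)\}_n$ is Cauchy in $L^p(\mu)$ and converges to some limit $F(\cdot,\cdot,x)$; continuity of the $L^p$-norm then gives $N_{F_n-F_m}(x)\to N_{F_n-F}(x)$ pointwise in $x$ as $m\to\infty$. The reverse-triangle bound $|N_{F_n-F_m}(x)-N_{F_n-F_{m'}}(x)|\le N_{F_m-F_{m'}}(x)$, combined with property (1), shows that for each fixed $n$ the sequence $\{N_{F_n-F_m}\}_m$ is Cauchy in $\mc{B}(X)$, and continuity of evaluation identifies its $\mc{B}$-limit with $N_{F_n-F}$. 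Hence $N_{F_n-F}\in \mc{B}(X)$, so $F\in \mc{B}^p$ and $\norm{F_n-F}_{\mc{B}^p}=\lim_m\norm{N_{F_n-F_m}}_{\mc{B}}\to 0$.

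The Minkowski-type inequality is the delicate step. For $p=1$ it reduces to the classical integral Minkowski inequality in the Banach function space $\mc{B}(X)$: approximate the integral by averages of simple $\mc{B}$-valued functions, apply the $\mc{B}$-triangle inequality term-by-term, and pass to the limit using property (1). For general $p\ge 1$, my plan is to combine the dual representation $\norm{f}_{L^p(\mu)}=\sup\{\int f h \, \di\mu : \norm{h}_{L^{p'}(\mu)}\le 1\}$ applied pointwise in $x$ with a K\"othe-type dual representation of $\norm{\cdot}_{\mc{B}}$ afforded by property (1), and then interchange the two suprema with the integral over $\Alpha\times\Omega$ via Fubini and H\"older's inequality, using the pointwise bound $\lambda(|F(a,\omega,\cdot)|)\le \norm{F(a,\omega,\cdot)}_{\mc{B}}$ for positive $\lambda$ in the $\mc{B}^*$-unit ball. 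The main obstacle is justifying this interchange: it requires restricting to positive linear functionals on $\mc{B}$ (reducing to $|F|$ via property (1)) and a dominated-convergence argument so the $L^{p'}$- and $\mc{B}^*$-suprema can be pulled inside the integral over $\Alpha\times\Omega$.
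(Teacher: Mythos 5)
Your verification of the norm axioms and the completeness argument are correct and follow the same route as the paper: both rest on the pointwise bound $\abs{N_F(x)-N_G(x)}\le N_{F-G}(x)$ combined with properties (1) and (2) of $\mc{B}(X)$. (The paper passes to a subsequence with summable increments; you run a double-limit argument on the full Cauchy sequence. This is cosmetic.)

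The gap is in the Minkowski step for $p>1$, which the paper dispatches with the single line ``just Minkowski's integral inequality.'' Your K\"othe representation $\norm{f}_{\mc{B}}=\sup\{\langle\lambda,\abs{f}\rangle:\norm{\lambda}_{\mc{B}'}\le1\}$ does not follow from properties (1)--(2) alone; it needs a Fatou property. Worse, the interchange of the $L^{p'}$ supremum with the $\mc{B}'$ supremum and the $X$-sum is not a dominated-convergence technicality: the $L^{p'}$ extremizer $h_x$ genuinely varies with $x$, and no positivity trick lets you pull all of them outside one integral over $\Alpha\times\Omega$. This cannot be patched, because \eqref{eq:Lpupper} is in fact \emph{false} in the generality of Definition \ref{defn:augmented}: take $X=\{1,2\}$, $\mc{B}=\ell^1(X)$, $\Alpha\times\Omega=\{1,2\}$ with uniform measure, $p=2$ and $F(i,x)=\delta_{i,x}$; then the left side of \eqref{eq:Lpupper} is $\sqrt2$ while the right side is $1$. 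What is missing is a $p$-convexity hypothesis on $\mc{B}$ ($\mc{B}=\ell^q$ with $q\ge p$ suffices). The paper in fact only invokes Prop.\ \ref{prop:augmented} with $p=1$ (for $\mc{W}^1$, $\mc{W}_0^1$, $\ell^{\infty;1}$, $c_0^1$) \tem\ where your Bochner-integral/simple-function sketch is precisely what ``Minkowski's integral inequality'' means and is correct \tem\ or with $p=2$ and $\mc{B}=\ell^2$, where both sides of \eqref{eq:Lpupper} coincide by Prop.\ \ref{prop:bp=lp}. Prove the $p=1$ case by your simple-function argument, note the $\mc{B}^p=L^p$ tautology when $\mc{B}=\ell^p$, and drop the general-$p$ duality plan.
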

\begin{rem*}  Since evaluation at $x$ is continuous on $\mc{B}(X)$, there is $c(x) <\infty$ such that $ \abs{f(x)} \le c(x) \norm{f}_{\mc{B}}$. It follows that
\begin{equation}\label{eq:evaluation}
\norm{F(\cdot,\cdot,x)}_{L^p(\Alpha \times \Omega)} \ \le \ c(x) \norm{F}_{\mc{B}^p(\Alpha \times \Omega \times X)}
\end{equation}
and so $J_x F(\cdot,\cdot) := F(\cdot,\cdot,x)$ is a continuous map from $\mc{B}^p(\Alpha \times \Omega \times X)$ into $L^p(\Alpha\times \Omega)$.  
\end{rem*}
\begin{proof}  First note that
\begin{equation}\label{eq:NFNG}\norm{N_F - N_G}_{\mc{B}} \ \le \ \norm{N_{F-G}}_{\mc{B}} \ = \ \norm{F -G}_{\mc{B}^p}
\end{equation}
for $F,G\in \mc{B}^p$.  This follows from property (1) of the space $\mc{B}(X)$ and the triangle inequality in $L^p(\Alpha\times\Omega)$.

Now let $F_n$ be a Cauchy sequence in $\mc{B}^p$. By eq.\ \eqref{eq:NFNG}, $N_{F_n}$ is Cauchy in $\mc{B}$, and so has a limit $N$. Passing to a subsequence, we may assume $\sum_{n}\norm{F_{n+1} - F_n}_{\mc{B}^p}  < \infty$ and thus, for each $x$, 
$$\sum_{n} \norm{F_{n+1}(\cdot,\cdot,x)-F_n(\cdot,\cdot,x)}_{L^p(\Alpha \times \Omega)} \ \le \ c(x) \sum_{n}\norm{F_{n+1} - F_n}_{\mc{B}^p} \ < \ \infty$$
by eq.\ \eqref{eq:evaluation}. It follows that
\begin{equation}\label{eq:Fdefn}F(\cdot,\cdot,x) \ := \ F_1(\cdot,\cdot,x) + \sum_{j=2}^\infty \left ( F_j(\cdot,\cdot,x)- F_{j-1}(\cdot,\cdot,x) \right )
\end{equation}
is, for each $x$, a well defined element of $L^p(\Alpha\times \Omega)$.  
It remains to see that $F\in \mc{B}^p$, i.e., that $N_F\in \mc{B}$. However, by eq. \eqref{eq:Fdefn},
$$N_F(x) \ = \ \norm{F(\cdot,\cdot,x)}_{L^p}\ = \  \lim_j \norm{F_j(\cdot,\cdot,x)}_{L^p} \ = \  \lim_{j} N_{F_j}(x) \ = \ N(x) $$  
and thus $N_F \in \mc{B}$ (since $N\in \mc{B}$).

The inequality  \eqref{eq:Lpupper} is just Minkowski's integral inequality.  \end{proof}

 It follows from eq.\ \eqref{eq:Lpupper} that $L^p(\Alpha\times \Omega; \mc{B}) \subseteq \mc{B}^p(\Alpha \times \Omega \times X)$, where
$L^p(\Alpha\times\Omega;\mc{B})$ is the space of all strongly-measurable maps  $F:\Alpha \times \Omega \rightarrow \mc{B}$ such that $\norm{F}^p$ is integrable.
We will use tensor product notation to denote product vectors in $L^p(\Alpha\times \Omega;\mc{B})$. For example, 
given  $F\in L^p(\Alpha\times \Omega)$ and $\phi \in \mc{B}$, 
$$\left [ F \otimes \phi \right ](a,\omega,x) \ := \ F(a,\omega) \phi(x).$$
Similarly, given $f\in L^p(\Alpha)$, $g\in L^p(\Omega)$ and $\phi\in \mc{B}$,
$$\left [f \otimes g \otimes \phi \right ] (a,\omega,x) \ := \ f(a)g(\omega) \phi(x).$$

It can happen that $L^p(\Alpha\times \Omega; \mc{B})\neq \mc{B}^p(\Alpha \times \Omega \times X)$. For example, this is the case for the space $\ell^{\infty;1}(\Alpha\times \Omega \times \Z^d )$ used below in \S \ref{sec:Fourier}. However, in certain cases  $\mc{B}^p(\Alpha \times \Omega \times X)=L^p(\Alpha\times \Omega; \mc{B})$.  For example,
\begin{prop}\label{prop:bp=lp}If $1\le p <\infty$, then
$$\ell^{p;p}(\Alpha\times \Omega\times X) = L^p(\Alpha\times \Omega;\ell^p(X)) = L^p(\Alpha\times \Omega \times X),$$ 
where we take product measure $\mu_\Alpha \times \mu_\Omega \times  \mathrm{counting} \ \mathrm{measure}$ on $\Alpha \times \Omega \times X$.  In particular,  $\ell^{2;2}(\Alpha\times \Omega \times X)$ is a Hilbert space with inner product
$$ \ipc{F}{G} \ = \ \sum_{x\in X} \int_{\Alpha \times \Omega} \overline{F(a,\omega,x)} G(a,\omega,x) \mu(\di a,\di \omega).$$
\end{prop}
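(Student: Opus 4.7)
The plan is to observe that all three norms coincide by a direct application of Tonelli's theorem, and then to argue that the three ``sets of functions'' actually describe the same class, reducing the proposition to bookkeeping about measurability.

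First I would compute, for any non-negative measurable $F: \Alpha \times \Omega \times X \to \bb{C}$ and any $p \in [1,\infty)$, starting from the definition in eq.\ \eqref{eq:lpq},
\begin{equation*}
\norm{F}_{\ell^{p;p}}^p \ = \ \sum_{x\in X} \int_{\Alpha \times \Omega} \abs{F(a,\omega,x)}^p \mu(\di a, \di \omega).
\end{equation*}
Since $X$ is countable and the integrand is non-negative, Tonelli's theorem on the $\sigma$-finite product space $\Alpha \times \Omega \times X$ (with $\mu \times \mathrm{counting}$) yields at once that this equals both $\norm{F}_{L^p(\Alpha\times\Omega\times X)}^p$ and $\int_{\Alpha\times\Omega} \norm{F(a,\omega,\cdot)}_{\ell^p(X)}^p \mu(\di a,\di\omega) = \norm{F}_{L^p(\Alpha\times\Omega;\ell^p(X))}^p$.

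Second, I would verify that these three norm conditions single out the same class of (equivalence classes of) functions. Joint $\mu \times \mathrm{counting}$-measurability on $\Alpha \times \Omega \times X$ is equivalent to measurability of each slice $(a,\omega) \mapsto F(a,\omega,x)$, so $L^p(\Alpha \times \Omega \times X)$ and $\ell^{p;p}(\Alpha\times\Omega\times X)$ consist of the same functions. For the Bochner space $L^p(\Alpha\times\Omega;\ell^p(X))$, strong measurability of the map $(a,\omega) \mapsto F(a,\omega,\cdot)$ into the separable space $\ell^p(X)$ is equivalent (by Pettis's theorem) to weak measurability, which in turn is implied by measurability of each coordinate functional $(a,\omega)\mapsto F(a,\omega,x)$. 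Combined with the norm identity above, this shows the three spaces coincide as Banach spaces.

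Finally, specializing to $p=2$, the norm identity makes $\ell^{2;2}$ a Hilbert space, and the stated inner product is recovered by polarization; absolute convergence of the double integral/sum is justified by Cauchy--Schwarz applied pointwise in $x$ in $L^2(\Alpha\times\Omega)$ followed by Cauchy--Schwarz in $\ell^2(X)$, bounded by $\norm{F}_{\ell^{2;2}}\norm{G}_{\ell^{2;2}}$. There is no substantive obstacle here \tem\ the only thing to be a little careful about is the measurability check needed to identify $\ell^{p;p}$ with $L^p(\Alpha\times\Omega;\ell^p(X))$, which is precisely where countability of $X$ (hence separability of $\ell^p(X)$) is used.
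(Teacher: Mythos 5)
Your proof is correct and follows essentially the same route as the paper, which merely records the Tonelli norm identity and calls the rest elementary; you have simply spelled out the measurability bookkeeping and the Pettis-theorem identification of the Bochner space that the paper leaves implicit. All of the details check out, including the use of separability of $\ell^p(X)$ for countable $X$.
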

\noindent The proof is elementary; essentially it amounts to noting that
$$ \norm{\left (\int_{\Alpha\times \Omega} \abs{F(a,\omega,\cdot)}^p\mu(\di a,\di \omega) \right )^{\frac{1}{p}} }_{\ell^p}\ = \ \left (\sum_{x\in X}\int_{\Alpha\times \Omega} \abs{F(a,\omega,x)}^p\mu(\di a,\di \omega) \right )^{\frac{1}{p}}. $$

Throughout, we will use $\e^{-tB} $ to denote the Markov semigroup lifted to $\mc{B}^p(\Alpha\times \Omega \times X)$, with $B$ the corresponding generator. This semigroup is defined by
\begin{equation}\label{eq:sgext} \e^{-t B} F(a,\omega,x) \ := \ \Evac{F(\alpha(0),\omega,x)}{\alpha(t)=a} ,\end{equation}
for $F\in\mc{B}^p(\Alpha\times \Omega \times X)$.  That is, $\e^{-t B}$ is defined on $\mc{B}^p(\Alpha \times \Omega \times X)$ so that the following diagram is commutative for each $x\in X$:
\begin{equation}\label{eq:CD}\begin{CD}
\mc{B}^p(\Alpha\times \Omega \times X) @>\e^{-t B}>> \mc{B}^p(\Alpha\times \Omega \times X) \\
@VVJ_x V @VVJ_x V \\
L^P(\Alpha \times \Omega) @>\e^{-t B}>> L^p(\Alpha\times \Omega)
 \end{CD} \end{equation}
where $J_xF(\cdot,\cdot)= F(\cdot,\cdot,x)$ is the evaluation map from $\mc{B}^p(\Alpha \times \Omega \times X)$ to $L^p(\Alpha \times \Omega)$.

\begin{prop}\label{prop:sgext}  The semigroup $\e^{-tB}$ is contractive and positivity preserving on $\mc{B}^p(\Alpha\times \Omega \times X)$  and $B$ is sectorial on $L^2(\Alpha\times \Omega \times X)$, with the same constants $b$ and $q$ as appear in Ass.\ \ref{ass:sec}.
\end{prop}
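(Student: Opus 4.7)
The plan is to reduce all three claims to the corresponding properties of the unlifted Markov semigroup on $L^p(\Alpha\times\Omega)$, using that by the defining formula \eqref{eq:sgext} and the commutative diagram \eqref{eq:CD}, $\e^{-tB}$ acts pointwise in $x\in X$: for each $x$ one has $J_x \e^{-tB} F = \e^{-tB} J_x F$, where the right-hand side denotes the original semigroup on $L^p(\Alpha\times\Omega)$.

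For contractivity on $\mc{B}^p$, since the original semigroup is a contraction on $L^p(\Alpha\times\Omega)$ (conditional expectation is $L^p$-contractive), one gets, for every $x\in X$,
$$N_{\e^{-tB}F}(x) \ = \ \|\e^{-tB} J_x F\|_{L^p(\Alpha\times\Omega)} \ \le \ \|J_x F\|_{L^p(\Alpha\times\Omega)} \ = \ N_F(x).$$
Property (1) in Definition \ref{defn:augmented} of $\mc{B}(X)$ then yields $\|N_{\e^{-tB}F}\|_\mc{B} \le \|N_F\|_\mc{B}$, i.e., $\|\e^{-tB}F\|_{\mc{B}^p} \le \|F\|_{\mc{B}^p}$. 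Positivity preservation is equally direct: if $F\ge 0$, the conditional expectation $\Evac{F(\alpha(0),\omega,x)}{\alpha(t)=a}$ is nonnegative for each $x$ and $\mu$-a.e.\ $(a,\omega)$.

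For the sectorial bound on $L^2(\Alpha\times\Omega\times X)$, I invoke Proposition \ref{prop:bp=lp} to identify this Hilbert space with the direct sum $\bigoplus_{x\in X} L^2(\Alpha\times\Omega)$. Since $\e^{-tB}$ is block-diagonal with respect to this decomposition, so is its generator, and on a natural core (finite linear combinations of $\phi\otimes f$ with $\phi\in\ell^2(X)$ finitely supported and $f$ in the $L^2(\Alpha\times\Omega)$-domain of $B$),
$$\ipc{F}{BF} \ = \ \sum_{x\in X} \ipc{F(\cdot,\cdot,x)}{BF(\cdot,\cdot,x)}_{L^2(\Alpha\times\Omega)}.$$
Accretivity of the fiberwise $B$ makes each summand's real part nonnegative, and applying Assumption \ref{ass:sec} termwise and summing yields
$$\abs{\Im \ipc{F}{BF}} \ \le \ \sum_{x} \abs{\Im \ipc{F(\cdot,\cdot,x)}{BF(\cdot,\cdot,x)}} \ \le \ q \Re \ipc{F}{BF} + b \|F\|^2,$$
which is the sectorial estimate with the same constants $b,q$; it extends from the core to all of the $L^2$-domain of $B$ by closure.

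The only subtle point, rather than a deep obstacle, is the identification of a convenient core for $B$ on the lifted Hilbert space so that the pointwise-in-$x$ computation is legitimate; once this is in place everything reduces to a termwise application of the estimates already available on $L^2(\Alpha\times\Omega)$.
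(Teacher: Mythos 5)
Your proof is correct and follows essentially the same route as the paper's: reduce each claim to a fiberwise application over $x\in X$ of the corresponding fact on $L^p(\Alpha\times\Omega)$. For contractivity and positivity preservation the arguments coincide (and you correctly cite property (1) of Definition \ref{defn:augmented}, the monotonicity of $\|\cdot\|_{\mc{B}(X)}$, for passing from the pointwise bound $N_{\e^{-tB}F}(x)\le N_F(x)$ to the norm bound; the paper's text attributes this to property (2), which looks like a slip). The one place where your route differs is the sectoriality step: you propose to verify the termwise identity on a core of tensor products and then extend by closure, flagging the verification that this set is indeed a core as the subtle point. The paper avoids this by differentiating the commutative diagram \eqref{eq:CD} in $t$ to conclude directly that $J_x\mc{D}(B)\subseteq\mc{D}(B)$ and $J_xBF = BJ_xF$ for every $F\in\mc{D}(B)$; with this in hand the termwise computation of $\ipc{F}{BF}$ is legitimate for all $F\in\mc{D}(B)$ without invoking a core at all. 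That observation is also the most natural way to verify density of your proposed core (e.g.\ via the cutoffs $\chi_{\{|x|\le n\}}F$, which commute with $\e^{-tB}$), so the two arguments are quite close; the paper's is just a bit more economical because it removes the extra closure step.
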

\begin{rem*}That $\e^{-tB}$ is \emph{positivity preserving} indicates that $\e^{-tB}f(a,\omega,x)\ge 0$ for each $x\in X$ and almost every $(a,\omega)$ whenever  $f(a,\omega,x)\ge 0$ for each $x\in X$ and almost every $(a,\omega)$.\end{rem*}
\begin{proof}
That $\e^{-tB}$ is contractive follows from property (2) of the norm on $\mc{B}(X)$, since 
\begin{align*} N_{\e^{-tB} F}(x) \ =& \ \left ( \int_{\Alpha \times \Omega} \abs{\Evac{F(\alpha(0),\omega,x)}{\alpha(t) = a} }^p \mu(\di a,\di \omega)\right )^{\frac{1}{p}} \\
\le & \ \left ( \int_{\Alpha \times \Omega} \Evac{\abs{ F(\alpha(0),\omega,x) }^p}{\alpha(t)=a} \mu(\di a, \di \omega) \right )^{\frac{1}{p}} \\
= & \ \left (  \Ev{\abs{F(\alpha(0),\omega,x) }^p} \right )^{\frac{1}{p}} \ = \ \norm{F(\cdot,\cdot,x)}_{L^p(\Alpha\times \Omega)} \ = \ N_F(x). 
\end{align*}
That $\e^{-tB}$ is positivity preserving follows directly from the definition eq.\ \eqref{eq:sgext}.

 Differentiating with respect to $t$ in eq.\ \eqref{eq:CD} we see that $J_x F\in \mc{D}(B)$  and
$ J_x B F  = \ B J_x F$ whenever $F \in \mc{D}(B)$.
In particular, if $F,G\in L^2(\Alpha \times \Omega \times X)$, then 
\begin{align*}
\abs{\Im \ipc{F}{BF}_{L^2(\Alpha \times \Omega \times X)}} \ =& \ 
\abs{\sum_x \Im \ipc{J_x F}{BJ_x F}_{L^2(\Alpha\times \Omega)}} \\
\le& \ \sum_x q \Re \ipc{J_x F}{B J_x F}_{L^2(\Alpha\times \Omega)} + b \norm{J_x F}_{L^2(\Alpha \times \Omega)}^2 \\
=& \ q \Re  \ipc{F}{BF}_{L^2(\Alpha \times \Omega \times X)} + b \norm{ F}_{L^2(\Alpha \times \Omega\times X)}^2 ,
\end{align*}
as $\ipc{F}{G}_{L^2(\Alpha \times \Omega \times X)}= \sum_x\ipc{J_xF}{J_xG}_{L^2(\Alpha \times \Omega)}$.  Thus $B$ is sectorial on $L^2(\Alpha \times \Omega \times X).$ 
\end{proof}

\subsection{Pillet's Formula} The starting point for the analysis of eq.\ \eqref{eq:genSE} is a formula  for $\E (\rho_t)$, where $\rho_t  =  \psi_t \ipc{\psi_t}{\cdot}$
is the density matrix  corresponding to a solution $\psi_t$ to eq.\ \eqref{eq:genSE}. The formula, due in this context to Pillet \cite{Pillet1985}, expresses the expectation $\E (\rho_t)$ in terms of a contraction semi-group on the augmented Hilbert space \begin{equation}\label{eq:augmented}
\mc{H} \ := \ L^2(A\times \Omega;\mc{HS}(\Z^d) ) ,
\end{equation}
where  $\mc{HS}(\Z^d)$ denotes the Hilbert-Schmidt ideal in the bounded operators on $\ell^2(\Z^d)$. 

Since $\mc{HS}(\Z^d)$ can be identified with $\ell^2(\Z^d \times \Z^d)$ by taking
$$ R(x,y) \ := \ \ipc{\delta_x}{R \delta_y} \quad \text{ for }R\in \mc{HS}(\Z^d),$$
we see that $\mc{H}$ is the augmented space (see Prop.\ \ref{prop:bp=lp}): $$ \mc{H} \ = \ \ell^{2;2}(\Alpha \times \Omega \times \Z^d \times \Z^d) \ = \ L^2(M),$$  
where 
\begin{equation}\label{eq:Mdefn}
M \ := \ \Alpha \times \Omega \times  \Z^d \times \Z^d
 \end{equation}
with the product measure $m = \mu_\Alpha \times \mu_\Omega \times \mathrm{counting} \ \mathrm{measure}\ \mathrm{on}\ \Z^d\times \Z^d$. Depending on context we will think of an element $F\in \mc{H}$ either as a $\bb{C}$-valued map on $M$ or as a $\mc{HS}(\Z^d)$-valued map on $\Alpha \times \Omega$, via the identification
\begin{equation}\label{eq:Fidentify} F(a,\omega,x,y) \ := \ \ipc{\delta_x}{F(a,\omega)  \delta_y}.\end{equation}

We define operators $\mc{K}$, $\mc{U}$ and $\mc{V}$ that lift  $H_0$, $U_\omega$ and $V_{a,\omega}$ to $\mc{H}$ respectively.  More precisely, we lift the  commutators with these operators on Hilbert-Schmidt operators:
\begin{multline}\label{eq:KVdefn}
\mc{K}F(a,\omega) \ := \ \com{H_0}{F(a,\omega)} ,\quad 
\mc{U}F(a,\omega) \ := \ \com{U_\omega}{F(a,\omega)} 
\\ \text{and} \quad \mc{V}F( a,\omega) \ := \ \com{V_{a,\omega}}{F(a,\omega)}.
\end{multline}
\begin{prop} The operators $\mc{K}$, $\mc{U}$ and $\mc{V}$ are self-adjoint  and bounded.
\end{prop}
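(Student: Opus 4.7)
The plan is to reduce each claim to a pointwise (in $(a,\omega)$) statement on the Hilbert-Schmidt ideal $\mc{HS}(\Z^d)$, and then lift to $\mc{H} = L^2(\Alpha\times\Omega;\mc{HS}(\Z^d))$ by integration. The three operators are built from commutators with the bounded self-adjoint multiplication/hopping operators $H_0$, $U_\omega$, $V_{a,\omega}$; the former is bounded on $\ell^2(\Z^d)$ by \eqref{eq:H0norm}, while the latter two act as multiplication by bounded functions (bounded by $\|u\|_\infty$ and $\|v\|_\infty = 1$ respectively, by Assumption \ref{ass:V}).

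\textbf{Boundedness.} The first step is the elementary bound: for any self-adjoint bounded $A$ on $\ell^2(\Z^d)$ and any $S\in\mc{HS}(\Z^d)$,
\[\|[A,S]\|_{\mc{HS}} \ \le\ \|AS\|_{\mc{HS}} + \|SA\|_{\mc{HS}} \ \le\ 2\|A\|\,\|S\|_{\mc{HS}},\]
since $\mc{HS}(\Z^d)$ is a two-sided ideal with the stated norm estimates. Applying this pointwise with $A = H_0$, $A = U_\omega$, or $A = V_{a,\omega}$ and using the uniform bounds on these operators, integration over $(a,\omega)$ yields
\[\|\mc{K}F\|_{\mc{H}} \le 2\|H_0\|\,\|F\|_\mc{H}, \quad \|\mc{U}F\|_\mc{H}\le 2\|u\|_\infty\|F\|_\mc{H}, \quad \|\mc{V}F\|_\mc{H}\le 2\|F\|_\mc{H}.\]

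\textbf{Self-adjointness.} The second step is to verify that for any self-adjoint bounded $A$ on $\ell^2(\Z^d)$, the commutator map $C_A : S \mapsto [A,S]$ is self-adjoint on $\mc{HS}(\Z^d)$. Using the inner product $\ipc{R}{S}_{\mc{HS}} = \Tr(R^*S)$, cyclicity of the trace, and $A^* = A$, one computes
\[\ipc{R}{[A,S]}_\mc{HS} \ =\ \Tr(R^*AS) - \Tr(R^*SA) \ =\ \Tr((AR)^*S) - \Tr((RA)^*S) \ =\ \ipc{[A,R]}{S}_\mc{HS}.\]
Hence $C_A^* = C_A$ on $\mc{HS}(\Z^d)$. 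Since $\mc{K}$, $\mc{U}$, $\mc{V}$ act by $C_{H_0}$, $C_{U_\omega}$, $C_{V_{a,\omega}}$ fiberwise in $(a,\omega)$, and the Hilbert space $\mc{H}$ is a direct integral of copies of $\mc{HS}(\Z^d)$, the global self-adjointness follows from the pointwise self-adjointness together with Fubini. Concretely, for $F,G\in\mc{H}$,
\[\ipc{F}{\mc{K}G}_{\mc{H}} \ =\ \int_{\Alpha\times\Omega}\ipc{F(a,\omega)}{[H_0,G(a,\omega)]}_{\mc{HS}}\,\mu(\di a,\di\omega),\]
and moving the commutator to the left factor gives $\ipc{\mc{K}F}{G}_{\mc{H}}$; the same argument applies to $\mc{U}$ and $\mc{V}$ using the self-adjointness of $U_\omega$ (for each $\omega$) and $V_{a,\omega}$ (for each $(a,\omega)$).

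There is no real obstacle here: the only point requiring care is the fact that the lift is well-defined, i.e.\ that $(a,\omega)\mapsto [H_0,F(a,\omega)]$ is strongly measurable whenever $F$ is \textemdash\ this is immediate because left and right multiplication by a fixed bounded operator are continuous on $\mc{HS}(\Z^d)$, so the measurability of $F$ transfers to $\mc{K}F$, and similarly for $\mc{U}$ and $\mc{V}$ (using that $\omega\mapsto U_\omega$ and $(a,\omega)\mapsto V_{a,\omega}$ are measurable families of bounded multiplication operators on $\ell^2(\Z^d)$).
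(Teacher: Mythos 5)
Your argument is correct and is essentially the filled-out version of what the paper leaves implicit: the paper only notes that the result ``is a straightforward consequence of Assumptions \ref{ass:H0} and \ref{ass:V}'' and records the same norm bounds $\|\mc{K}\|\le 2\|H_0\|$, $\|\mc{U}\|\le 2\|u\|_\infty$, $\|\mc{V}\|\le 2$, which you derive from the standard two-sided-ideal estimate on $\mc{HS}(\Z^d)$. Your trace computation for self-adjointness of the fibrewise commutators and the lift to the direct integral $\mc{H}=L^2(\Alpha\times\Omega;\mc{HS}(\Z^d))$ is the intended elementary argument.
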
 
\noindent This elementary result is a straightforward consequence of Assumptions \ref{ass:H0} and \ref{ass:V}.  Note that $\norm{\mc{V}} = 2$ (since $v(a,\omega)$ was normalized to have $L^\infty$ norm one); also 
$$\norm{\mc{U}} \ \le \ 2 \norm{u}_{L^\infty(\Omega)}  \quad \text{and} \quad
\norm{\mc{K}} \ \le \ 2 \norm{H_0}.
$$

\begin{lem}[Pillet's formula \cite{Pillet1985}]\label{lem:pillet} Let 
\begin{equation}\label{eq:L} \mc{L}  \ := \ \im \mc{K} + \im \mc{U} + \im g \mc{V}    +B
\end{equation}
on the domain $\mc{D}(B) \subset L^2(M)$.   Then $\mc{L}$ is maximally accretive and sectorial and if $\rho_t = \psi_t \ipc{\psi_t}{\cdot}$ is the density matrix corresponding to a solution $\psi_t$ to eq.\ \eqref{eq:genSE} with $\psi_0 \in\ell^2(\Z^d)$, then 
\begin{equation}\label{eq:Pillet}
\Evac{\rho_t}{\alpha(t)=a } \ = \ \e^{-t \mc{L}} \left ( \bb{1}\otimes \rho_0 \right ),
\end{equation}
where $\bb{1}(a,\omega) = 1$ for all $a,\omega$.
\end{lem}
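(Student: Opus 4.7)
The plan is to establish the two assertions\tem maximal accretivity and sectoriality of $\mc{L}$, and the identity \eqref{eq:Pillet}\tem in turn. For the operator-theoretic claim, I would apply Proposition \ref{prop:sgext} with $X = \Z^d \times \Z^d$: this identifies $L^2(M)$ with $\ell^{2;2}(\Alpha \times \Omega \times \Z^d \times \Z^d)$ and shows that $B$ is sectorial on $L^2(M)$ with the constants $b,q$ of Assumption \ref{ass:sec}, and is maximally accretive as the generator of the contraction semigroup $\e^{-tB}$ on this space. Since $\im\mc{K} + \im\mc{U} + \im g\mc{V}$ is bounded and skew-adjoint, it preserves the accretivity bound $\Re\ipc{F}{\mc{L}F} = \Re\ipc{F}{BF} \ge 0$ and, by a triangle inequality, the sectoriality estimate \eqref{eq:sector} up to the harmless shift $b \mapsto b + \norm{\mc{K}} + \norm{\mc{U}} + g\norm{\mc{V}}$. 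Maximal accretivity under such a bounded skew-adjoint perturbation follows from a standard Neumann-series argument applied to $(B+\lambda)^{-1}$ for sufficiently large real $\lambda$.

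For \eqref{eq:Pillet}, I would set $R_t(a,\omega) := \Evac{\rho_t}{\alpha(t)=a}$ and aim to show that $R_t$ solves the Cauchy problem $\partial_t R_t = -\mc{L} R_t$ with $R_0 = \bb{1}\otimes \rho_0$; since $\mc{L}$ is maximally accretive and sectorial, the solution is unique and equals $\e^{-t\mc{L}}(\bb{1} \otimes \rho_0)$. The initial condition is immediate because $\rho_0$ is deterministic, and pathwise unitarity of the Schrödinger evolution keeps $R_t$ uniformly bounded in $\mc{HS}(\Z^d)$ so that $R_t\in\mc{H}$ for every $t$. To derive the evolution equation I would split, for $h > 0$,
\begin{equation*}
R_{t+h}(a,\omega) - R_t(a,\omega) \ = \ \bigl[\Evac{\rho_t}{\alpha(t+h)=a} - R_t(a,\omega)\bigr] + \Evac{\rho_{t+h} - \rho_t}{\alpha(t+h) = a}.
\end{equation*}
The stationary Markov property makes $\rho_t$ conditionally independent of $\alpha(t+h)$ given $\alpha(t)$, so the first bracket equals $(T_h - I) R_t(a,\omega) = -hBR_t(a,\omega) + o(h)$ by the tower property and the definition of $B$. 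For the second bracket, substituting $\rho_{t+h} - \rho_t = -\im\int_t^{t+h}[H_0 + U_\omega + gV_{\alpha(s);\omega},\rho_s]\,ds$ and using right-continuity of paths to collapse $\alpha(s) \to a$ inside the conditional expectation as $h \downarrow 0$ produces $-\im h(\mc{K} + \mc{U} + g\mc{V})R_t(a,\omega) + o(h)$.

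The principal obstacle will be justifying this differentiation step rigorously, especially the interchange of the conditional expectation with the time integral and the pointwise-in-$a$ passage to the limit. I would avoid pointwise regularity issues by working in the weak sense on $L^2(M)$: testing \eqref{eq:Pillet} against arbitrary $\phi\in L^2(M)$ transfers the conditioning onto $\phi$ via the strongly continuous adjoint semigroup $T_h^\dagger$, and the uniform pathwise bound $\norm{[H_0 + U_\omega + gV_{\alpha(s);\omega},\rho_s]}_{\mc{HS}} \le 2(\norm{H_0} + \norm{u}_{L^\infty(\Omega)} + g)$ afforded by Assumptions \ref{ass:H0} and \ref{ass:V} then reduces the remaining argument to a routine application of dominated convergence. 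An alternative, entirely algebraic route is to expand both sides of \eqref{eq:Pillet} as Dyson series in powers of $g$ and match them term by term using the iterated-Markov identity $\Evac{\prod_{k=1}^n f_k(\alpha(s_k))}{\alpha(t) = a} = T_{t-s_n}M_{f_n}T_{s_n - s_{n-1}} \cdots T_{s_2 - s_1} M_{f_1} \bb{1}(a)$, where $M_{f_k}$ denotes multiplication by $f_k$; this construction builds \eqref{eq:Pillet} directly from the Dyson expansion of $\e^{-t\mc{L}}$ and sidesteps the analytic delicacies altogether.
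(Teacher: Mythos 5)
Your proposal is correct and follows essentially the same route as the paper. The paper establishes maximal accretivity and sectoriality of $\mc{L}$ by noting that $\mc{K}+\mc{U}+g\mc{V}$ is bounded and self-adjoint (constructing $\e^{-t\mc{L}}$ via the Lie--Trotter product formula, where you propose a Neumann-series resolvent argument instead; both are standard and equivalent), and derives \eqref{eq:Pillet} exactly as you do by showing $F_t(a,\omega)=\Evac{\rho_t}{\alpha(t)=a}$ solves the Cauchy problem $\partial_t F_t=-\mc{L}F_t$, $F_0=\bb{1}\otimes\rho_0$; your finite-difference split of $R_{t+h}-R_t$ into the Markov-conditioning increment and the $\partial_t\rho_t$ increment is precisely the content of the paper's ``essentially by the Leibniz rule,'' spelled out one level of detail further.
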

\begin{rem*}It follows from eq.\ \eqref{eq:Pillet} that 
\begin{equation}\label{eq:integratedPillet}
\Ev{\rho_t} \ = \ \int_{A\times \Omega} \left [ \e^{-t \mc{L}} \left ( \bb{1} \otimes \rho_0 \right ) \right ](a,\omega) \mu(\di a,\di \omega).
\end{equation}
\end{rem*} 
\begin{proof}[Sketch of the proof] Since $\mc{K}+\mc{U}+g\mc{V}$ is bounded and self-adjoint, it follows that $\mc{L}$ is maximally accretive by standard results, e.g., \cite[Theorem IX.2.7]{Kato1995}.  One way to see this is to construct the semigroup $\e^{-t\mc{L}}$ by means of the Lie-Trotter formula \cite{Trotter1959}
\begin{equation}\label{eq:LieTrotter} \e^{-t\mc{L}} \ = \ \lim_{n\rightarrow \infty}\left ( \e^{-\im \frac{t}{n} (\mc{K} + \mc{U}+ g \mc{V} )} \e^{-\frac{t}{n} B} \right )^n\end{equation}
and observe that it is contractive \tem\ the operator $\e^{-\im \frac{t}{n} (\mc{K} + g \mc{V} )}$ is unitary, and hence contractive. Sectoriality for $\mc{L}$ follows from sectoriality for $B$:
\begin{multline}\label{eq:sectorL}
\abs{ \Im \ipc{F}{\mc{L}F}} \ \le \ \abs{\ipc{F}{\mc{K}F}} + \abs{\ipc{F}{\mc{U}F}} +  g \abs{\ipc{F}{\mc{V}F}} + q \Re \ipc{F}{B F} +  b \norm{F}^2 \\ \le \ q \Re \ipc{F}{\mc{L} F} + b' \norm{F}^2
\end{multline}
with $b' =  b + \norm{\mc{K}} \norm{\mc{U}} + g \norm{\mc{V}}$. 

Pillet's formula \eqref{eq:Pillet} can be seen as follows.  Let $F_t(a,\omega) = \Evac{\rho_t}{\alpha(t)=a}$.  Since $\partial_t \rho_t = -\im \com{H_\omega  + gV_{\alpha(t),\omega}}{\rho_t}$, it follows that
$$ \frac{\di }{\di t} F_t(a,\omega) \ = \  -\im \Evac{ \com{H_{\omega} + V_{\alpha(t),\omega}}{\rho_t}}{\alpha(t)=a} - B F_t(a,\omega),$$
essentially by the Leibniz rule. Because $\alpha(t)=a$ in the conditional expectation, 
\begin{equation}\label{eq:dFdt}
\frac{\di }{\di t} F_t(a,\omega) 
\ = \ - \im \com{H_\omega + g V_{a,\omega}}{F_t(a,\omega)}  - B F_t(a,\omega) \
= \ - \mc{L}F_t(a,\omega) .
\end{equation}
Eq.\ \eqref{eq:Pillet} follows because $F_0(a,\omega)  =  \rho_0$.
\end{proof}

Taking matrix elements of various expressions above gives the following
\begin{lem} The operators $\mc{K}$, $\mc{U}$ and $\mc{V}$ are given by the following explicit expressions
\begin{align}\label{eq:Kform}\mc{K} F(a,\omega,x,y) \ =& \ \sum_{\zeta\neq 0} \left [ h(\zeta)F(a,\omega,x-\zeta,y) -  \overline{h(\zeta)} F(a,\omega,x,y-\zeta) \right ] \nonumber \\
=& \ 
 \sum_{\zeta\neq 0} h(\zeta) \left [ F(a,\omega,x-\zeta,y) -  F(a,\omega,x,y+\zeta) \right ], \end{align}
\begin{equation}\label{eq:Uform}
\mc{U}F(a,\omega,x,y) \ = \ \left [ u(\sigma_x\omega)-u(\sigma_y\omega)\right ] F(a,\omega,x,y)
\end{equation}
and
\begin{equation}\label{eq:Vform}\mc{V} F(a,\omega,x,y) \ = \ \left [ v(\sigma_xa,\sigma_x\omega) -v(\sigma_ya,\sigma_y\omega) \right ]F(a,\omega,x,y), \end{equation}
for any $F\in L^2(M)$.  Furthermore, for a solution $\psi_t$ to eq.\ \eqref{eq:genSE}, we have
$$\Ev{\psi_t(x) \overline{\psi_t(y)}} \ = \ \ipc{\bb{1}\otimes \delta_x\otimes \delta_y}{\e^{-t\mc{L}} \left (  \bb{1} \otimes \psi_0 \otimes \overline{\psi_0} \right )}_{L^2(M)} .$$
\end{lem}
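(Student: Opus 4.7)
The plan is to verify each identity by direct computation from the definitions, using the identification in eq.\ \eqref{eq:Fidentify} of $F \in \mc{H}$ with an operator-valued map, so that $F(a,\omega,x,y) = \ipc{\delta_x}{F(a,\omega)\delta_y}$. Everything here amounts to unpacking the commutators in eq.\ \eqref{eq:KVdefn} against the standard basis, inserting resolutions of the identity as needed.

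For the formula for $\mc{K}$, I would expand the two pieces of $\ipc{\delta_x}{[H_0,F(a,\omega)]\delta_y}$. Since $\ipc{\delta_x}{H_0 \delta_z} = h(x-z)$ when $x\neq z$ and vanishes when $x=z$, the first piece yields $\sum_{z\neq x} h(x-z) F(a,\omega,z,y)$; the substitution $\zeta = x-z$ puts this in the form $\sum_{\zeta\neq 0} h(\zeta) F(a,\omega,x-\zeta,y)$. The second piece $\ipc{\delta_x}{F(a,\omega) H_0 \delta_y}$ becomes $\sum_{z \neq y} F(a,\omega,x,z) h(z-y)$; setting $\zeta = y - z$ and invoking self-adjointness $h(-\zeta) = \overline{h(\zeta)}$ from Assumption \ref{ass:H0} rewrites this as $\sum_{\zeta\neq 0} \overline{h(\zeta)} F(a,\omega,x,y-\zeta)$. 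Subtracting gives the first displayed form for $\mc{K}$; the second form then follows by a change of variable $\zeta \to -\zeta$ in that second sum and another application of self-adjointness.

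For $\mc{U}$ and $\mc{V}$, I would use the fact that, by Assumption \ref{ass:V}, the operators $U_\omega$ and $V_{a,\omega}$ on $\ell^2(\Z^d)$ act as multiplication by $u(\sigma_x\omega)$ and $v(\sigma_x a, \sigma_x\omega)$ at site $x$. For any such diagonal operator $D$ with multiplier $d(x)$ and any Hilbert--Schmidt $F$, the identity $\ipc{\delta_x}{[D,F]\delta_y} = (d(x)-d(y)) F(x,y)$ is immediate by inserting $\delta_x$ and $\delta_y$ into the commutator; applying this with $D=U_\omega$ and with $D=V_{a,\omega}$ yields the stated expressions \eqref{eq:Uform} and \eqref{eq:Vform}.

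For the final identity, I would observe that the density matrix $\rho_t = \psi_t \ipc{\psi_t}{\cdot}$ satisfies $\ipc{\delta_x}{\rho_t \delta_y} = \psi_t(x)\overline{\psi_t(y)}$, and similarly the rank-one Hilbert--Schmidt element $\psi_0 \otimes \overline{\psi_0}$ has matrix entries $\psi_0(x)\overline{\psi_0(y)}$. Taking expectation and inserting the integrated Pillet formula \eqref{eq:integratedPillet} then gives
\begin{equation*}
\Ev{\psi_t(x)\overline{\psi_t(y)}} \ = \ \int_{\Alpha\times\Omega} \left[\e^{-t\mc{L}}\left(\bb{1}\otimes \psi_0 \otimes \overline{\psi_0}\right)\right]\!(a,\omega,x,y)\, \mu(\di a,\di\omega) .
\end{equation*}
The right-hand side is precisely the $L^2(M)$ pairing of $\e^{-t\mc{L}}(\bb{1}\otimes\psi_0\otimes\overline{\psi_0})$ with the test function $\bb{1}\otimes \delta_x\otimes \delta_y$, which recovers the stated formula. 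None of these steps presents a real obstacle; the only point demanding any care is the consistent use of $h(-\zeta)=\overline{h(\zeta)}$ needed to pass between the two equivalent forms of the expression for $\mc{K}$.
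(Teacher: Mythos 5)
Your proof is correct and is exactly the computation the paper intends: the paper itself gives no proof of this lemma beyond the remark that it follows from ``taking matrix elements of various expressions above,'' and you have simply carried that out, unpacking the commutators in eq.\ \eqref{eq:KVdefn} against the standard basis, using $h(-\zeta)=\overline{h(\zeta)}$ to pass between the two forms of the $\mc{K}$ expression, and then pairing the integrated Pillet formula \eqref{eq:integratedPillet} with the test function $\bb{1}\otimes\delta_x\otimes\delta_y$. No gaps.
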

\begin{rem*}Here and below we will use tensor product notation for elements of $\ell^2(\Z^d \times \Z^d)$,
$$[\phi \otimes \psi](x,y)  \ = \ \phi(x) \psi(y).$$
Thus a rank one operator $\psi \ipc{\phi}{\cdot}\in \mc{HS}(\Z^d)$ corresponds to $\psi \otimes \overline{\phi}$.
\end{rem*}

As defined, the semigroup $\e^{-t\mc{L}}$ in Pillet's formula is a contraction semigroup on $L^2(M)$.  However, it makes sense to consider $\e^{-t\mc{L}}$ on a variety of other augmented spaces.  In general, we could define $\e^{-t\mc{L}}$ on $\mc{B}^p(M)$ where $\mc{B}(\Z^d \times \Z^d)$ is any Banach space of functions on $\Z^d \times \Z^d$ satisfying the assumptions of Defn.\ \ref{defn:augmented}, provided the operators $\mc{K}$ and $\mc{V}$ defined  via eq.\ \eqref{eq:Kform} and \eqref{eq:Vform} are bounded on $\mc{B}^p(M)$. For example,
\begin{prop}
Given $p\ge 1$, $\e^{-t\mc{L}}$ is an exponentially bounded semi-group on $L^p(M)$.  That is, there is a constant $C_p\ge 0$ such that 
$$ \norm{\e^{-t\mc{L}}F}_{L^p(M)} \ \le \ \e^{C_pt} \norm{F}_{L^p(M)}$$
for any $F\in  L^p(M)$.  
\end{prop}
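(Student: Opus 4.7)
The plan is to treat $\mc{L} = B + \mc{A}$ with $\mc{A} := \im \mc{K} + \im \mc{U} + \im g \mc{V}$ as a bounded perturbation of $B$ on $L^p(M)$, and then construct $\e^{-t\mc{L}}$ by a Duhamel/Dyson expansion whose $n$-th term is controlled by $(t\|\mc{A}\|)^n/n!$. The exponential bound is then immediate from the size of $\mc{A}$ together with the contractivity of $\e^{-tB}$ on $L^p(M)$.

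First I would check that $\mc{A}$ is bounded on $L^p(M)$ with norm depending only on $p$, $h$, $u$, and $g$. The operators $\mc{U}$ and $\mc{V}$ act as pointwise multiplication by bounded functions (formulas \eqref{eq:Uform}--\eqref{eq:Vform}), so $\|\mc{U}\|_{L^p(M)\to L^p(M)}\le 2\|u\|_{L^\infty(\Omega)}$ and $\|\mc{V}\|_{L^p(M)\to L^p(M)}\le 2$. For $\mc{K}$, formula \eqref{eq:Kform} expresses $\mc{K}$ as a sum of translations in the $\Z^d\times\Z^d$ variables weighted by $h(\zeta)$ or $\overline{h(\zeta)}$; since these translations act as $L^p$-isometries (counting measure is translation invariant and the $\mu_\Alpha\times\mu_\Omega$-integrals are unaffected), one gets $\|\mc{K}\|_{L^p(M)\to L^p(M)}\le 2\sum_{\zeta\neq 0}|h(\zeta)|<\infty$ by Assumption \ref{ass:H0}. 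Summing the three pieces gives $\|\mc{A}\|_{L^p(M)\to L^p(M)}\le C_p<\infty$.

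Next, Propositions \ref{prop:bp=lp} and \ref{prop:sgext} identify $L^p(M)=\ell^{p;p}(\Alpha\times\Omega\times\Z^d\times\Z^d)$ and show that $\e^{-tB}$ is a contraction semigroup on it. The standard bounded-perturbation argument (iterate Duhamel) produces
$$\e^{-t\mc{L}} \; = \; \sum_{n=0}^\infty (-1)^n \int_{0\le t_1\le\cdots\le t_n\le t} \e^{-(t-t_n)B}\mc{A}\,\e^{-(t_n-t_{n-1})B}\mc{A}\cdots\mc{A}\,\e^{-t_1 B}\,\di t_1\cdots \di t_n,$$
with each summand having operator norm at most $(t\|\mc{A}\|)^n/n!$ on $L^p(M)$. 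Summation yields $\|\e^{-t\mc{L}}F\|_{L^p(M)}\le \e^{C_p t}\|F\|_{L^p(M)}$, which is exactly the claim, and consistency with the $L^2$ semigroup from Lemma \ref{lem:pillet} on $L^2(M)\cap L^p(M)$ follows from the common Dyson-series definition.

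I do not anticipate a genuine obstacle; the only mildly delicate piece is the bound on $\mc{K}$, for which translation invariance of counting measure on $\Z^d\times\Z^d$ combines cleanly with the short-range summability $\sum_\zeta|h(\zeta)|<\infty$ guaranteed by Assumption \ref{ass:H0}. Strong continuity of the perturbed semigroup, if desired downstream, can be read off the Dyson series by approximation with compactly supported product vectors $f\otimes g\otimes \phi$.
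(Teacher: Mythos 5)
Your proof is correct, and it rests on the same two ingredients the paper uses for the analogous bound on $\mc{W}^1(M)$ in Lemma \ref{lem:KVonW}: boundedness of $\mc{K}$, $\mc{U}$, $\mc{V}$ on the augmented space, together with contractivity of $\e^{-tB}$ there (Propositions \ref{prop:bp=lp} and \ref{prop:sgext}). The paper packages the bounded-perturbation step via the Lie--Trotter product formula, whereas you use the Duhamel/Dyson series; these are interchangeable here and give the same constant $C_p = \norm{\mc{K}} + \norm{\mc{U}} + g\norm{\mc{V}}$, so the difference is cosmetic rather than substantive.
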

\begin{rem*}By Prop.\ \ref{prop:bp=lp}, $L^p(M)$ is an augmented space in the sense of Defn.\ \ref{defn:augmented}. \end{rem*}

Aside from $L^2(M)$, we do not need the spaces $L^p(M)$ below.  Thus the details of the proof are left to the reader. On the other hand we will need to consider the semigroup on the somewhat more complicated augmented space $\mc{W}_0^1(M)$, given by Defn.\ \ref{defn:augmented} with  exponent $1$, $X=\Z^d\times \Z^d$, and Banach space
\begin{equation}\label{eq:W0Zd}
\mc{W}_0(\Z^d \times \Z^d) \ := \ \setb{f}{ \norm{f}_{\mc{W}} \ < \ \infty \ \& \ \lim_{x\rightarrow \infty}\sum_{\zeta\in \Z^d} \abs{f(x-\zeta,-\zeta)} \ = \ 0},
\end{equation} 
where
\begin{equation}\label{eq:Wnorm}
\norm{f}_{\mc{W}} \ := \ \sup_{x\in \Z^d} \sum_{\zeta\in \Z^d} \abs{f(x-\zeta,-\zeta)}.
\end{equation}
The norm on $\mc{W}_0^1(M)$  is (see Prop.\ \ref{prop:augmented}):
\begin{equation}\label{eq:WZdnorm}
\norm{F}_{\mc{W}^1(M)} \ := \ \sup_{x \in \Z^d}\sum_{\zeta \in \Z^d} \int_{\Alpha\times \Omega}  \abs{F(a,\omega,x+\zeta,\zeta)} \mu(\di a,\di \omega).
\end{equation}
We also introduce \begin{equation}\label{eq:WM}\mc{W}^1(\mc{M}) =  \setb{F:\Alpha\times \Omega \times \Z^d \times \Z^d \rightarrow \bb{C}}{ \norm{F}_{\mc{W}(M)} <\infty  },
 \end{equation}
which is the space given by Defn.\ \ref{defn:augmented} with exponent $1$, $X=\Z^d \times \Z^d$, and Banach space
\begin{equation}\label{eq:WZd}
\mc{W}(\Z^d \times \Z^d) \ := \ \setb{f}{\norm{f}_{\mc{W}(\Z^d \times \Z^d)} <\infty}.  
\end{equation}
Note that $\mc{W}_0^1(M) \subset \mc{W}^1(M)$; in fact
\begin{equation}\label{eq:W0}
\mc{W}_0^1(M) \ = \ \setb{F\in \mc{W}^1(M)}{\lim_{x\rightarrow\infty} \sum_{\zeta \in \Z^d} \int_{\Alpha\times \Omega}  \abs{F_t(a,\omega,x-\zeta,-\zeta)} \mu(\di a,\di \omega) \ = \ 0 }. \end{equation}
Also $L^1(\Alpha \times \Omega;\mc{W}_0(\Z^d \times \Z^d)) \subsetneq \mc{W}_0^1(M)$ and $L^1(\Alpha \times \Omega;\mc{W}(\Z^d \times \Z^d)) \subsetneq \mc{W}^1(M)$.
The significance of $\mc{W}_0^1(\mc{M})$ lies in the following
\begin{lem}\label{lem:whyW} Let $\psi_t$ be a solution to eq.\ \eqref{eq:genSE} with $\psi_0\in \ell^2(\Z^d)$ and let
$$F_t(a,\omega,x,y) \ := \ \Evac{\psi_t(x)  \overline{\psi_t(y)}}{\alpha(t)=a }.$$
Then, for each $t>0$,  
$$\norm{F_t}_{\mc{W}^1(M)} \ \le \ \norm{\psi_0}_{\ell^2(\Z^d)}^2$$
and $F_t \in \mc{W}_0^1(\mc{M})$.
\end{lem}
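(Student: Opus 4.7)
The plan is to bound $\|F_t\|_{\mathcal{W}^1(M)}$ and prove its decay by combining Jensen's inequality for the conditional expectation with the unitarity of the Schrödinger evolution (for each fixed realization of $\omega$ and of the path $s \mapsto \alpha(s)$) and a Cauchy--Schwarz / autocorrelation argument on $\ell^2(\Z^d)$. The main input is that for each fixed $\omega$ and each path, equation \eqref{eq:genSE} is driven by the bounded self-adjoint operator $H_\omega + g V_{\alpha(t),\omega}$, so the evolution is $\ell^2$-norm preserving almost surely and $\norm{\psi_t}_{\ell^2(\Z^d)} = \norm{\psi_0}_{\ell^2(\Z^d)}$ almost surely.

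\emph{Norm bound.} Applying Jensen's inequality for the conditional expectation $\Evac{\cdot}{\alpha(t) = a}$ and then the tower property (using invariance of $\mu_\Alpha$, Assumption \ref{ass:dynamics}(4)),
\[
\int_{\Alpha\times\Omega} \abs{F_t(a,\omega, x+\zeta, \zeta)}\,\mu(\di a,\di \omega)
\le \Ev{\,\abs{\psi_t(x+\zeta)}\,\abs{\psi_t(\zeta)}\,}.
\]
Changing variables $y = x+\zeta$, summing in $y$, and applying Cauchy--Schwarz in $\ell^2(\Z^d)$ pathwise,
\[
\sum_{y \in \Z^d} \abs{\psi_t(y)}\,\abs{\psi_t(y-x)} \ \le\ \norm{\psi_t}_{\ell^2}^2 \ =\ \norm{\psi_0}_{\ell^2}^2 \quad \text{a.s.},
\]
so interchanging the sum and the expectation (by Tonelli) and taking the supremum over $x$ yields $\norm{F_t}_{\mathcal{W}^1(M)} \le \norm{\psi_0}_{\ell^2}^2$.

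\emph{Decay at infinity.} By the characterization in \eqref{eq:W0}, it suffices to show that the pre-supremum quantity in the last display tends to zero as $\abs{x} \to \infty$. For each fixed realization, set $g_t(x) := \sum_y \abs{\psi_t(y)}\,\abs{\psi_t(y-x)}$. Since $\psi_t \in \ell^2(\Z^d)$, for any $\varepsilon > 0$ we may choose $R$ with $\sum_{\abs{y} > R} \abs{\psi_t(y)}^2 < \varepsilon^2$. Splitting the sum according to whether $\abs{y} \le R$ or $\abs{y} > R$ and using Cauchy--Schwarz on each piece (noting that for $\abs{x} > 2R$ and $\abs{y} \le R$ we have $\abs{y-x} > R$) gives
\[
g_t(x) \ \le\ 2\,\norm{\psi_t}_{\ell^2}\left(\sum_{\abs{y} > R} \abs{\psi_t(y)}^2\right)^{\!\nicefrac{1}{2}} \quad \text{for } \abs{x} > 2R,
\]
which shows $g_t(x) \to 0$ as $\abs{x} \to \infty$ pathwise. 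Since $g_t(x) \le \norm{\psi_0}_{\ell^2}^2$ deterministically, the dominated convergence theorem gives $\Ev{g_t(x)} \to 0$, completing the verification that $F_t \in \mathcal{W}_0^1(M)$.

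I do not anticipate a serious obstacle; the only subtlety is tracking the three layers of averaging ($\omega$, the Markov path, and its final-time conditional) and invoking Jensen in the right place. Once we reduce to controlling $\Ev{g_t(x)}$, the argument is the familiar observation that the autocorrelation of an $\ell^2$ sequence vanishes at infinity, made quantitative via an $R$-truncation.
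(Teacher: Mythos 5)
Your proposal is correct and follows essentially the same route as the paper: reduce to $\Ev{\sum_\zeta|\psi_t(x+\zeta)||\psi_t(\zeta)|}$, bound the pathwise autocorrelation by $\norm{\psi_0}_{\ell^2}^2$ via Cauchy--Schwarz and unitarity, and invoke dominated convergence for the decay. The only difference is that you supply details the paper leaves implicit --- the Jensen/tower step and the $R$-truncation argument showing the pathwise autocorrelation vanishes at infinity --- both of which are accurate.
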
 
\begin{proof}
Note that
$$\sum_{\zeta \in \Z^d} \int_{\Alpha\times \Omega}  \abs{F_t(a,\omega,x-\zeta,-\zeta)} \mu(\di a,\di \omega) \ \le \ \Ev{\sum_{\zeta} \abs{\psi_t(x-\zeta)}\abs{\psi_t(-\zeta)}}.$$
By Cauchy-Schwarz, $\sum_\zeta \abs{\psi_t(x-\zeta)}\abs{\psi_t(-\zeta)}$ is bounded by $\norm{\psi_t}^2=\norm{\psi_0}^2$. This gives the norm estimate and, by dominated convergence, the vanishing of the limit as $x\rightarrow \infty$.
\end{proof}

Regarding the semigroup on $\e^{-t\mc{L}}$ on $\mc{W}_0^1(M)$ we have the following
\begin{lem}\label{lem:KVonW} \begin{inenum} \item The operators $\mc{K}$, $\mc{U}$ and $\mc{V}$ are  bounded on $ \mc{W}^1(M)$ and map $ \mc{W}_0^1(M)$ into itself. \item The semigroup $\e^{-t\mc{L}}$ is exponentially bounded on $\mc{W}^1(M)$ and maps $\mc{W}^1_0(M)$ into itself.
\end{inenum}
\end{lem}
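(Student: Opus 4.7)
For part (i), the operators $\mc{U}$ and $\mc{V}$ act as pointwise multiplication by functions of modulus at most $2\|u\|_{L^\infty(\Omega)}$ and $2\|v\|_{L^\infty(\Alpha\times\Omega)}=2$, respectively, so the pointwise inequalities $N_{\mc{U}F}\le 2\|u\|_\infty N_F$ and $N_{\mc{V}F}\le 2N_F$ together with property (1) of the norm on $\mc{W}(\Z^d\times\Z^d)$ immediately yield boundedness on $\mc{W}^1(M)$ and preservation of $\mc{W}_0^1(M)$. For $\mc{K}$, I would apply the triangle inequality to eq.\ \eqref{eq:Kform} and translate the $\zeta$ index separately in each of the two resulting sums (by $\zeta\mapsto \zeta+\xi$ and $\zeta\mapsto \zeta-\xi$), recognizing the $\xi$-th summand as the base quantity defining $\|F\|_{\mc{W}^1}$ evaluated at a point shifted by $\mp\xi$; invoking Assumption \ref{ass:H0} to control $\sum_{\xi\ne 0}|h(\xi)|\le \sum_\xi |\xi||h(\xi)|<\infty$ then yields $\|\mc{K}F\|_{\mc{W}^1(M)}\le 2\|h\|_{\ell^1}\|F\|_{\mc{W}^1(M)}$. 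The same change of variables bounds the tail sum characterizing $\mc{W}_0^1(M)$ by $\sum_\xi|h(\xi)|$ times shifted versions of that sum for $F$; each $\xi$-th term vanishes as $x\to\infty$ since $F\in\mc{W}_0^1(M)$, and dominated convergence with summable majorant $2|h(\xi)|\|F\|_{\mc{W}^1}$ closes the argument.

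For part (ii), my starting observation is that both $\mc{W}(\Z^d\times\Z^d)$ and $\mc{W}_0(\Z^d\times\Z^d)$ satisfy the conditions of Definition \ref{defn:augmented}: condition (1) is immediate because $\|f\|_\mc{W}$ depends only on $|f|$, and continuity of point evaluation at $(x,y)$ follows by choosing $\zeta=-y$ in the defining sum, which gives $|f(x,y)|\le\|f\|_\mc{W}$. Proposition \ref{prop:sgext} then yields that $\e^{-tB}$ is a contraction on both $\mc{W}^1(M)$ and $\mc{W}_0^1(M)$. Since $\im(\mc{K}+\mc{U}+g\mc{V})$ is a bounded operator on $\mc{W}^1(M)$ by part (i), I would construct $\e^{-t\mc{L}}$ on $\mc{W}^1(M)$ via the Dyson series obtained by expanding around $\e^{-tB}$; the operator-norm bound on the perturbation then produces $\|\e^{-t\mc{L}}\|_{\mc{W}^1(M)\to\mc{W}^1(M)}\le \e^{tC}$ with $C=\|\mc{K}+\mc{U}+g\mc{V}\|_{\mc{W}^1(M)\to\mc{W}^1(M)}$, and consistency with the $L^2(M)$ semigroup from Lemma \ref{lem:pillet} holds on the dense common subspace of finitely supported elements.

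Preservation of $\mc{W}_0^1(M)$ under $\e^{-t\mc{L}}$ then follows from the observation that each term of the Dyson series is a composition of $\e^{-sB}$ (which maps $\mc{W}_0^1(M)$ into itself by Proposition \ref{prop:sgext} applied with the augmented space $\mc{W}_0^1$) with factors of $\mc{K}$, $\mc{U}$, $\mc{V}$ (which map $\mc{W}_0^1(M)$ into itself by part (i)); hence each term lies in $\mc{W}_0^1(M)$, and since $\mc{W}_0^1(M)$ is norm-closed in $\mc{W}^1(M)$, so does the series limit. The main technical ingredient is the change-of-variables argument for $\mc{K}$ in part (i); once that is in hand, everything in part (ii) is a formal consequence of the augmented-space framework of Proposition \ref{prop:sgext} combined with bounded perturbation of semigroups.
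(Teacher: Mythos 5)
Your proof is correct and follows essentially the same strategy as the paper: the change-of-variables argument for $\mc{K}$, the pointwise multiplication bounds for $\mc{U}$ and $\mc{V}$, Proposition \ref{prop:sgext} for the lifted Markov semigroup, and factorwise preservation of $\mc{W}_0^1(M)$ followed by passage to a limit in a norm-closed subspace. The only technical difference is that you construct $\e^{-t\mc{L}}$ in part (ii) via the Dyson/Duhamel series for the bounded perturbation $\im(\mc{K}+\mc{U}+g\mc{V})$ of $\e^{-tB}$, whereas the paper uses the Lie--Trotter product formula; both are standard and rest on exactly the same bounds established in part (i).
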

\begin{proof} First note that\begin{multline*}\sum_{\zeta} \int_{A \times \Omega} \abs{\mc{V}F(a,\omega,x+\zeta,\zeta)} \mu(\di a,\di \omega) \\ = \  \sum_{\zeta} \int_{A \times \Omega} \abs{v(\sigma_{x+\zeta}a,\sigma_{x+\zeta} \omega) - v(\sigma_\zeta a,\sigma_\zeta \omega)}\abs{F(a,\omega,x+\zeta,\zeta)} \mu(\di a,\di \omega)  \\
\le 2 \sum_{\zeta} \int_{A \times \Omega} \abs{F(a,\omega,x+\zeta,\zeta)} \mu(\di a,\di \omega) .
\end{multline*}
Thus $\norm{\mc{V}}_{\mc{W}^1(M)} \le 2$ and $\mc{V}$ maps $\mc{W}_0^1(M)$ into itself. Similarly,
$$\int_{A \times \Omega} \abs{\mc{U}F(a,\omega,x+\zeta,\zeta)} \mu(\di a,\di \omega) \ \le \ 2 \norm{u}_{L^\infty(\Omega)} \sum_{\zeta} \int_{A \times \Omega} \abs{F(a,\omega,x+\zeta,\zeta)} \mu(\di a,\di \omega),$$
so $\norm{\mc{U}}_{\mc{W}^1(M)} < \infty$ and $\mc{U}$ maps $\mc{W}_0^1(M)$ into itself.

The calculation for $\mc{K}$ is only slightly more involved.  We have,
\begin{multline*}\sum_{\zeta} \int_{A \times \Omega} \abs{\mc{K}F(a,\omega,x+\zeta,\zeta)} \mu(\di a,\di \omega) \\  \begin{aligned} \le &\sum_{\zeta,\xi\neq 0} \int_{A \times \Omega} \abs{h(\xi)} \Big [ \abs{F(a,\omega,x+\zeta-\xi,\zeta )}   +  \abs{F(a,\omega,x+\zeta,\zeta-\xi )}\Big ] \mu(\di a,\di \omega) \\
=& \ 2 \sum_{\xi\neq 0}   \abs{h(\xi)}  \sum_{\zeta} \int_{A \times \Omega} \abs{F(a,\omega,x-\xi+\zeta,\zeta)}\mu(\di a,\di \omega).
\end{aligned}
\end{multline*}
It follows that
$$\norm{\mc{K}}_{\mc{W}^1(M)} \ \le \ 2 \sum_{\xi \neq 0} \abs{h(\xi)}  <\infty$$
and also that $\mc{K}$ maps $\mc{W}_0^1(M)$ into itself. 

Since $\mc{K}$, $\mc{U}$ and $\mc{V}$ are bounded, 
$$\norm{\e^{-\im t (\mc{K} + \mc{U}+ g \mc{V} )}}_{\mc{W}^1(M)} \ \le \ \e^{Ct}.$$
By the Lie-Trotter formula \cite{Trotter1959}, 
$$\e^{-t\mc{L}} F \ = \ \lim_{n\rightarrow \infty} \left ( \e^{-\frac{t}{n} \im (\mc{K} + \mc{U} + g \mc{V} )} \e^{-\frac{t}{n} B} \right )^n F$$
and thus
\begin{multline*}\norm{\e^{-t\mc{L}} }_{\mc{W}^1(M)} \ \le \ \limsup_{n\rightarrow \infty} \norm{\left ( \e^{-\frac{t}{n} \im (\mc{K} + \mc{U} + g \mc{V} )} \e^{-\frac{t}{n} B} \right )^n }_{\mc{W}^1(M)} \\ \le \ \limsup_{n\rightarrow \infty} \norm{ \e^{-\frac{t}{n} \im (\mc{K} + \mc{U}+ g \mc{V} )}  }^n_{\mc{W}^1(M)}  \norm{ \e^{-\frac{t}{n} B} }^n_{\mc{W}^1(M)} \ \le \  \e^{Ct},
\end{multline*}
since $\e^{-tB}$ is a contraction.  Furthermore, if $F\in \mc{W}^1_0(M)$ we have $$\left ( \e^{-\frac{t}{n} \im (\mc{K} + \mc{U}+ g \mc{V} )} \e^{-\frac{t}{n} B} \right )^n F\in \mc{W}^1_0(M)$$ and thus $\e^{-t\mc{L}}F \in \mc{W}^1_0(M)$ in the large $n$ limit.
\end{proof}

\subsection{Fourier Analysis on $M$}\label{sec:Fourier}  The strength of the augmented space approach lies in the fact that distributional invariance of the stochastic equation \eqref{eq:genSE} under translations yields an operator symmetry for $\mc{L}$, namely a group $\{ T_\zeta\}_{\zeta \in \Z^d} $ of unitary maps on $L^2(M)$ that commute with $\mc{L}$. For each $\zeta \in \Z^d$, let
\begin{equation}\label{eq:Tzeta}T_{\zeta}F(a,\omega,x,y) \ = \ F(\sigma_\zeta a,\sigma_\zeta \omega,x-\zeta,y-\zeta)
\end{equation}
for any function $F$ defined on $M$.  

\begin{prop}  The map $\zeta \mapsto T_\zeta$ is a representation of the additive group $\Z^d$ and, for each $\zeta$,
$$ \norm{T_\zeta F}_{L^2(M)} \ = \ \norm{F}_{L^2(M)} , \quad \norm{T_\zeta F}_{\mc{W}^1(M)} \ = \ \norm{F}_{\mc{W}^1(M)},$$
and $T_{\zeta}$ maps $\mc{W}_0^1(M)$ onto itself. In particular $\zeta \mapsto \left . T_\zeta \right |_{\mc{H}}$  is a unitary representation of $\Z^d$.
\end{prop}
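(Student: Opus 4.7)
The plan is to verify each of the four claims \tem the representation property, the $L^2(M)$ isometry, the $\mc{W}^1(M)$ isometry, and the preservation of $\mc{W}_0^1(M)$ \tem directly from the definition \eqref{eq:Tzeta}, using only the group law $\sigma_\zeta \circ \sigma_\xi = \sigma_{\zeta+\xi}$ and the $\mu_\Alpha$- and $\mu_\Omega$-measure-preservation of $\sigma_\zeta$ granted by Assumption \ref{ass:spaces}. The unitarity of each $T_\zeta$ on $\mc{H}$ then follows automatically: isometry plus a two-sided inverse $T_{-\zeta}$ forces bijectivity.

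For the representation property I would apply \eqref{eq:Tzeta} twice. The composition $T_\zeta T_\xi$ sends $F$ to the function $(a,\omega,x,y) \mapsto F(\sigma_\xi\sigma_\zeta a,\sigma_\xi\sigma_\zeta \omega, x-\zeta-\xi, y-\zeta-\xi)$, which by the group law equals $T_{\zeta+\xi} F$; and $T_0 = I$ by inspection. For the $L^2$ isometry I would combine three changes of variable inside $\|T_\zeta F\|_{L^2(M)}^2$: the lattice shift $(x,y) \mapsto (x+\zeta, y+\zeta)$, which preserves counting measure on $\Z^d\times \Z^d$, together with $a\mapsto \sigma_{-\zeta} a$ on $\Alpha$ and $\omega \mapsto \sigma_{-\zeta}\omega$ on $\Omega$, which preserve $\mu_\Alpha$ and $\mu_\Omega$ respectively. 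These three substitutions collectively transform $\|T_\zeta F\|_{L^2(M)}^2$ into $\|F\|_{L^2(M)}^2$.

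The $\mc{W}^1(M)$ case is the only place requiring a little bookkeeping. Starting from
$$\|T_\xi F\|_{\mc{W}^1(M)} = \sup_x \sum_\zeta \int_{\Alpha\times\Omega}|F(\sigma_\xi a,\sigma_\xi \omega, x+\zeta-\xi, \zeta-\xi)|\mu(\di a,\di \omega),$$
I would reindex by $\eta := \zeta - \xi$ to absorb the shift in the second argument and then invoke measure-preservation of $\sigma_\xi$ on $\Alpha \times \Omega$ to recover $\|F\|_{\mc{W}^1(M)}$. The same substitution applied to the tail $\sum_\zeta\int |T_\xi F(a,\omega,x+\zeta,\zeta)|\mu(\di a,\di\omega)$ shows it equals $\sum_\eta\int|F(a,\omega, x+\eta, \eta)|\mu(\di a,\di \omega)$, so its limit as $x\to\infty$ vanishes for $T_\xi F$ precisely when it vanishes for $F$; hence $T_\xi$ preserves $\mc{W}_0^1(M)$. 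Since $T_{-\xi}$ shares the same property and is a two-sided inverse of $T_\xi$, the map is onto.

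There is no genuine obstacle. The only mild nuisance I foresee is reconciling sign conventions: \eqref{eq:WZdnorm} indexes the $\mc{W}^1$-norm by $(x+\zeta,\zeta)$, whereas \eqref{eq:W0Zd} uses $(x-\zeta,-\zeta)$. These differ only by the trivial involution $\zeta\mapsto -\zeta$, so the reindexing $\eta = \zeta - \xi$ handles both uniformly. Beyond this notational matching, the proposition is a straightforward transport of the measure-preserving $\Z^d$-action on $\Alpha\times\Omega$ to the augmented setting.
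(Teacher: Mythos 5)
Your proposal is correct and follows essentially the same route as the paper: the isometry claims are just the invariance of the product measure $m$ under $(a,\omega,x,y)\mapsto(\sigma_\zeta a,\sigma_\zeta\omega,x-\zeta,y-\zeta)$, unwound via the changes of variable you describe, and the representation and surjectivity claims are immediate from the group law for $\sigma_\zeta$. Your handling of the $(x+\zeta,\zeta)$ versus $(x-\zeta,-\zeta)$ indexing is a fair observation but, as you note, entirely cosmetic.
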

\begin{proof}That $\zeta \mapsto T_\zeta$  represents $\Z^d$ is clear from the definition. The identity for $L^2$ and $\mc{W}^1$ norms simply expresses the invariance of the measure $m$ under the maps $(a,\omega,x,y) \mapsto (\sigma_\zeta a,\sigma_\zeta \omega, x-\zeta,y-\zeta)$.
\end{proof}
\begin{lem}\label{lem:vanishingcom}  For every $\zeta \in \Z^d$,
$$\com{T_\zeta}{\mc{K}} \ = \ \com{T_\zeta}{\mc{U}} \ = \  \com{T_\zeta}{\mc{V}} \ = \ \com{T_\zeta}{B} \ = \ 0$$
\end{lem}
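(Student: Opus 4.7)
The strategy is to verify each of the four commutation relations directly, exploiting the explicit formulas for $\mc{K}$, $\mc{U}$, $\mc{V}$ from Lemma above and the shift‐invariance built into Assumptions \ref{ass:spaces} and \ref{ass:dynamics}. The key algebraic identity that will be used throughout is the cocycle property $\sigma_{x}\circ\sigma_{\zeta}=\sigma_{x+\zeta}$ for the shifts on $\Alpha$ and on $\Omega$.

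First, for $\mc{K}$: I would apply \eqref{eq:Kform} to $T_\zeta F$ and simply note that the hopping kernel $h(\xi)$ depends only on the displacement variable $\xi$ and not on $(a,\omega)$ or absolute position. Relabeling gives
\[
\mc{K}(T_\zeta F)(a,\omega,x,y) \;=\; \sum_{\xi\neq 0} h(\xi)\bigl[F(\sigma_\zeta a,\sigma_\zeta \omega, x-\zeta-\xi, y-\zeta) - F(\sigma_\zeta a,\sigma_\zeta \omega, x-\zeta, y-\zeta+\xi)\bigr],
\]
which equals $(T_\zeta \mc{K} F)(a,\omega,x,y)$ by inspection.

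Next, for $\mc{U}$ and $\mc{V}$: I would plug into \eqref{eq:Uform} and \eqref{eq:Vform}. For $\mc{U}$,
\[
\mc{U}(T_\zeta F)(a,\omega,x,y) \;=\; \bigl[u(\sigma_x\omega)-u(\sigma_y\omega)\bigr]F(\sigma_\zeta a,\sigma_\zeta\omega,x-\zeta,y-\zeta),
\]
while
\[
T_\zeta(\mc{U} F)(a,\omega,x,y) \;=\; \bigl[u(\sigma_{x-\zeta}\sigma_\zeta\omega)-u(\sigma_{y-\zeta}\sigma_\zeta\omega)\bigr]F(\sigma_\zeta a,\sigma_\zeta\omega,x-\zeta,y-\zeta),
\]
and the two agree by the cocycle identity $\sigma_{x-\zeta}\sigma_\zeta = \sigma_x$. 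The computation for $\mc{V}$ is identical with $u(\sigma_x\omega)$ replaced by $v(\sigma_x a,\sigma_x\omega)$.

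The one nontrivial case is $[T_\zeta,B]=0$, since $B$ has no explicit pointwise formula. The plan is to prove $[T_\zeta,e^{-tB}]=0$ on $L^2(M)$ for every $t\ge 0$ and then differentiate at $t=0$. Using the defining formula \eqref{eq:sgext} for the lifted semigroup,
\[
(e^{-tB} T_\zeta F)(a,\omega,x,y) \;=\; \Evac{F(\sigma_\zeta\alpha(0),\sigma_\zeta\omega,x-\zeta,y-\zeta)}{\alpha(t)=a},
\]
whereas
\[
(T_\zeta e^{-tB} F)(a,\omega,x,y) \;=\; \Evac{F(\alpha(0),\sigma_\zeta\omega,x-\zeta,y-\zeta)}{\alpha(t)=\sigma_\zeta a}.
\]
The equality of these two expressions follows from the shift‐invariance clause of Assumption \ref{ass:dynamics}, which says $\bb{P}_{\sigma_\zeta a}=\bb{P}_a\circ\mc{S}_\zeta^{-1}$, together with the invariance of $\mu_\Alpha$: the joint law of $(\alpha(0),\alpha(t))$ under $\bb{E}_\Alpha$ is invariant under the diagonal shift $(\sigma_\zeta,\sigma_\zeta)$, so conditioning on $\alpha(t)=\sigma_\zeta a$ is the same as conditioning on $\alpha(t)=a$ after shifting $\alpha(0)$ by $\sigma_\zeta$. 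With $[T_\zeta,e^{-tB}]=0$ established, the identity $B = \lim_{t\downarrow 0}t^{-1}(\operatorname{id}-e^{-tB})$ from \eqref{eq:Bdefn} together with the fact that $T_\zeta$ is bounded shows that $T_\zeta$ preserves $\mc{D}(B)$ and commutes with $B$ there. The main (very mild) obstacle is making the identification of conditional expectations in this last step precise, which is essentially bookkeeping with the shift‐invariant joint distribution on $\Alpha\times\Alpha$ reviewed in Appendix \ref{sec:conditioning}.
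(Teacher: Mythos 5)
Your proposal is correct and follows essentially the same route as the paper's own (rather terse) proof: the paper simply asserts that $T_\zeta$ is a simultaneous shift of configuration space and environment, remarks that for $\mc{K}$, $\mc{U}$, $\mc{V}$ the commutation can also be verified from the explicit formulas \eqref{eq:Kform}, \eqref{eq:Uform}, \eqref{eq:Vform}, and that for $B$ it follows from the shift invariance of the Markov process. Your write-up is a careful execution of exactly those three checks, with the $B$ case handled by showing $[T_\zeta, e^{-tB}]=0$ at the semigroup level and then passing to the generator, which is the same mechanism the paper deploys later in Lemma \ref{lem:semigFourier}.
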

\begin{proof} This follows from the fact that $T_\zeta$ is a simultaneous shift of configuration space and the random environment, which is a manifest symmetry of the assumptions made above.  For $\mc{K}$, $\mc{U}$ and $\mc{V}$ the vanishing of commutators can also be seen from explicit computation, using eqs.\ \eqref{eq:Kform}, \eqref{eq:Uform} and \eqref{eq:Vform}. For $B$ it follows from the assumed shift invariance of the Markov process. \end{proof}

Because of Lem.\ \ref{lem:vanishingcom}, a suitable generalized Fourier transform will give a fibre decomposition of the various operators $\mc{K}$, $\mc{U}$, $\mc{V}$ and $B$.  Initially we define this Fourier transform on the augmented space $\mc{W}^1(M)$. Let $\bb{T}^d = [0,2\pi)^d$ denote the $d$-torus and \begin{equation}\label{eq:hatMdefn}
\wh{M} \ := \ \Alpha \times \Omega \times \Z^d . 
\end{equation} Given $F \in \mc{W}^1(M)$ and $\vec{k}\in \bb{T}^d$, \emph{the Fourier transform of $F$ at $\vec{k}$} is defined to be the following map $\wh{F}_{\vec{k}}:\wh{M} \rightarrow \bb{C}$:
\begin{equation}\label{eq:Fourier}
\wh{F}_{\vec{k}} (a,\omega,x) \ := \ \sum_{\zeta \in \Z^d} \e^{\im \vec{k}\cdot \zeta} T_\zeta F(a,\omega,x,0) \ = \ \sum_{\zeta \in \Z^d} \e^{\im \vec{k} \cdot \zeta} F(\sigma_{\zeta}a,\sigma_\zeta \omega, x-\zeta,-\zeta).
\end{equation}
The basic results of Fourier analysis are extended to this generalized Fourier transform in the following 
\begin{prop}\label{prop:Fourier} 
\begin{enumerate}
\item If $F\in \mc{W}^1(M)$, then   
\begin{equation}\label{eq:WtoWFourier}\| \wh{F}_{\vec{k}}\|_{\ell^{\infty;1}(\wh{M})}  
\ \le \ \norm{F}_{\mc{W}^1(M)}\quad \text{for each }\vec{k}\in \bb{T}^d,
\end{equation}
and
$\vec{k} \mapsto \wh{F}_{\vec{k}}$ is a continuous map from $\bb{T}^d$ into $\ell^{\infty;1}(\wh{M})$.

\item If $F \in \mc{W}_0^1(M)$,  then $\wh{F}_{\vec{k}}\in c_0^1(\wh{M})$ for each $\vec{k}\in \bb{T}^d$.

\item If $F\in \mc{W}^1(M) \cap L^2(M)$ then  
\begin{equation}\label{eq:plancherel} \sum_{x \in \Z^d} \int_{\bb{T}^d} \int_{\Alpha \times \Omega}  \abs{\wh{F}_{\vec{k}} (a,\omega,x)}^2 \mu(\di a, \di \omega) \nu(\di \vec{k})  \ = \ \norm{F}_{L^2}^2,
\end{equation}
where $\nu$ denotes normalized Haar measure on $\bb{T}^d$. Thus, the map $F \mapsto \wh{F}_{\bullet}$ extends to a unitary map from $\mc{H}=L^2(M)$ to $L^2(\wh{M} \times \bb{T}^d)$.
\end{enumerate}
\end{prop}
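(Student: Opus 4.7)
The plan is to exploit two ingredients repeatedly: the $\sigma_\zeta$-invariance of the product measure $\mu=\mu_\Alpha\otimes\mu_\Omega$ (Assumption~\ref{ass:spaces}) and the character orthogonality $\int_{\bb{T}^d} \e^{\im \vec k \cdot \zeta}\nu(\di\vec k) = \delta_{\zeta,0}$. With these in hand each of the three items reduces to a direct computation starting from the defining sum \eqref{eq:Fourier}.

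For part (1), I apply the triangle inequality to \eqref{eq:Fourier} to obtain the pointwise envelope
\begin{equation*}
\abs{\wh F_{\vec k}(a,\omega,x)} \ \le \ \sum_{\zeta\in \Z^d} \abs{F(\sigma_\zeta a,\sigma_\zeta\omega,x-\zeta,-\zeta)}.
\end{equation*}
Integrating over $(a,\omega)$ against $\mu$ and applying $\sigma_\zeta$-invariance eliminates the twist in the arguments, producing $\sum_\zeta \int \abs{F(a,\omega,x-\zeta,-\zeta)}\mu(\di a,\di\omega)$; taking the supremum over $x\in \Z^d$ and reindexing $\zeta\to -\zeta$ recognizes this as $\norm{F}_{\mc W^1(M)}$ via \eqref{eq:WZdnorm}, giving \eqref{eq:WtoWFourier}. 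Continuity of $\vec k\mapsto \wh F_{\vec k}$ follows by dominated convergence applied to each fixed $x$ with the absolutely summable dominant just derived, together with an approximation step truncating $F$ to finitely many $\zeta$ (on such $F$ continuity is immediate because the sum is finite) and using \eqref{eq:WtoWFourier} to extend. Part (2) falls out immediately from the same envelope: the right-hand side tends to $0$ as $x\to\infty$ exactly by the defining condition \eqref{eq:W0} of $\mc{W}_0^1(M)$, so $\wh F_{\vec k}\in c_0^1(\wh M)$.

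For part (3), I square the modulus in \eqref{eq:Fourier}, writing it as a double sum over $\zeta,\zeta'$, and integrate over $\vec k \in \bb T^d$. Character orthogonality collapses the double sum to its diagonal, producing
\begin{equation*}
\int_{\bb{T}^d} \abs{\wh F_{\vec k}(a,\omega,x)}^2 \nu(\di\vec k) \ = \ \sum_{\zeta\in \Z^d} \abs{F(\sigma_\zeta a,\sigma_\zeta\omega,x-\zeta,-\zeta)}^2.
\end{equation*}
Integrating against $\mu$, applying $\sigma_\zeta$-invariance termwise, summing over $x$, and performing the change of variables $y=x-\zeta$, $z=-\zeta$ yields $\sum_{y,z}\int \abs{F(a,\omega,y,z)}^2\mu(\di a,\di \omega)=\norm{F}_{L^2(M)}^2$, which is \eqref{eq:plancherel}. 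All interchanges of sum and integral are justified by Tonelli on the nonnegative integrands, finite under the $L^2$ hypothesis. Density of $\mc W^1(M)\cap L^2(M)$ in $L^2(M)$ (e.g.\ using functions supported on a finite set of $(x,y)\in\Z^d\times\Z^d$, which trivially lie in both spaces) then yields an isometric extension to $L^2(M)$, and surjectivity onto $L^2(\wh M\times\bb T^d)$ follows from the explicit inverse $F(a,\omega,x,y)=\int_{\bb T^d}\e^{\im \vec k\cdot y}\wh F_{\vec k}(\sigma_y a,\sigma_y\omega,x-y)\nu(\di\vec k)$, which is easily verified on the dense subspace.

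The main obstacle is less in any single computation \tem\ each is routine \tem\ than in the careful density/isometry extension for part (3): one must simultaneously identify a dense subspace on which both the direct and inverse formulas make pointwise sense, show the composition is the identity there, and then invoke the isometry to pass to arbitrary $L^2$ data. Once this is done, the unitarity in (3) follows and the earlier bookkeeping in (1) and (2) transfers the structure to the augmented spaces $\mc W^1(M)$ and $\mc W_0^1(M)$.
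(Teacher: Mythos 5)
Your proof follows the same route as the paper: the triangle inequality plus $\sigma_\zeta$-invariance of $\mu$ gives \eqref{eq:WtoWFourier} and the $c_0^1$ membership, dominated convergence gives continuity in $\vec{k}$, and classical Fourier orthogonality on $\bb{T}^d$ collapsed over the diagonal gives the Plancherel identity. The only addition is that you spell out the density-plus-explicit-inverse argument for unitarity of the extension, which the paper leaves implicit after establishing the isometry on $\mc{W}^1(M)\cap L^2(M)$; this is a correct and harmless supplement.
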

\begin{rem*} The space $\ell^{\infty;1}(\wh{M})$ is the augmented space with exponent $1$ and Banach space $\ell^\infty(\Z^d)$, i.e.,
\begin{equation}\label{eq:WhatM}
\ell^{\infty;1}(\wh{M}) \ := \ \setb{f:\wh{M} \rightarrow \bb{C}}{\sup_{x\in \Z^d} \int_{\Alpha \times \Omega} \abs{f(a,\omega,x)} \mu(\di a, \di \omega) < \infty}.
\end{equation}
Similarly, $c_0^1(\wh{M})$ has exponent $1$ and Banach space $c_0(\Z^d)$,
\begin{equation}\label{eq:What0M}
c_0^1(\wh{M}) \ := \ \setb{f\in \wc{W}(\wh{M}) }{\lim_{x\rightarrow \infty} \int_{\Alpha \times \Omega} \abs{f(a,\omega,x)} \mu(\di a, \di\omega) = 0}.
\end{equation}
By Prop.\ \ref{prop:augmented}, these are each Banach spaces with the norm
\begin{equation}\label{eq:WhatMnorm}
\norm{f}_{\ell^{\infty;1}(\wh{M})} \ := \ \sup_{x\in \Z^d} \int_{\Alpha \times \Omega} \abs{f(a,\omega,x)} \mu(\di a, \di \omega).
\end{equation}
Note that $c_0^1(\wh{M}) \subset \ell^{\infty;1}(\wh{M})$
\end{rem*}
\begin{proof}
The estimate \eqref{eq:WtoWFourier} and the implication $F \in \mc{W}_0^1(M)\implies \wh{F}_{\vec{k}}\in c_0^1(\wh{M})$  follow from the inequality
\begin{multline*}\int \abs{\wh{F}_{\vec{k}}(a,\omega,x)}
\mu(\di a,\di\omega) \ \le \  
\sum_{\zeta } \int \abs{ F(\sigma_{\zeta}a,\sigma_\zeta \omega, x-\zeta,-\zeta)} \mu(\di a,\di\omega) \\
= \ \sum_{\zeta } \int \abs{F(a,\omega,x+\zeta,\zeta )} \mu(\di a,\di\omega) ,
\end{multline*} 
where in the last step we have used the shift invariance of the measure $\mu$. Continuity of the map $\vec{k}\mapsto \wh{F}_{\vec{k}}$ follows from this bound and dominated convergence.

By unitarity of the usual Fourier transform, we have
\begin{multline*}
\int_{\bb{T}^d} \abs{\wh{F}_{\vec{k}} (a,\omega,x)}^2 \nu(\di \vec{k}) \ = \ \int_{\bb{T}^d} \bigg | \sum_{\zeta} \e^{\im \vec{k}\cdot \zeta}  F(\sigma_{\zeta} a,\sigma_{\zeta} \omega,x-\zeta,-\zeta ) \bigg |^2 \nu(\di \vec{k}) \\ = \  \sum_{\zeta} \abs{F(\sigma_\zeta a,\sigma_\zeta \omega,x-\zeta,-\zeta)}^2
\end{multline*}
if $F \in  \mc{W}^1(M)\cap L^2(M)$.
Summing over $x$, integrating over $a$ and $\omega$, and using shift invariance of $\mu(\di a,\di\omega)$ again, we obtain eq.\ \eqref{eq:plancherel}.
\end{proof}

We  turn now to Fourier analysis of the components of the generator $\mc{L}$ in Pillet's formula, starting with the operators $\mc{K}$, $\mc{U}$ and $\mc{V}$.  As mentioned above, the Fourier transform leads to a fiber decomposition of these operators over the torus $\bb{T}^d$.
\begin{lem}\label{lem:hatKhatV}   Let $\wc{K}_\vec{k}$, $\wc{U}$ and $\wc{V}$ denote the following 
operators defined on functions $\phi:\wh{M} \rightarrow \bb{C}$:
\begin{equation}\label{eq:hatK}
\wc{K}_{\vec{k}} \phi(a,\omega,x) \ := \ \sum_{\zeta\in \Z^d} h(\zeta) \left [  \phi(a,\omega,x-\zeta) -  \e^{\im \vec{k}\cdot \zeta}  \phi(\sigma_{\zeta}a,\sigma_{\zeta}\omega,x-\zeta) \right ], 
\end{equation}
\begin{equation}
\wc{U} \phi(a,\omega,x) \ := \ \left [ u(\sigma_x\omega) - u(\omega) \right ] \phi(a,\omega,x) ,
\end{equation}
and 
\begin{equation}\label{eq:hatV}
\wc{V} \phi (a,\omega,x) \ := \ \left [ v(\sigma_xa,\sigma_x\omega) - v(a,\omega) \right ] \phi (a,\omega,x) .
\end{equation}
Then
\begin{enumerate}
\item $\wc{K}_{\vec{k}}$, $\wc{U}$ and $\wc{V}$ are bounded on $\ell^{\infty;1}(\wh{M})$ and map $ c_0^1(\wh{M}) $ into itself. 
\item $\wc{K}_{\vec{k}}$, $\wc{U}$ and $\wc{V}$ are bounded and self-adjoint on $L^2(\wh{M})$.
\item If $F \in \mc{W}^1(M)$ then 
\begin{equation}\label{eq:Fourieridentity}
\wh{[\mc{K} F ] }_{\vec{k}} \ = \ \wc{K}_{\vec{k}} \wh{F}_{\vec{k}} , \quad  [\wh{\mc{U} F} ] _{\vec{k}} \ = \ \wc{U} \wh{F}_{\vec{k}} \quad \text{and} \quad   [\wh{\mc{V} F} ] _{\vec{k}} \ = \ \wc{V} \wh{F}_{\vec{k}}
\end{equation}
for every $\vec{k}\in \bb{T}^d$.
\item If $F \in L^2(M)$ then eq.\ \eqref{eq:Fourieridentity} holds for $\nu$-almost every $\vec{k}$.
\end{enumerate}
Furthermore, the map $\vec{k} \mapsto \wc{K}_{\vec{k}}$  is $C^1$ on $\bb{T}^d$, considered either as a map into the bounded operators on $\ell^{\infty;1}(\wh{M})$ or as a map into the bounded operators on $L^2(\wh{M})$.
\end{lem}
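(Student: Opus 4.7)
The proof splits naturally according to the four enumerated assertions plus the concluding $C^1$ claim. My plan is to dispose of the multiplication operators $\wc{U}$ and $\wc{V}$ first (they are the easy cases), then treat the convolution-type operator $\wc{K}_{\vec{k}}$ where the twisted phase requires slightly more care, and finally pass from $\mc{W}^1(M)$ to $L^2(M)$ by a density argument based on unitarity of the Fourier transform established in Prop.\ \ref{prop:Fourier}.

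For (1), I would estimate the $\ell^{\infty;1}(\wh{M})$-norm directly. For $\wc{U}$ and $\wc{V}$, the pointwise bounds $|u(\sigma_x\omega) - u(\omega)|\le 2\|u\|_\infty$ and $|v(\sigma_xa,\sigma_x\omega) - v(a,\omega)|\le 2$ give immediately that $\|\wc{U}\phi\|,\|\wc{V}\phi\|\le 2\max(\|u\|_\infty,1)\|\phi\|$, and preservation of $c_0^1(\wh{M})$ is immediate from the same pointwise bound. For $\wc{K}_{\vec{k}}$, I would compute
\begin{equation*}
\int_{\Alpha\times\Omega}|\wc{K}_{\vec{k}}\phi(a,\omega,x)|\,\mu(\di a,\di\omega) \ \le \ \sum_{\zeta}|h(\zeta)|\!\int\!\bigl[|\phi(a,\omega,x-\zeta)|+|\phi(\sigma_\zeta a,\sigma_\zeta\omega,x-\zeta)|\bigr]\mu,
\end{equation*}
use the invariance of $\mu$ under $\sigma_\zeta$ to reduce the second term to the first, then take $\sup_x$ to obtain $\|\wc{K}_{\vec{k}}\|_{\ell^{\infty;1}}\le 2\sum_{\zeta}|h(\zeta)|<\infty$ by the short-range assumption in Ass.\ \ref{ass:H0}. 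Preservation of $c_0^1(\wh{M})$ follows by dominated convergence in $\zeta$ (with dominator $|h(\zeta)|\|\phi\|_{\ell^{\infty;1}}$). For (2), $\wc{U}$ and $\wc{V}$ are real multiplications, hence self-adjoint; the bound $L^2 \to L^2$ is Young's inequality for the convolution part of $\wc{K}_{\vec{k}}$. Self-adjointness of $\wc{K}_{\vec{k}}$ is the one delicate point: writing $\wc{K}_{\vec{k}}= A - B_{\vec{k}}$ where $A\phi(a,\omega,x)=\sum_\zeta h(\zeta)\phi(a,\omega,x-\zeta)$ and $B_{\vec{k}}\phi(a,\omega,x)=\sum_\zeta h(\zeta)e^{\im\vec{k}\cdot\zeta}\phi(\sigma_\zeta a,\sigma_\zeta\omega,x-\zeta)$, the change of variables $\zeta\mapsto -\zeta$ combined with $h(-\zeta)=\overline{h(\zeta)}$ and the $\sigma_\zeta$-invariance of $\mu$ shows $A^*=A$ and $B_{\vec{k}}^*=B_{\vec{k}}$.

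For (3), I would invoke the identities $T_\zeta\mc{K}=\mc{K}T_\zeta$ etc.\ from Lem.\ \ref{lem:vanishingcom}, apply them to the defining series \eqref{eq:Fourier} of $\wh{F}_{\vec{k}}$, and then for each operator specialise $y=0$ and use the explicit forms \eqref{eq:Kform}--\eqref{eq:Vform}. For instance, for $\mc{V}$ I would use
\begin{equation*}
T_\zeta[\mc{V}F](a,\omega,x,0) \ = \ \mc{V}F(\sigma_\zeta a,\sigma_\zeta\omega,x-\zeta,-\zeta) \ = \ [v(\sigma_xa,\sigma_x\omega)-v(a,\omega)]T_\zeta F(a,\omega,x,0),
\end{equation*}
after cancelling the factors $\sigma_\zeta$ composing with $\sigma_{x-\zeta}$ and $\sigma_{-\zeta}$; pulling the $\vec{k}$-independent bracket out of the Fourier sum yields $\wc{V}\wh{F}_{\vec{k}}$. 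The computation for $\wc{U}$ is identical and for $\wc{K}_{\vec{k}}$ only slightly longer, the twisted phase $e^{\im\vec{k}\cdot\zeta}$ arising precisely from the shift $\sigma_\zeta$ applied to $H_0$ acting at sites where the dummy variable $\zeta$ shows up. Absolute convergence of every rearrangement needed is controlled by $\|F\|_{\mc{W}^1(M)}$ and $\sum_\zeta|h(\zeta)|<\infty$. For (4), functions of the form $f\otimes \mathbf{1}_E$ with $f\in L^\infty(\Alpha\times\Omega)$ and $E\subset\Z^d\times\Z^d$ finite lie in $\mc{W}^1(M)\cap L^2(M)$ and are dense in $L^2(M)$; Plancherel, eq.\ \eqref{eq:plancherel}, allows me to pass to the $L^2$-limit and obtain the intertwining identity $\nu$-a.e.

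Finally, for the $C^1$ claim, I would differentiate $\wc{K}_{\vec{k}}$ term by term:
\begin{equation*}
\partial_{k_j}\wc{K}_{\vec{k}}\phi(a,\omega,x) \ = \ -\im\sum_\zeta \zeta_j\, h(\zeta)\, e^{\im\vec{k}\cdot\zeta}\,\phi(\sigma_\zeta a,\sigma_\zeta\omega,x-\zeta).
\end{equation*}
The same estimate used for (1) and (2), now with $|h(\zeta)|$ replaced by $|\zeta||h(\zeta)|$, yields an operator norm bound by $\sum_\zeta|\zeta||h(\zeta)|<\infty$ both on $\ell^{\infty;1}(\wh{M})$ and on $L^2(\wh{M})$, which justifies interchanging $\partial_{k_j}$ with the sum and establishes continuity of the derivative in $\vec{k}$. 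Overall I expect the main technical obstacle to be the self-adjointness computation in (2): the combined change of variables in $\zeta$, in $a$ and $\omega$, and the use of $h(-\zeta)=\overline{h(\zeta)}$ must be tracked carefully so that the twist $e^{\im\vec{k}\cdot\zeta}$ emerges with the correct sign. Everything else is a routine application of Fubini, $\sigma_\zeta$-invariance of $\mu$, and the short-range hypothesis on $h$.
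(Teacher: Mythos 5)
Your proposal is correct and follows essentially the same line of argument as the paper: the boundedness and $c_0^1$-preservation in part (1) are read off from the pointwise bounds and the short-range hypothesis exactly as in the paper's proof of Lem.\ \ref{lem:KVonW}; part (3) is the same direct unwinding of $T_\zeta$ applied to eqs.\ \eqref{eq:Kform}--\eqref{eq:Vform} inside the Fourier sum \eqref{eq:Fourier} (quoting Lem.\ \ref{lem:vanishingcom} is just a different label for the same cancellation of shifts); part (4) and the $C^1$ claim match the paper's density argument and term-by-term differentiation. Your explicit change-of-variables verification of self-adjointness of $\wc{K}_{\vec{k}}$ in part (2) fills in a step the paper leaves as ``can be seen explicitly,'' and is carried out correctly.
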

\begin{proof}
The key here is eq.\ \eqref{eq:Fourieridentity}, which follows for $F \in \mc{W}^1(M)$ from the following easy calculations:
\begin{align*}
\wh{\left [ \mc{K} F \right ]}_{\vec{k}} (a,\omega,x) \
=& \ \sum_{\zeta}\e^{\im \vec{k}\cdot \zeta} \sum_{\xi}h(\xi)
\bigg [  F(\sigma_\zeta a,\sigma_\zeta \omega,x-\zeta-\xi,-\zeta)  -    F(\sigma_\zeta a,\sigma_\zeta \omega, x-\zeta,-\zeta +\xi ) \bigg ] \\
=& \ \sum_{\xi } h(\xi) \sum_{\zeta}  \bigg [ \e^{\im \vec{k}\cdot \zeta}F(\sigma_\zeta a,\sigma_\zeta \omega,x-\zeta-\xi,-\zeta)
\\ & \qquad \qquad \qquad \qquad - \e^{\im \vec{k} \cdot(\zeta +\xi)}F(\sigma_{\zeta+\xi}a,\sigma_{\zeta+\xi}\omega,x-\zeta - \xi,-\zeta) \bigg ] \\
=&  \sum_{\xi} h(\xi) \bigg[ \wh{F}_{\vec{k}}(a,\omega,x-\xi) \ - \  \e^{\im \vec{k} \cdot \xi}  \wh{F}_{\vec{k}}(\sigma_{\xi}a,\sigma_{\xi}\omega, x- \xi) \bigg ], 
\end{align*}
\begin{align*}
\wh{\left [ \mc{U} F \right ]}_{\vec{k}} (a,\omega,x) \ &= \  
\sum_{\zeta} \e^{\im \vec{k}\cdot \zeta} \left [u(\sigma_{x-\zeta}\sigma_\zeta \omega )-v( \sigma_{-\zeta}\sigma_\zeta \omega ) \right ] F(\sigma_{\zeta}a,\sigma_\zeta \omega,x-\zeta,-\zeta) \\
&= \ \left [ u(\sigma_x \omega) - u(\omega) \right ] 
\sum_{\zeta} \e^{\im \vec{k}\cdot \zeta}F(\sigma_{\zeta}a,\sigma_\zeta \omega,x-\zeta,-\zeta) \\
&= \ \left [ u(\sigma_x \omega) - u(\omega) \right ]  F_{\vec{k}}(a,\omega,x),
\end{align*}
and similarly for $\wh{\mc{V}F}$.

The boundedness of the operators on $\ell^{\infty;1}(\wh{M})$ and the fact that they map $c_0^1(\wh{M})$ into itself are proved in a way analogous to the proof of Lem.\ \ref{lem:KVonW}. The identity eq.\ \eqref{eq:Fourieridentity} for $F\in L^2(M)$ follows from part 4 of Prop.\ \ref{prop:Fourier} and an approximation argument.  Finally, the self-adjointness of $\wc{K}_{\vec{k}}$, $\wc{U}$ and $\wc{V}$  can be seen explicitly. (It also follows from the self-adjointness of $\mc{K}$, $\mc{U}$ and $\mc{V}$ on $L^2(M)$ and the representation eq.\ \eqref{eq:Fourieridentity} for $F\in L^2(M)$).

By the short range bound of Ass.\ \ref{ass:H0}, the partial derivatives of the map $\vec{k} \mapsto \wc{K}_{\vec{k}}$ exist  and satisfy
$$\partial_j \wc{K}_{\vec{k}} F(a,\omega,x) \ = \ - \im \sum_{\xi} \xi_j h(\xi) \e^{\im \vec{k}\cdot \xi} F(\sigma_{\xi} a,\sigma_\xi \omega,x-\xi).$$
Furthermore
$$\norm{ \partial_j\wc{K}_{\vec{k}}} \ \le \ \sum_{\xi} |\xi| |h(\xi)| \ <\ \infty$$
and 
$$\norm{ \partial_j \wc{K}_{\vec{k}} - \partial_j \wc{K}_{\vec{k}'}} \ \le \ \sum_{\xi} |\xi| |h(\xi)| \abs{
\e^{\im \vec{k}\cdot \xi} - \e^{\im \vec{k}'\cdot \xi}  } \ \rightarrow \ 0$$
as $\vec{k}\rightarrow \vec{k}'$ by dominated convergence, where $\norm{\cdot}$ may denote the operator norm on either $L^2(\wh{M})$ or $\ell^{\infty;1}(\wh{M})$.
\end{proof}

Because of the shift invariance under distribution, the Markov semigroup (as defined in eq.\ \eqref{eq:sgext}) \emph{commutes}  with Fourier transformation:
\begin{lem}\label{lem:semigFourier} Let the Markov semigroup $\e^{-tB}$ be defined on $\mc{W}^1(M)$ and $\ell^{\infty;1}(\wh{M})$ as in eq.\ \eqref{eq:sgext}. Then, the spaces $\mc{W}_0^1(M)$ and $c_0^1(\wh{M})$ are invariant under $\e^{-tB}$  and
$$\wh{\left [ \e^{-t B} F \right ]}_{\vec{k}} \ = \ \e^{-t B}\wh{F}_{\vec{k}}$$
for $F\in \mc{W}^1(M)$ and $\vec{k}\in \bb{T}^d$.
\end{lem}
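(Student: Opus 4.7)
The plan is to first establish the commutation identity $\wh{[\e^{-tB}F]}_{\vec{k}} = \e^{-tB}\wh{F}_{\vec{k}}$ for $F\in \mc{W}^1(M)$, from which the claimed invariance of $\mc{W}_0^1(M)$ and $c_0^1(\wh{M})$ under $\e^{-tB}$ will follow by straightforward contractivity and dominated convergence arguments.

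For the commutation identity, the crucial input is Lemma \ref{lem:vanishingcom}, which gives $[T_\zeta,B]=0$; by construction of $\e^{-tB}$ via conditional expectations (or by the Lie--Trotter formula) this lifts to $[T_\zeta,\e^{-tB}]=0$ for every $\zeta\in\Z^d$ and $t\ge 0$. Unpacking the definition of $\wh{F}_{\vec{k}}$ from eq.\ \eqref{eq:Fourier} I would then compute
\begin{align*}
(\e^{-tB}\wh{F}_{\vec{k}})(a,\omega,x)
&= \Evac{\sum_{\zeta}\e^{\im \vec{k}\cdot \zeta} F(\sigma_\zeta\alpha(0),\sigma_\zeta\omega,x-\zeta,-\zeta)}{\alpha(t)=a} \\
&= \sum_{\zeta}\e^{\im \vec{k}\cdot \zeta}\, \Evac{(T_\zeta F)(\alpha(0),\omega,x,0)}{\alpha(t)=a} \\
&= \sum_{\zeta}\e^{\im \vec{k}\cdot \zeta}\, (T_\zeta \e^{-tB}F)(a,\omega,x,0) \ = \ \wh{[\e^{-tB}F]}_{\vec{k}}(a,\omega,x).
\end{align*}
The only nontrivial step is the second equality, which rests on interchanging the infinite sum with the conditional expectation. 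This is justified by absolute convergence: the stationarity of $\mu_\Alpha$ under the Markov process together with Jensen's inequality gives the pointwise-in-$x$ bound $\sum_\zeta \Evac{|F(\sigma_\zeta\alpha(0),\sigma_\zeta\omega,x-\zeta,-\zeta)|}{\alpha(t)=a}$ integrable against $\mu(\di a,\di\omega)$, with total mass dominated by $\|F\|_{\mc{W}^1(M)}$, so that Fubini applies.

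For the invariance assertions, the same contractivity yields, for each fixed $x$,
\begin{equation*}
\sum_{\zeta}\int_{\Alpha\times\Omega} \abs{\e^{-tB}F(a,\omega,x-\zeta,-\zeta)}\mu(\di a,\di \omega) \ \le \ \sum_{\zeta}\int_{\Alpha\times\Omega}\abs{F(a,\omega,x-\zeta,-\zeta)}\mu(\di a,\di \omega),
\end{equation*}
so $\|\e^{-tB}F\|_{\mc{W}^1(M)}\le \|F\|_{\mc{W}^1(M)}$, and if $F\in \mc{W}_0^1(M)$ the right-hand side vanishes as $x\to\infty$. A parallel (and simpler) estimate $\int|\e^{-tB}\phi(a,\omega,x)|\mu(\di a,\di\omega)\le\int|\phi(a,\omega,x)|\mu(\di a,\di\omega)$ yields the invariance of $c_0^1(\wh{M})$.

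The step I expect to require the most care is the interchange of the Fourier series with the conditional expectation $\Evac{\cdot}{\alpha(t)=a}$: one must check absolute convergence at the level of the $\mc{W}^1$ norm rather than pointwise, since a priori the series defining $\wh{F}_{\vec{k}}$ only converges in a suitable $L^1$ sense. Once this is in place, Lemma \ref{lem:vanishingcom} and the stationarity of $\mu_\Alpha$ do the remaining algebraic work.
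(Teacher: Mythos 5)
Your proof is correct and follows essentially the same route as the paper: you pass the sum through the conditional expectation (justified by absolute convergence from the $\mc{W}^1$ bound, exactly as the paper does) and use the commutation of the shift $T_\zeta$ with the Markov semigroup, which is the same fact the paper invokes directly as distributional shift-invariance of $\{\alpha(t)\}_{t\ge 0}$ (Ass.\ \ref{ass:dynamics}(2)); your citation of Lemma \ref{lem:vanishingcom} plus lifting from $B$ to $\e^{-tB}$ is just a slight detour to the same place. The invariance argument via $L^1$-contractivity of $\e^{-tB}$ is likewise the paper's argument.
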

\begin{proof} The fact that $\mc{W}_0^1(M)$ and $c_0^1(\wh{M})$ are invariant under $\e^{-tB}$ follows from the contractivity of $\e^{-tB}$ on $L^1(\Alpha \times \Omega)$, since $$\int_{\Alpha \times \Omega}\abs{\e^{-tB}F(a,\omega,x)} \mu(\di a,\di\omega) \ \le \ \int_{a \times \Omega} \abs{F(a,\omega,x)} \mu(\di a,\di\omega).$$
For the Fourier transform identity, note that
\begin{align*}
\wh{\left [ \e^{-t B} F \right ]}_{\vec{k}}(a,\omega,x) \ =&  \ \sum_{\zeta}\e^{\im \vec{k} \cdot\zeta}\Evac{F ( \alpha(0),\sigma_\zeta \omega,x-\zeta,-\zeta)}{\alpha(t) = \sigma_\zeta a} \\
 =& \  \sum_{\zeta}\e^{\im \vec{k} \cdot\zeta} \Evac{F ( \sigma_{\zeta} \alpha(0),\sigma_\zeta \omega,x-\zeta,-\zeta)}{\alpha(t) =  a} ,
\end{align*}
by the shift invariance in distribution for the Markov process $\{\alpha(t)\}_{t\ge 0}$ (Ass.\ \ref{ass:dynamics} part (2)). Thus
\begin{multline*} \wh{\left [ \e^{-t B} F \right ]}_{\vec{k}}(a,\omega,x) \
= \  \Evac{\sum_{\zeta}\e^{\im \vec{k} \cdot\zeta} F ( \sigma_{\zeta} \alpha(0),\sigma_\zeta \omega,x-\zeta,-\zeta)}{\alpha(t) =  a} \\
= \ \Evac{\wh{F}_{\vec{k}}(\alpha(0),\omega,x)}{\alpha(t)=a} \ = \ \e^{-tB}\wh{F}_{\vec{k}}(a,\omega,x),
\end{multline*}
where the  the interchange of summation and integration is justified since $F\in \mc{W}^1(M)$.
\end{proof}

Putting these results together with Pillet's formula (Lem. \ref{lem:pillet}) we obtain
\begin{lem}[Fourier transformed Pillet formula]\label{lem:FTPillet} For each $\vec{k}\in \bb{T}^d$, let
\begin{equation}\label{eq:FTPilletgen}
\wc{L}_{\vec{k}} \ := \ \im \wc{K}_{\vec{k}} + \im \mc{U} + \im g \wc{V}  + B
\end{equation}
on the domain $\mc{D}(B) \subset L^2(\wh{M})$. 
\begin{enumerate}
\item For each $\vec{k}\in \bb{T}^d$, $\wc{L}_{\vec{k}}$ generates an exponentially bounded semigroup on $\ell^{\infty;1}(\wh{M})$ that maps $c^1_0(\wh{M})$ into itself. Furthermore, 
\begin{enumerate}
\item For $t>0$ the map $\vec{k} \mapsto \e^{-t\wc{L}_{\vec{k}}}$ is a $C^1$ map from $\bb{T}^d$ into the bounded operators on $\ell^{\infty;1}(\wh{M})$.
\item If $F\in \mc{W}^1(M)$, then
\begin{equation}\label{eq:semigFourier}\e^{-t\wc{L}_{\vec{k}}} \wh{F}_{\vec{k}} \ := \  \wh{\left [ \e^{-t\mc{L}} F\right ]}_{\vec{k}}.
\end{equation}
\end{enumerate}
\item  For each $\vec{k}\in \bb{T}^d$, $\wc{L}_{\vec{k}}$ is maximally accretive on $L^2(\wh{M})$. Furthermore 
\begin{enumerate}
\item For $t>0$, the map $\vec{k} \mapsto \e^{-t\wc{L}_{\vec{k}}}$ is a $C^1$ map from $\bb{T}^d$ into the contractions  on $L^2(\wh{M})$.
\item The operators $\{ \wc{L}_{\vec{k}} \}_{\vec{k}\in \bb{T}^d}$ are uniformly sectorial; that is there are are constants $b',q'\ge 0$ such that \begin{equation}\label{eq:sector} \abs{\Im \ipc{f}{\wc{L}_{\vec{k}}f}} \ \le \ q' \Re \ipc{f}{\wc{L}_{\vec{k}}f} + b' \norm{f}_{L^2}^2
 \end{equation}
 for every $\vec{k}\in \bb{T}^d$ and every $f\in L^2(\wh{M})$. 
\item If $F\in L^2(M)$ then eq.\ \eqref{eq:semigFourier} holds for $\nu$-almost every $\vec{k}$.
\end{enumerate}
\item Let $\psi_0\in \ell^2(\Z^d)$ and define  
\begin{equation}\label{eq:whrho}\wh{\rho}_{0;\vec{k}}(x) \ := \ \sum_{\zeta\in \Z^d} \e^{\im \vec{k} \cdot \zeta} \psi_0(x-\zeta) \overline{\psi_0(-\zeta)}.\end{equation}
Then 
\begin{equation}\label{eq:FTPillet}\sum_{\zeta } \e^{\im \vec{k} \cdot \zeta } \Ev{\psi_t(x-\zeta) \overline{\psi_t(-\zeta)} } \ = \ \ipc{\bb{1}\otimes \delta_x}{\e^{-t \wc{L}_{\vec{k}}} \left ( \bb{1} \otimes \wh{\rho}_{0;\vec{k}} \right ) }_{L^2(\wh{M})} \end{equation}
where $\psi_t$ is the solution to eq.\ \eqref{eq:genSE} with initial condition $\psi_0$. Here $ \e^{-t \wc{L}_{\vec{k}}} (\bb{1}\otimes \wh{\rho}_{0;\vec{k}} ) \in  c_0^1(\wh{M})$ for each $\vec{k}$ and is in $L^2(\wh{M})$ for almost every $\vec{k}$.
\end{enumerate}
\end{lem}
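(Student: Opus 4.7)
The plan is to build $\e^{-t\wc{L}_{\vec{k}}}$ on each of $\ell^{\infty;1}(\wh{M})$ and $L^2(\wh{M})$ as a bounded perturbation of $\e^{-tB}$, then to inherit each claimed property from the unperturbed semigroup. First I would set up the construction. Prop.\ \ref{prop:sgext} (applied with $X=\Z^d$ and Banach space $\ell^\infty(\Z^d)$) shows $\e^{-tB}$ is a contraction on $\ell^{\infty;1}(\wh{M})$ that preserves $c_0^1(\wh{M})$, while Lem.\ \ref{lem:hatKhatV} provides uniform-in-$\vec{k}$ operator bounds for $\wc{K}_{\vec{k}}$, $\wc{U}$, $\wc{V}$ together with $c_0^1$-invariance. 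Standard bounded-perturbation theory then produces an exponentially bounded semigroup $\e^{-t\wc{L}_{\vec{k}}}$, conveniently represented by the Lie--Trotter product
\[
\e^{-t\wc{L}_{\vec{k}}} \;=\; \lim_{n\to\infty}\left(\e^{-(\im t/n)(\wc{K}_{\vec{k}} + \wc{U} + g\wc{V})} \e^{-(t/n)B}\right)^n.
\]
Applying the Fourier identities of Lem.\ \ref{lem:hatKhatV} and Lem.\ \ref{lem:semigFourier} factor-by-factor inside this product, I would check that for $F\in\mc{W}^1(M)$ the action on $\wh{F}_{\vec{k}}$ matches the Fourier transform of the corresponding unhatted Trotter product for $\e^{-t\mc{L}}F$; passage to the limit yields the intertwining \eqref{eq:semigFourier}, proving (1b). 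The $L^2$ analog is parallel, and the extension of \eqref{eq:semigFourier} to $F\in L^2(M)$ for $\nu$-a.e.\ $\vec{k}$ in (2c) follows by density of $\mc{W}^1(M)\cap L^2(M)$ in $L^2(M)$ together with Prop.\ \ref{prop:Fourier}(3).

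Next I would derive uniform sectoriality (2b) by repeating the computation \eqref{eq:sectorL} from the proof of Pillet's lemma, fiber-wise: since $\wc{K}_{\vec{k}}$, $\wc{U}$, $\wc{V}$ are bounded self-adjoint on $L^2(\wh{M})$ with norms controlled uniformly in $\vec{k}$ by $2\sum_{\xi\ne 0}|h(\xi)|$, $2\|u\|_{L^\infty}$, and $2$ respectively, one obtains sectoriality from that of $B$ with $q'=q$ and $b' = b + 2\sum_{\xi\ne 0}|h(\xi)| + 2\|u\|_{L^\infty} + 2g$; maximal accretivity of $\wc{L}_{\vec{k}}$ then follows as in \cite[Thm.\ IX.2.7]{Kato1995}. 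The $C^1$ regularity in $\vec{k}$ claimed in (1a) and (2a) is driven entirely by the fact that only $\wc{K}_{\vec{k}}$ depends on $\vec{k}$ and does so $C^1$ in operator norm (Lem.\ \ref{lem:hatKhatV}); Duhamel's formula
\[
\partial_j \e^{-t\wc{L}_{\vec{k}}} \;=\; -\im \int_0^t \e^{-(t-s)\wc{L}_{\vec{k}}} (\partial_j \wc{K}_{\vec{k}}) \e^{-s\wc{L}_{\vec{k}}} \di s,
\]
justified by the uniform-in-$\vec{k}$ exponential bounds on the two surrounding semigroups, then yields norm-continuous partial derivatives on both $\ell^{\infty;1}(\wh{M})$ and $L^2(\wh{M})$.

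For part (3) the plan is to Fourier-transform Pillet's formula \eqref{eq:Pillet} at the initial data $\bb{1}\otimes \rho_0$ with $\rho_0(x,y)=\psi_0(x)\overline{\psi_0(y)}$. A direct computation gives $\wh{[\bb{1}\otimes \rho_0]}_{\vec{k}}(a,\omega,x) = \wh{\rho}_{0;\vec{k}}(x) = [\bb{1}\otimes \wh{\rho}_{0;\vec{k}}](a,\omega,x)$, and Cauchy--Schwarz (bounding $\sum_\zeta |\psi_0(x-\zeta)||\psi_0(-\zeta)| \le \|\psi_0\|^2$) places $\bb{1}\otimes \rho_0\in \mc{W}^1(M)\cap L^2(M)$. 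Thus (1b) applies and gives $\wh{F_t}_{\vec{k}} = \e^{-t\wc{L}_{\vec{k}}}(\bb{1}\otimes \wh{\rho}_{0;\vec{k}})$ with $F_t(a,\omega,x,y) = \Evac{\psi_t(x)\overline{\psi_t(y)}}{\alpha(t)=a}$; pairing against $\bb{1}\otimes \delta_x$ and invoking shift invariance of $\mu$ under each $\sigma_\zeta$ reproduces the left-hand side of \eqref{eq:FTPillet}. Lem.\ \ref{lem:whyW} places $F_t\in \mc{W}_0^1(M)$, so Prop.\ \ref{prop:Fourier}(2) delivers $\wh{F_t}_{\vec{k}}\in c_0^1(\wh{M})$ for every $\vec{k}$. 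The main delicacy I would watch throughout is the parallel bookkeeping on two different augmented spaces \tem\ the $\ell^{\infty;1}$ theory is what supports evaluation at a fixed $x$ in \eqref{eq:FTPillet} and pointwise-in-$\vec{k}$ $C^1$ statements, while the $L^2$ theory is where sectoriality and the a.e.-in-$\vec{k}$ Plancherel extension naturally live; the two are glued together by the density of $\mc{W}^1\cap L^2$ in $L^2$ and the Trotter representation.
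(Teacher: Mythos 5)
Your proposal is correct and follows essentially the same route as the paper's proof: construct $\e^{-t\wc{L}_{\vec{k}}}$ by bounded perturbation of $\e^{-tB}$ via the Lie--Trotter formula, transfer the Fourier intertwining factor-by-factor through the product using Lems.\ \ref{lem:hatKhatV} and \ref{lem:semigFourier}, obtain $C^1$ regularity in $\vec{k}$ from Duhamel's formula and the $C^1$ property of $\vec{k}\mapsto\wc{K}_{\vec{k}}$, get uniform sectoriality by adding the bounded self-adjoint perturbations fiber-wise to the sectoriality of $B$, and derive part (3) by Fourier-transforming Pillet's formula applied to $\bb{1}\otimes\rho_0$. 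The only differences are bookkeeping details (e.g., your explicit value of $b'$ is a slightly different but equally valid choice of constant).
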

\begin{proof}
The proof that $\e^{-t \wc{L}_{\vec{k}}}$ is exponentially bounded on $\ell^{\infty;1}(\wh{M})$ and maps $c^1_0(\wh{M})$ into itself is analogous to the corresponding proof for $\e^{-t\mc{L}}$ (see the proof of Lem.\ \ref{lem:KVonW}).  Since $\wc{L}_{\vec{k}'} - \wc{L}_{\vec{k}}=\im \wc{K}_{\vec{k}'}- \im \wc{K}_{\vec{k}}$, we have $$\e^{-t\wc{L}_{\vec{k}'}} - \e^{-t\wc{L}_{\vec{k}}} \ = \ - \im \int_0^t \e^{-(t-s) \wc{L}_{\vec{k}'}} ( \wc{K}_{\vec{k}'}- \wc{K}_{\vec{k}})\e^{-s\wc{L}_{\vec{k}}} \di s.$$
Thus
$$\partial_j  \e^{-t\wc{L}_{\vec{k}}} \ = \ -\im  
\int_0^t \e^{-(t-s) \wc{L}_{\vec{k}}} \partial_j\wc{K}_{\vec{k}}\e^{-s\wc{L}_{\vec{k}}} \di s $$
and the $C^1$ property for $\vec{k}\mapsto \e^{-t\wc{L}_{\vec{k}}}$ follows from the corresponding statement for $\vec{k} \mapsto \wc{K}_{\vec{k}}$ (see Lem.\ \ref{lem:hatKhatV}). By Lem.\ \ref{lem:hatKhatV},
$$\wh{\left[ \e^{- \im t (\mc{K} + \mc{U} +  g \mc{V})} F \right ]}_{\vec{k}} \ = \  \e^{-\im  t(\wc{K}_{\vec{k}} + \wc{U} + g \wc{V})} \wh{F}_{\vec{k}}$$
for  $F\in \mc{W}^1(M)$.  Eq.\ \eqref{eq:semigFourier} follows from this identity, Lem.\ \ref{lem:semigFourier} and the  Lie-Trotter formula for semigroups \cite{Trotter1959}.

The operator $B$ is maximally accretive on $L^2(\wh{M})$, since it generates a contraction semigroup, and is sectorial by Prop.\ \ref{prop:sgext}. As $\wc{K}_{\vec{k}}$, $\wc{U}$ and $\wc{V}$ are bounded and self-adjoint, eq.\ \eqref{eq:sector} holds with 
$$q'=q\quad \text{and} \quad b' \ = \ 2b + \max_{\vec{k}} \norm{\wc{K}_{\vec{k}}} + \norm{\wc{U}} + g \norm{\wc{V}},$$
with $b,q$ as in assumption \ref{ass:sec}. The $C^1$ property for $\vec{k} \mapsto \e^{-t\wc{L}_{\vec{k}}}$ in $L^2$ operator norm is proved just as in the $\ell^{\infty;1}$ case.  That eq.\ \eqref{eq:semigFourier} holds for almost every $\vec{k}$ follows from Lems.\ \ref{lem:hatKhatV} and \ref{lem:semigFourier} and the Lie-Trotter formula, just as above.

Now let $\psi_0\in \ell^2(\Z^d)$ be given and let 
$$F_t(a,\omega,x,y) \ = \ \Evac{ \rho_t(x,y) }{\alpha(t)= a },$$
where $\rho_t=\psi_t\otimes \overline{\psi_t}$ is the density matrix for the corresponding solution $\psi_t$ to eq.\ \eqref{eq:genSE}. Taking Fourier transforms we obtain 
$$\wh{F}_{t;\vec{k}} \ = \ \e^{-t \mc{L}_{\vec{k}}} \wh{F}_{0;\vec{k}}$$
for every $\vec{k}$, since $F_0=\bb{1}\otimes \rho_0 \in c^1_0(M)$.  Eq.\ \eqref{eq:FTPillet} follows from Pillet's formula (Lem.\ \ref{lem:pillet}), the definition of the Fourier transform (eq.\ \eqref{eq:Fourier}) and the fact that $\wh{F}_{0;\vec{k}} = \bb{1}\otimes \wh{\rho}_{0;\vec{k}}$.
\end{proof}

\section{Proof of Theorem \ref{thm:genCLT}}\label{sec:proof}
\subsection{Block decomposition of the generators on $L^2(\wh{M})$.}\label{sec:block} The starting point for the proof is the identity
\begin{equation}\label{eq:identity}
\sum_{x\in \Z^d} \e^{\im \vec{k} \cdot x } \Ev{|\psi_t(x)|^2} \ = \ \ipc{\bb{1}\otimes \delta_0}{\e^{-t \wc{L}_{\vec{k} }} \bb{1}\otimes \wh{\rho}_{0;\vec{k}} }_{L^2(\wh{M})} ,
\end{equation}
which follows from eq.\ \eqref{eq:FTPillet} of Lem.\ \ref{lem:FTPillet}.  Here 
$\wh{\rho}_{0;\vec{k}}  = \sum_{\zeta\in \Z^d} \e^{\im \vec{k} \cdot \zeta} \psi_0(x-\zeta) \overline{\psi_0(-\zeta)}$ 
is as in eq.\ \eqref{eq:whrho}.  

To analyze the matrix element on the right hand side of eq.\ \eqref{eq:identity} we will  use  a block decomposition of the generator $\wc{L}_{\vec{k}}$ associated to the following direct sum decomposition of $L^2(\wh{M})$:
\begin{equation}\label{eq:directsum}\wc{H}_0 \oplus \wc{H}_1 \oplus \wc{H}_2 \oplus \wc{H}_3 \ = \ L^2(\wh{M})
\end{equation}
where
$$\wc{H}_0 \ = \ \operatorname{span} \{  \bb{1}\otimes  \delta_0\} \ \cong \ \bb{C}$$
$$\wc{H}_1 \ = \ \setb{f \otimes  \delta_0 }{f\in L^2(\Omega) \text{ and } \int_{\Omega} f(\omega)\di \mu_\Omega(\omega) =0} \ \cong \ \ L^2_0(\Omega) $$
$$\wc{H}_2 \ = \ \setb{  f(\omega,x) }{f\in L^2(\Omega\times\Z^d) \text{ and } f(\omega,0)=0\text{ for $\mu_\Omega$ almost every $\omega$}.} \ = \ L^2(\Omega \times \Z^d\setminus\{0\}) $$
$$\wc{H}_3 \ = \ \setb{f\in L^2(\wh{M})}{\int_{\Alpha} f(a,\omega,x) \di \mu_{\Alpha}(a) \ = \ 0 \text{ for $\mu_\Omega$ almost every $\omega$}.}$$
Note that 
$$\wc{H}_0 \oplus \wc{H}_1 \ = \ \setb{f \otimes \delta_0}{f\in L^2(\Omega)}$$
and
$$ \wc{H}_0 \oplus \wc{H}_1 \oplus \wc{H}_2 \ = \  L^2(\Omega \times \Z^d).$$

We will write operators on $L^2(\wh{M})$ as $4\times 4$ matrices of operators acting between the various spaces $\wc{H}_j$, $j=0,1,2,3$.  Throughout we will use the notation: 
\begin{enumerate}
\item $P_j\ =$ the orthogonal projection onto $\wc{H}_j$,
\item $P_j^\perp \ = \ 1-P_j$,
\item $\mc{M}_j$ for $P_j \mc{M}P_j$  where $\mc{M}$ is an operator on $L^2(\wh{M})$, and
\item $\mc{M}_j^\perp$ for $P_j^\perp \mc{M} P_j^\perp$.
\end{enumerate}
The promised block decomposition of the components of $\wc{L}_{\vec{k}}$ is as follows:\begin{multline}\wc{K}_{\vec{k}} \ = \ \begin{pmatrix}
 0 & 0 & \Phi_{\vec{k}}^\dagger & 0 \\
 0 & 0 & Q_{\vec{k}}^\dagger  & 0 \\
 \Phi_{\vec{k}} & Q_{\vec{k}} & \wc{K}_{\vec{k};2} & 0 \\
 0 & 0 & 0 & \wc{K}_{\vec{k};3}
 \end{pmatrix}, \quad \wc{U} \ = \ \begin{pmatrix}
 0 & 0 & 0 & 0 \\
 0 & 0 & 0 & 0 \\
 0 & 0 & \wc{U}_2 & 0  \\
 0 & 0 & 0 & \wc{U}_3
 \end{pmatrix}, \\  \wc{V} \ = \ \begin{pmatrix}
 0 & 0 & 0 & 0 \\
 0 & 0 & 0 & 0 \\
 0 & 0 & 0 & \wh{V}^\dagger  \\
 0 & 0 & \wh{V} & \wc{V}_3
 \end{pmatrix}, \quad \text{ and } \quad B \ = \ \begin{pmatrix} 0 & 0 & 0 & 0 \\ 0 & 0 & 0 & 0 \\
 0 & 0 & 0 & 0 \\
 0 & 0 & 0 & B_{3}
 \end{pmatrix}.\label{eq:block}\end{multline}
 where
 $$ \left [ \Phi_{\vec{k}} \bb{1}\otimes \delta_0 \right ](\omega,x) \  = \ (1- \e^{\im \vec{k} \cdot x} ) h(x) ,$$
 $$ \left [ Q_{\vec{k}} f \otimes  \delta_0\right ](\omega,x) \ = \ h(x) \left [ f(\omega) - \e^{\im \vec{k}\cdot x}  f(\sigma_{x} \omega) \right ],$$
and 
\begin{equation}\label{eq:V}[\wh{V}f](a,\omega,x) \ = \ \left [ v(\sigma_x a ,\sigma_x\omega) - v(a,\omega) \right ] f(\omega,x).
\end{equation}

\subsection{Central limit theorem for bounded $f$}
\begin{figure}										
\[																	\begin{tikzpicture}[scale=0.75]											
\fill [gray!40] (10,3) --(0,1) -- (0,-1) -- (10,-3);					
\draw[style=very thick, ->] (0,-5)--(0,5) node[pos=1.05]{{$\Im z $}};  	
\draw[style=very thick, ->] (-3,0)--(10,0) node[pos=1.05]{$\Re z$};		
						
\draw[thick] (10,3) -- (0,1) -- (0,-1) -- (10,-3)  ;		
\draw[style=thick,->] (5.15,1.20) -- (5,1.95);
\draw (5.15,1.2) node[below]{$\Im z = q'\Re z + b'$};
\draw[style=thick,->] (5.15,-1.20) -- (5,-1.95);
\draw (5.15,-1.2) node[above]{$\Im z = -q'\Re z - b'$};																										
\draw[color=red, style= very thick, ->](10,5.5)  -- (4.75,3.5);
\draw(9.5, 5.31)node[below=-1.5pt]{$\Gamma_\delta$};
\draw[style=thick, ->] (3.75,4) -- (4,3.25);
\draw(3.75,4) node[above=-1.5pt]{$\Im z = q''\Re z + b''$} ;
\draw[color=red, style= very thick](4.75,3.5)  -- (-.5,1.5);
\draw[color=red, style = very thick, ->](-0.5,1.5) --(-0.5,0);	
\draw[style=thick,->] (-1.3,1.5)--(-0.55,.75);
\draw (-1.5,1.5) node[above=-1.5pt]{$\Re z = -\delta$};	
\draw[color=red, style = very thick] (-0.5,0)-- (-0.5,-1.5);
\draw[color=red, style = very thick, ->](-0.5,-1.5) --(4.75,-3.5) ;
\draw[color=red, style= very thick](4.75,-3.5)--(10,-5.5) ;
\draw[style=thick, ->] (3.75,-4) -- (4,-3.25);
\draw(3.75,-4) node[below]{$\Im z = -q''\Re z - b''$};
\end{tikzpicture}															
\]	
\caption{A suitable contour $\Gamma_\delta$, shown in red.  The numerical range of $\wc{L}_{\vec{k}}$ is contained in the grey region (see Lemma \ref{lem:FTPillet}). So long as  $b'' > b'$,  $\delta >0$ and $q''\ge q'$, the integral in eq.\ \eqref{eq:contourint} is absolutely convergent. \label{fig:contour}}	
\end{figure}
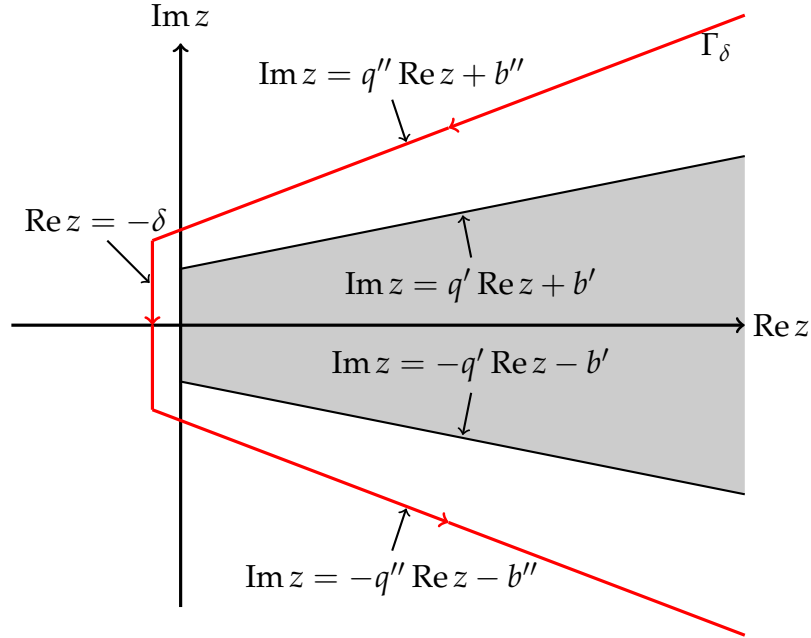
 We begin by proving eq.\ \eqref{eq:genCLT} for bounded continuous $f$ and normalized $\psi_0\in\ell^2(\Z^d)$.  The extension to quadratically bounded $f$  will be given below after we prove diffusive scaling.  It suffices, by Levy's Continuity Theorem, to prove 
\begin{equation}\label{eq:suffices}
\lim_{t\rightarrow \infty} \sum_{x\in \Z^d} \e^{\im \vec{k} \cdot \frac{x}{\sqrt{t}}} \Ev{|\psi_t(x)|^2} \ = \ \e^{-\frac{1}{2}\ipc{\vec{k}}{\vec{D} \vec{k}}}.
\end{equation}

In fact, it is enough to establish eq.\ \eqref{eq:suffices} for $\psi_0\in \ell^1(\Z^d)$; it then extends to all of $\ell^2(\Z^d)$ by a limiting argument.  For $\psi_0\in \ell^1(\Z^d)$,
$$\norm{\wh{\rho}_{0;\vec{k}}}_{\ell^2} \ = \ \norm{   \sum_{\zeta} \e^{\im \vec{k}\cdot \zeta} \psi_0(\cdot-\zeta) \overline{\psi_0(-\zeta)}}_{\ell^2} \ \le \ \sum_{\zeta} \abs{\psi_0(-\zeta)} \norm{\psi_0(\cdot-\zeta)}_{\ell^2} \ = \ \norm{\psi_0}_{\ell^1} \norm{\psi_0}_{\ell^2},$$
with $\wh{\rho}_{0;\vec{k}}$ as in eq.\ \eqref{eq:whrho}. 
In particular, $\bb{1}\otimes \wh{\rho}_{0;\vec{k}}\in L^2(\wh{M})$ for every $\vec{k}$.\footnote{For $\psi_0\in \ell^2(\Z^d)$, we have $\wh{\rho}_{0;\vec{k}}\in c_0(\Z^d)$ for every $\vec{k}$ but $ \wh{\rho}_{0;\vec{k}} \in \ell^2(\Z^d)$ only for \emph{almost every} $\vec{k}$.}
Similarly,
$$\norm{\wh{\rho}_{0;\vec{k}} - \wh{\rho}_{0;\vec{0}}}_{\ell^2} \ \le \ \norm{\psi_0}_{\ell^2} \sum_{\zeta} | \e^{\im \vec{k}\cdot \zeta} - 1| \abs{\psi_0(-\zeta)} \ \xrightarrow[]{\vec{k} \rightarrow 0} \ 0, $$
so, by eq.\ \eqref{eq:identity},
\begin{equation}\label{eq:reduction1}\sum_{x\in \Z^d} \e^{\im \vec{k} \cdot \frac{x}{\sqrt{t}}} \Ev{|\psi_t(x)|^2} \ = \ \ipc{\bb{1}\otimes \delta_0}{\e^{-t \wc{L}_{\nicefrac{\vec{k}}{\sqrt{t}}}} \bb{1}\otimes \wh{\rho}_{0;\vec{0}} } \ + \ o(1) , \end{equation}
as $t\rightarrow \infty$.

Because $ \{ \wc{L}_{\vec{k}} \}_{\vec{k}}$ is uniformly sectorial (see Lemma \ref{lem:FTPillet}), \begin{equation}\label{eq:contourint}\e^{-t \wc{L}_{\vec{k}}}  \ = \ \frac{1}{2\pi \im} \int_{\Gamma_{\delta} }   \left ( z - \wc{L}_{\vec{k}} \right )^{-1}  \e^{-t z} \di z,\end{equation}
where the integral is an absolutely convergent Bochner integral for the contour $\Gamma_\delta$ shown in Figure \ref{fig:contour}. Along the upper and lower diagonals of the contour the resolvent is bounded in norm by $\nicefrac{1}{|b''-b'|}$. Thus
$$\e^{- t \wc{L}_{\vec{k}}} \ = \  \frac{1}{2\pi} \int_{-M}^{M} \left ( \delta + \im y + \wc{L}_{\vec{k}} \right )^{-1}   \e^{(\delta + \im y) t} \di y \ + \ R_{t,\delta,M}\ , $$
where  $\norm{R_{t,\delta,M}}  \ \le \  \const \,\nicefrac{\e^{\delta t}}{t M} $ for sufficiently large $M$,
with a constant independent of $\vec{k}$. Choosing $\delta =\nicefrac{1}{t}$ and changing variables $t y \mapsto y$ in the integral, we conclude that
\begin{equation}\label{eq:truncatedcontour}\e^{- t \wc{L}_{\vec{k}}} \ = \  \frac{1}{2\pi} \int_{-Mt}^{Mt} \left ( 1 + \im y + t\wc{L}_{\vec{k}} \right )^{-1}  \e^{ 1+ \im y } \di y \ +  \mc{O}\left (\frac{1}{Mt} \right ).\end{equation}

Plugging eq.\ \eqref{eq:truncatedcontour} into eq.\ \eqref{eq:reduction1}, we see that eq.\ \eqref{eq:genCLT}, for bounded continuous $f$ and $\psi_0\in \ell^2(\Z^d)$, is a consequence of 
\begin{lem}\label{lem:heartofthematter}
There is a positive definite matrix $\vec{D}$ such that, for any $\phi \in \ell^2(\Z^d)$ and $\vec{k} \in \R^d$,
\begin{equation}\label{eq:heart}\lim_{t\rightarrow \infty} \int_{-M t }^{Mt}\abs{ \ipc{\bb{1}\otimes \delta_0}{ \left ( 1 + \im y + t \wc{L}_{\nicefrac{\vec{k}}{\sqrt{t}}} \right )^{-1} \bb{1}\otimes \phi} \ -\  \frac{\phi(0)}{ 1 + \im y  + \frac{1}{2}\ipc{\vec{k}}{\vec{D}\vec{k}}}}\di y  \ = \ 0.  \end{equation}
\end{lem}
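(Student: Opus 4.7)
The plan is to reduce the matrix element in \eqref{eq:heart} to a Feshbach--Schur effective scalar on the one-dimensional subspace $\wc{H}_0 = \mathrm{span}\{\bb{1}\otimes \delta_0\}$, analyze its $t\to\infty$ asymptotics via a first-order perturbation expansion in the small parameter $\vec{k}/\sqrt{t}$, and finally pass the limit under the $y$-integral by combining pointwise convergence on bounded windows with a tail-cancellation estimate.

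\smallskip\noindent\textbf{Step 1 (Schur reduction).} Reading off the block decomposition \eqref{eq:block}, each of $\wc{K}_{\vec{k}}$, $\wc{U}$, $g\wc{V}$, $B$ annihilates $\bb{1}\otimes\delta_0$, so $P_0\wc{L}_{\vec{k}}P_0=0$ and the only block connecting $\wc{H}_0$ to $\wc{H}_0^\perp$ is $\im \Phi_{\vec{k}}$, inherited from $\wc{K}_{\vec{k}}$. Setting $z=1+\im y$ and applying the Schur complement to $(z+t\wc{L}_{\vec{k}/\sqrt{t}})^{-1}$ in the splitting $L^2(\wh{M})=\wc{H}_0\oplus \wc{H}_0^\perp$ yields
\begin{equation*}
\ipc{\bb{1}\otimes\delta_0}{(z+t\wc{L}_{\vec{k}/\sqrt{t}})^{-1}(\bb{1}\otimes\phi)} \ = \ \frac{\phi(0) + r_t(y,\vec{k},\phi)}{z - W_t(y,\vec{k})},
\end{equation*}
with self-energy $W_t(y,\vec{k}) = -\ipc{\sqrt{t}\,\psi_t}{(z/t + P_0^\perp \wc{L}_{\vec{k}/\sqrt{t}} P_0^\perp)^{-1} \sqrt{t}\,\psi_t}$, where $\psi_t(x) = (1-e^{\im \vec{k}\cdot x/\sqrt{t}})h(x)$, and $r_t$ is the analogous scalar attached to the $\wc{H}_2$-component of $\bb{1}\otimes\phi$.

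\smallskip\noindent\textbf{Step 2 (Pointwise limit and definition of $\vec{D}$).} The short-range condition $\sum_\zeta |\zeta|\,|h(\zeta)| < \infty$ and dominated convergence give $\sqrt{t}\,\psi_t \to \psi_\star$ in $L^2$, where $\psi_\star(\omega,x) = -\im(\vec{k}\cdot x)h(x)\in \wc{H}_2$. Granting -- this is the crucial separate input -- that $P_0^\perp\wc{L}_0 P_0^\perp$ is invertible on $\wc{H}_0^\perp$ (whose proof uses $g>0$, the non-degeneracy \eqref{eq:vnondeg}, and the gap \ref{ass:gap} to extract effective dissipation on $\wc{H}_1\oplus\wc{H}_2$ through the $g\wh{V}$--$B_3$ coupling to $\wc{H}_3$), together with the $C^1$ regularity of $\vec{k}\mapsto \wc{K}_{\vec{k}}$ from Lemma \ref{lem:hatKhatV}, one obtains pointwise in $y$
\begin{equation*}
\lim_{t\to\infty} W_t(y,\vec{k}) \ = \ -\ipc{\psi_\star}{(P_0^\perp \wc{L}_0 P_0^\perp)^{-1}\psi_\star} \ =: \ -\tfrac{1}{2}\ipc{\vec{k}}{\vec{D}\vec{k}}.
\end{equation*}
This defines $\vec{D}$ as a quadratic form in $\vec{k}$. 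Positive-definiteness follows from $\psi_\star\neq 0$ for $\vec{k}\neq 0$ (Ass.\ \ref{ass:H0}(3)) combined with the standard fact that accretivity of $P_0^\perp\wc{L}_0 P_0^\perp$ forces $\Re\ipc{\xi}{(P_0^\perp\wc{L}_0 P_0^\perp)^{-1}\xi} > 0$ on its range. A parallel but simpler estimate gives $r_t(y,\vec{k},\phi)\to 0$: the extra factor $\sqrt{t}\Phi_{\vec{k}/\sqrt{t}}^\dagger$ is $O(|\vec{k}|)$ while $(z+tP_0^\perp\wc{L}P_0^\perp)^{-1}$ applied to the fixed $\phi_\perp\in\wc{H}_0^\perp$ is $O(1/t)$ in $L^2$.

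\smallskip\noindent\textbf{Step 3 (Integral control).} Split the integral at $|y|=T$ for a parameter $T=T(t)\to \infty$ with $T=o(\sqrt{t})$. On $|y|\le T$, the uniform sectoriality from Lemma \ref{lem:FTPillet}(2)(b) provides a majorant of order $(1+|y|)^{-1}$; combined with local uniformity of the convergence in Step 2, dominated convergence kills this piece. On $T<|y|\le Mt$ the identity
\begin{equation*}
\frac{1}{z - W_t} \ - \ \frac{1}{z+\tfrac12\ipc{\vec{k}}{\vec{D}\vec{k}}} \ = \ \frac{W_t + \tfrac12\ipc{\vec{k}}{\vec{D}\vec{k}}}{(z-W_t)\bigl(z+\tfrac12\ipc{\vec{k}}{\vec{D}\vec{k}}\bigr)}
\end{equation*}
together with the uniform bound $|W_t(y,\vec{k})| \le C|\vec{k}|^2$ (from $\|\sqrt{t}\psi_t\|\le C|\vec{k}|$ and sectoriality of $P_0^\perp\wc{L}P_0^\perp$) gives a tail of order $|\vec{k}|^2/T\to 0$. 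The $r_t$-piece is handled the same way.

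The principal obstacle is the invertibility of $P_0^\perp \wc{L}_0 P_0^\perp$ on $\wc{H}_0^\perp$ and the attendant strict positivity of $\vec{D}$; this is the only place where $g>0$ and \eqref{eq:vnondeg} are used essentially, and it requires propagating dissipation from the exponentially mixing $B_3$-block on $\wc{H}_3$ back to the ``static'' subspace $\wc{H}_1\oplus \wc{H}_2$ through the coupling $g\wh{V}$. The remainder of the argument is standard Feshbach perturbation theory together with the dominated-convergence/tail-decay splitting sketched above.
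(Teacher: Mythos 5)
Your Step 1 Schur reduction matches the paper's first step. The genuine gap is in Step 2, and it is not a missing detail but a wrong claim: the operator $P_0^\perp\wc{L}_{\vec{0}}P_0^\perp$ is \emph{not} boundedly invertible on $\wc{H}_0^\perp$. Reading the block decomposition \eqref{eq:block}, the $\wc{H}_1$-block of $\wc{L}_{\vec{0}}$ vanishes and $\wc{H}_1$ couples to the rest only through $\im Q_{\vec{0}}:\wc{H}_1\to\wc{H}_2$; but $Q_{\vec{0}}$ is not bounded below (e.g.\ for i.i.d.\ disorder, take $f_n = |B_n|^{-1/2}\sum_{y\in B_n}\omega(y)\in L^2_0(\Omega)$, which has $\|f_n\|=1$ but $\|Q_{\vec{0}}f_n\|\to 0$). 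Consequently the resolvent $(z/t + P_0^\perp\wc{L}_{\nicefrac{\vec{k}}{\sqrt{t}}}P_0^\perp)^{-1}$ becomes singular as $t\to\infty$, and your pointwise limit for $W_t$ cannot be obtained by ``plugging in $t=\infty$.'' Concretely, a second Schur reduction onto $\wc{H}_2$ produces the term $\frac{1}{w}Q_{\vec{k}}Q_{\vec{k}}^\dagger$ in \eqref{eq:Gamma}; with $w = (1+\im y)/t$ this becomes $\frac{t}{1+\im y}\,Q_{\nicefrac{\vec{k}}{\sqrt{t}}}Q_{\nicefrac{\vec{k}}{\sqrt{t}}}^\dagger$, which diverges rather than disappearing. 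The paper must show instead that $\Gamma_{\nicefrac{\vec{k}}{\sqrt{t}}}(\nicefrac{z}{t})$ stays uniformly bounded \tem\ which requires the explicit dissipation estimate \eqref{eq:dissipation!}, not sectoriality \tem\ and converges in the strong operator topology to the \emph{restricted} inverse $(\Pi_{\vec{0}}^\perp\mc{M}\Pi_{\vec{0}}^\perp)^{-1}\Pi_{\vec{0}}^\perp$, where $\Pi_{\vec{0}}$ is the projection onto $\overline{\ran Q_{\vec{0}}}\subset\wc{H}_2$. That convergence is nontrivial precisely because the singular subspace $\ran Q_{\nicefrac{\vec{k}}{\sqrt{t}}}$ depends on $\vec{k}$, and this is where Assumption \ref{ass:sigmax} enters the proof in an essential way: it furnishes a strongly continuous unitary group with $Q_{\vec{k}} = U_{\vec{k}}Q_{\vec{0}}U_{-\vec{k}}$, hence $\Pi_{\vec{k}} = U_{\vec{k}}\Pi_{\vec{0}}U_{-\vec{k}}\to\Pi_{\vec{0}}$ strongly. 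Your proposal never invokes Assumption \ref{ass:sigmax} at all, which is a clear sign that the singular-limit mechanism is missing.

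Two secondary points. First, your uniform bound $|W_t(y,\vec{k})|\le C|\vec{k}|^2$ in Step 3 does not follow from sectoriality of $P_0^\perp\wc{L}P_0^\perp$, since $z/t \to 0$ approaches the boundary of its numerical range; one needs the stronger estimate $\norm{\Gamma_{\vec{k}}(w)}\le C$ uniformly in $\Re w>0$, eq.\ \eqref{eq:uniformnorm}, which again comes from \eqref{eq:dissipation!}. Second, even assuming invertibility were available, your positivity argument would be incomplete: $P_0^\perp\wc{L}_{\vec{0}}P_0^\perp$ is accretive but $\Re P_0^\perp\wc{L}_{\vec{0}}P_0^\perp$ vanishes on $\wc{H}_1\oplus\wc{H}_2$, so mere accretivity does not give strict positivity of the quadratic form; the strict positivity (and hence positive definiteness of $\vec{D}$) again rests on propagating dissipation from $\wc{H}_3$ via \eqref{eq:dissipation!}, together with the non-degeneracy $\norm{\sum_i\vec{k}_i f_{\vec{0}}^{(i)}}>0$ from Ass.\ \ref{ass:H0}(3). (Formally, if $P_0^\perp\wc{L}_{\vec{0}}P_0^\perp$ were invertible, the resulting $\wc{H}_2$-block $\mc{M}^{-1}-\mc{M}^{-1}Q_{\vec{0}}(Q_{\vec{0}}^\dagger\mc{M}^{-1}Q_{\vec{0}})^{-1}Q_{\vec{0}}^\dagger\mc{M}^{-1}$ would agree with the paper's $(\Pi_{\vec{0}}^\perp\mc{M}\Pi_{\vec{0}}^\perp)^{-1}\Pi_{\vec{0}}^\perp$, so your target formula is the right one; but obtaining it requires the careful singular-perturbation analysis you have elided.)
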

\begin{rem*}
Note that $\wh{\rho}_{0;\vec{0}}(0) = \sum_{\zeta} \abs{\psi_0(-\zeta)}^2 =1$, so eqs.\ \eqref{eq:reduction1}, \eqref{eq:truncatedcontour} and \eqref{eq:heart} imply
$$\sum_{x\in \Z^d} \e^{\im \vec{k} \cdot \frac{x}{\sqrt{t}}} \Ev{|\psi_t(x)|^2} \ = \ \frac{1}{2\pi} \int_{-tM}^{tM} \frac{1}{1+\im y +\half \ipc{\vec{k}}{\vec{D}\vec{k}}} \e^{1 + \im y } \di y \ + \ o(1)$$
as $t\rightarrow \infty$ whenever $\psi_0\in \ell^1$.  Eq.\ \eqref{eq:suffices} follows in the limit $t\rightarrow \infty$ by a simple residue calculation. 
\end{rem*}
\subsection{Proof of Lemma \ref{lem:heartofthematter}}  
We begin by considering the resolvent $$\ipc{\bb{1}\otimes \delta_0}{ (z + t \wc{L}_{ \vec{k} } )^{-1}  \bb{1}\otimes \phi}$$
for $\Re z >0$, $\vec{k}\in \R^d$ and $\phi\in \ell^2(\Z^d)$.  Note that $\bb{1}\otimes \delta_0 \in \ran P_0$ and $\bb{1}\otimes \phi \in \ran P_0 + P_2$.  By the resolvent identity and the Schur formula, 
\begin{equation}\ P_0 \left (z +  t\wc{L}_{\vec{k}} \right )^{-1} (P_0 + P_2)
 \label{eq:heartcalc} \
 = \ \frac{1}{z +  t \ipc{f_{\vec{k}} }{\Gamma_{\vec{k}}({\scriptstyle \frac{z}{t} }) f_{\vec{k}}}_{\wc{H}_2}}  
 \left ( P_0  + \Phi_{\vec{k}}^\dagger\Gamma_{\vec{k}}({\scriptstyle \frac{z}{t} }) \right ) ,
 \end{equation}
 where 
 \begin{equation}\label{eq:fk}
 f_{\vec{k}}(\omega,x) \ := \ \left [ \Phi_{\vec{k}} \bb{1} \otimes \delta_0 \right ] (\omega,x) \ = \  ( 1 - \e^{\im \vec{k} \cdot x}  ) h(x)
 \end{equation}
  and 
\begin{multline}\label{eq:Gamma} \Gamma_{\vec{k}}(w) \ := \ P_2  ( w + \wc{L}_{\vec{k};0}^\perp)^{-1} P_2
\\
 = \  \left  ( w + \wc{L}_{\vec{k};2} 
+  \frac{1}{w} Q_{\vec{k}} Q_{\vec{k}}^\dagger + g^2 \wh{V}^\dagger \left ( w + \wc{L}_{\vec{k};3} \right )^{-1} \wh{V} \right )^{-1}.
\end{multline}
Thus for $\vec{k}\in \R^d$, $y\in \R$ and $\vec{D}$ \emph{any} $d\times d$ symmetric matrix,
\begin{multline} I(t,\vec{k},y) \ := \  \ipc{\bb{1}\otimes \delta_0}{ \left ( 1 + \im y + t \wc{L}_{\nicefrac{\vec{k}}{\sqrt{t}}} \right )^{-1}  \bb{1}\otimes \phi} \ -\  \frac{\phi(0)}{ 1 + \im y  + {\scriptscriptstyle \frac{1}{2}}\ipc{\vec{k}}{\vec{D}\vec{k}}}
\\
\begin{aligned}
=& \ \phi(0)  \frac{1}{1+ \im y + \Lambda(t,\vec{k},1+\im y)}
 \bigg ( 
{\scriptscriptstyle \frac{1}{2}} \ipc{\vec{k}}{\vec{D}\vec{k}} - 
 \Lambda(t,\vec{k},1+\im y) 
   \bigg )  \frac{1}{ 1 + \im y  + {\scriptscriptstyle \frac{1}{2}}\ipc{\vec{k}}{\vec{D}\vec{k}}} \\
   & \qquad  +   \frac{1}{1+ \im y + \Lambda(t,\vec{k},1+\im y)} \ipc{f_{\nicefrac{\vec{k}}{\sqrt{t}}}}{\Gamma_{\nicefrac{\vec{k}}{\sqrt{t}}}({\scriptstyle \frac{1+\im y}{t} }) P_2 \bb{1}\otimes \phi }_{\wc{H}_2}  ,
   \end{aligned}
\end{multline}
where
\begin{equation}
\Lambda(t,\vec{k},z) \ = \ 
t \ipc{f_{\nicefrac{\vec{k}}{\sqrt{t}}} }{\Gamma_{\nicefrac{\vec{k}}{\sqrt{t}}}({\scriptstyle \frac{z }{t} }) f_{\nicefrac{\vec{k}}{\sqrt{t}}}}_{\wc{H}_2}.
\end{equation}

To prove the Lemma it suffices to show
\begin{enumerate}
\item For each $\vec{k}\in \R^d$ and $\Re w >0$ the operator $\Gamma_{\vec{k}}(w)$ is accretive.  That is, $\Re \Gamma_{\vec{k}}(w) \ge 0$ in the sense of quadratic forms.
\item  There is $C <\infty$ such that
$$ \norm{\Gamma_{\vec{k}}(w)}  \ \le \ C  $$ for  $\vec{k}\in \bb{R}^d$ and $\Re w >0$. 
\item There is a positive definite $\vec{D}$ such that $\Lambda(t,\vec{k},z) \rightarrow {\scriptstyle \frac{1}{2}} \ipc{\vec{k}}{\vec{D}\vec{k}}$ as $t\rightarrow \infty$ whenever $\Re z >0$.
\end{enumerate}
Indeed, since $\norm{f_{\nicefrac{\vec{k}}{\sqrt{t}}}} = \Oh{\nicefrac{1}{\sqrt{t}}}$ (see eq.\ \eqref{eq:fk}), it follows from (2) that
$$\abs{\Lambda(t,\vec{k},1+\im y)} \ \lesssim \ 1$$
for  $y\in [-tM,tM]$ and $t\ge 1$. Here $A\lesssim B$ indicates $A\le c B$ with a  constant $c$ independent of the parameters  in  the inequality.  By (2), $\Re \Lambda(t,\vec{k},1+\im y)\ge 0$ and thus 
$$ \abs{\frac{1}{1 + \im y + \Lambda(t,\vec{k},1+\im y)}} \ \lesssim \ \frac{1}{1 + |y|}.$$
Therefore
$$| I( t,\vec{k},y)| \ \lesssim \ \frac{1}{(1+|y|)^2} \abs{ {\scriptscriptstyle \frac{1}{2}} \ipc{\vec{k}}{\vec{D}\vec{k}} - 
 \Lambda(t,\vec{k},1+\im y) } + \frac{1}{\sqrt{t}} \frac{1}{1 + |y|},$$
 and so
$$ \int_{-tM}^{tM} \abs{I( t,\vec{k},y)}\di y \ \lesssim \ \int_{-tM}^{tM}\frac{1}{(1+|y|)^2} \abs{ {\scriptscriptstyle \frac{1}{2}} \ipc{\vec{k}}{\vec{D}\vec{k}} - 
 \Lambda(t,\vec{k},1+\im y) } \di y \  + \ \frac{\log t}{\sqrt{t}} \
 \longrightarrow \ 0
 $$
 as $t \rightarrow \infty$ by  (3) and dominated convergence.

A key fact, used repeatedly below, is the following well known
\begin{prop} Let $A$ be a closed operator on  a Hilbert space.  If $\Re A \ge c >0$ for some real constant $c$ in the sense of quadratic forms, then $A$ is boundedly invertible and 
\begin{equation} \norm{A^{-1}} \ \le \ \frac{1}{c}.\label{eq:estimate_the_norm}	
\end{equation}
\end{prop}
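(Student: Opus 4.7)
The plan is to derive the norm bound directly from the quadratic form inequality by Cauchy--Schwarz, and then bootstrap injectivity of both $A$ and $A^\dagger$ into bijectivity via the closed range theorem. The hypothesis $\Re \ipc{f}{Af} \ge c \norm{f}^2$ for $f\in\mc{D}(A)$ gives, via $\abs{\ipc{f}{Af}} \le \norm{f}\norm{Af}$, the lower bound
\begin{equation*}
c \norm{f}^2 \ \le \ \Re \ipc{f}{Af} \ \le \ \norm{f}\norm{Af},
\end{equation*}
hence $\norm{Af} \ge c\norm{f}$ for every $f\in \mc{D}(A)$. This immediately yields injectivity of $A$ and, using closedness, shows that $\ran A$ is closed: if $Af_n \to g$, then $(f_n)$ is Cauchy by the lower bound, so $f_n \to f$ and by closedness $f\in \mc{D}(A)$ with $Af=g$.

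Next, I would show the same bound for $A^\dagger$. For $f\in \mc{D}(A^\dagger)\cap \mc{D}(A)$ one has $\Re \ipc{f}{A^\dagger f} = \Re \overline{\ipc{f}{Af}} = \Re\ipc{f}{Af} \ge c\norm{f}^2$, so by the same Cauchy--Schwarz argument $\norm{A^\dagger f} \ge c \norm{f}$ on that intersection; the full inequality on $\mc{D}(A^\dagger)$ then follows because $\mc{D}(A)$ is a core for the relevant sesquilinear form (or, more concretely, one can argue directly from the fact that $A$ is densely defined and closed, so $A^{\dagger\dagger}=A$ and the accretivity passes to $A^\dagger$). In particular $A^\dagger$ is injective, so $\ran A = (\ker A^\dagger)^\perp = \mc{H}$.

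Combining closedness of $\ran A$ with its density gives surjectivity, so $A^{-1}$ exists as an everywhere defined operator, and the inequality $\norm{Af}\ge c\norm{f}$ reads $\norm{A^{-1}g} \le c^{-1}\norm{g}$ upon setting $g=Af$, establishing \eqref{eq:estimate_the_norm}. The only subtle point is the passage of the accretivity bound from $A$ to $A^\dagger$; this is standard for closed densely defined operators and can be handled either by invoking $A^{\dagger\dagger}=A$ or, if one prefers to avoid that, by noting that the inequality $\norm{Af}\ge c\norm{f}$ on $\mc{D}(A)$ already suffices to conclude $\ran A$ is closed, and then observing that if $\ran A$ were a proper closed subspace there would exist a nonzero $h\perp \ran A$, whence $h\in \mc{D}(A^\dagger)$ and $A^\dagger h=0$, contradicting $\Re \ipc{h}{A^\dagger h}\ge c\norm{h}^2>0$.
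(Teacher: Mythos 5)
Your derivation of the norm estimate coincides with the paper's: Cauchy--Schwarz applied to the accretivity bound gives $c\norm{f}\le\norm{Af}$, from which injectivity, closedness of $\ran A$, and $\norm{A^{-1}g}\le c^{-1}\norm{g}$ for $g\in\ran A$ all follow. Up to that point your argument and the one printed in the paper are the same.

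The surjectivity step, however, contains a genuine gap: every one of your formulations rests on the claim that $\Re\ipc{g}{A^\dagger g}\ge c\norm{g}^2$ for all $g\in\mc{D}(A^\dagger)$, and this does \emph{not} follow from the hypothesis $\Re\ipc{f}{Af}\ge c\norm{f}^2$ on $\mc{D}(A)$. As a result, the proposition read literally for an arbitrary closed, densely defined operator is in fact false, so the step cannot be repaired without adding a hypothesis. A concrete counterexample: on $L^2(0,\infty)$ let $B$ be the closure of $-\di^2/\di x^2$ with initial domain $C_c^\infty(0,\infty)$, and set $A=B+1$. Then $A$ is closed and densely defined, and for every $f\in\mc{D}(A)$ one has $\Re\ipc{f}{Af}=\int_0^\infty\abs{f'(x)}^2\di x+\norm{f}^2\ge\norm{f}^2$, so the accretivity hypothesis holds with $c=1$. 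Yet $g(x)=\e^{-x}$ lies in $\mc{D}(A^\dagger)$ with $A^\dagger g=-g''+g=0$; thus $g\perp\ran A$, $A$ is not surjective, and $\Re\ipc{g}{A^\dagger g}=0<\norm{g}^2$ shows the inequality you invoked for $A^\dagger$ fails. The identity $A^{\dagger\dagger}=A$ is true but irrelevant here, and $\mc{D}(A)\cap\mc{D}(A^\dagger)$ need not be a form core for $A^\dagger$.

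What makes the conclusion valid in the paper is an implicit \emph{maximality} hypothesis. All the operators the proposition is applied to are either bounded on $\wc{H}_2$ (in which case $\Re A^\dagger =\Re A$ holds automatically and your argument does close) or maximally accretive perturbations of $B$ (in which case $(-c,\infty)$ is in the resolvent set, and invertibility of $A$ follows by Neumann-series continuation from that of $A+\lambda$, $\lambda>0$). That is precisely the setting of the cited Kato Theorem~V.3.2; the short in-text computation in the paper is a bound on $\norm{A^{-1}}$ once the inverse is known to exist, not a proof that it exists.
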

 In fact, this follows from the more general estimate 
\begin{equation}\label{eq:numerical_range_norm_bound}\norm{(z-A)^{-1}} \ \le \ \frac{1}{\text{distance from } z \text{ to numerical range of }A.},\end{equation}
already used above to bound the resolvent in the contour integral representation eq.\ \eqref{eq:contourint}.  Eq.\ \eqref{eq:numerical_range_norm_bound} follows from \cite[Theorem V.3.2]{Kato1995}.  For completeness, let us recall the proof of eq.\ \eqref{eq:estimate_the_norm} here.  Note that
$$c \norm{f}^2 \ \le \ \Re \ipc{f}{Af} \ \le \ \norm{f}\norm{Af}$$
for all $f\in \mc{D}(A)$.  Thus
$$ c \norm{f} \ \le \ \norm{Af},$$
from which eq.\ \eqref{eq:estimate_the_norm} follows.

\subsubsection{Boundedness and accretiveness of $\Gamma_{\vec{k}}(w)$}
The accretiveness of $\Gamma_{\vec{k}}(w)$ is clear from the definition eq.\ \eqref{eq:Gamma} since it is the projection of the inverse of an operator with positive real part.  Furthermore from the sectoriality of $\wc{L}_{\vec{k};0}^\perp$, the norm of $\|\Gamma_{\vec{k}}(w)\| \le 1$ if  $\Re w > 1$ or $|\Im w | > M$ with $M$ as above. 

Consider $w$ with $0 < \Re w \le 1$ and $|\Im w| \le M$.   By the gap condition (Ass.\ \ref{ass:gap}), $\Re \wc{L}_{\vec{k};3} = \Re B\ge \nicefrac{1}{\tau}$.  Thus
\begin{multline*}\Re \ipc{f}{\wh{V}^\dagger(w + \wc{L}_{\vec{k};3})^{-1} \wh{V} f}_{\wc{H}_2}
\ \ge  \ \Re \ipc{ ( w + \wc{L}_{\vec{k};3})^{-1} \wh{V}f}{ B_3 (w + \wc{L}_{\vec{k};3})^{-1}\wh{V}f}_{\wc{H}_3} \\ \ge \  \frac{1}{\tau} \norm{(w + \wc{L}_{\vec{k};3})^{-1}\wh{V}f}_{\wc{H}_3}^2.
\end{multline*}
By the non-degeneracy of $v$ (Ass. \ref{ass:V}),
\begin{align*} \norm{B^{-1} V f}_{\wc{H}_3}^2 \ =& \ \sum_{x\neq 0} \int_\Omega \int_{\Alpha} \abs{B^{-1}v(\sigma_xa,\sigma_x \omega)- B^{-1}v(a,\omega)}^2 \mu_\Alpha(\di a) \abs{f(\omega,x)}^2 \mu_\Omega(\di \omega)  \\ \ge& \ \chi \norm{f}_{\wc{H}_2}^2 
\end{align*}
for $f\in \wc{H}_2$.  Thus
\begin{align}\Re \ipc{f}{\wh{V}^\dagger (w + \wc{L}_{\vec{k};3})^{-1} \wh{V} f}_{\wc{H}_2} \nonumber 
\  \ge& \ \frac{1}{\tau} \norm{ \left( 1 + B^{-1} \left ( w + \im \wc{K}_{\vec{k};3} + \im \wc{U}_3 +  g \im \wc{V}_{3} \right ) \right )^{-1}  B^{-1} \wh{V} f }^2_{\wc{H}_3} \\ \nonumber
\ge& \ \frac{1}{\tau} \frac{1}{1 + \norm{B^{-1} \left ( w + \im \wc{K}_{\vec{k};3} + \im \wc{U}_3 + g \im \wc{V}_{3} \right )} }\norm{B^{-1} V f}_{\wc{H}_3}^2  \\
\ge& \ \frac{\chi}{\tau}   \frac{1}{\left ( 1 + \tau \left ( N + |w| \right ) \right )^2}  \norm{ f}_{\wc{H}_2}^2   \label{eq:dissipation!}
\end{align}
where $N= \max_{\vec{k}} \|\wc{K}_{\vec{k}}\| + \|\wc{U}\| + g \|\wc{V}\|.$ Note that  this estimate is uniform in $\vec{k}$ and holds for $\Re w \ge 0$, including $w=0$.  It follows from eq.\ \eqref{eq:dissipation!} that 
\begin{equation}\label{eq:uniformnorm}\norm{\Gamma_{\vec{k}}(w)} \ \le \ \frac{\tau}{g^2 \chi} \left ( 1 + \tau ( N + 1 + M ) \right )^2
\end{equation}
if $0 <\Re w \le 1$ and $|\Im w| \le M$.

\subsubsection{Limit of $\Lambda(t,\vec{k},y)$}
Note that
$$\sqrt{t} f_{\nicefrac{\vec{k}}{\sqrt{t}}} \ = \ -\im \sum_{i=1}^{d} \vec{k}_i f_{\vec{0}}^{(i)} \ + \ \Oh{ \nicefrac{1}{\sqrt{t}}}$$
where 
\begin{equation}\label{eq:phii}
f_{\vec{0}}^{(i)}(a, \omega,x) \ = \   x_i h(x), \quad i=1,\ldots,d,
\end{equation} 
and $f_{\vec{0}}^{(i)}\in \mc{H}_2$ by the short range bound of Ass.\ \ref{ass:H0}.
Thus, by eq.\ \eqref{eq:uniformnorm},
\begin{equation} \label{eq:Lambdareduc}\Lambda(t,\vec{k},z) \ = \ \sum_{i,j} \ipc{ f_{\vec{0}}^{(i)}  }{\Gamma_{\nicefrac{\vec{k}}{\sqrt{t}}}({\scriptstyle \frac{z}{t}} )  f_{\vec{0}}^{(j)}}_{\wc{H}_2} \vec{k}_i \vec{k}_j \ + \ \Oh{ \nicefrac{1}{\sqrt{t}}}.\end{equation}

Since $\wc{L}_{\nicefrac{\vec{k}}{\sqrt{t}}}= \wc{L}_{\vec{0}} + \Oh{ \nicefrac{1}{\sqrt{t}}}$, we conclude by the resolvent identity that
$$ \wh{V}^\dagger \left ( \frac{z}{t} + \wc{L}_{\nicefrac{\vec{k}}{\sqrt{t}};3} \right )^{-1} \wh{V} \ = \ \wh{V}^\dagger \wc{L}_{\vec{0};3}^{-1} \wh{V} \ + \ \Oh{ \nicefrac{1}{\sqrt{t}}},$$
and by  a further application of the resolvent identity that
$$\Gamma_{\nicefrac{\vec{k}}{\sqrt{t}}}({\scriptstyle \frac{z }{t}}) \ = \ \left (  \wc{L}_{\vec{0};2} + \frac{t}{z } Q_{\nicefrac{\vec{k}}{\sqrt{t}}}Q_{\nicefrac{\vec{k}}{\sqrt{t}}}^\dagger + g^2 \wh{V}^\dagger\wc{L}_{\vec{0};3}^{-1} \wh{V} \right )^{-1} \ + \Oh{ \nicefrac{1}{\sqrt{t}}} .$$
Let $$\mc{M} =  \wc{L}_{\vec{0};2} + g^2 \wh{V}^\dagger\wc{L}_{\vec{0};3}^{-1} \wh{V} \ = \ \im \wc{K}_{\vec{0};2} + \im \wc{U}_2 + g^2\wh{V}^\dagger\wc{L}_{\vec{0};3}^{-1} \wh{V} ;$$ 
so $\mc{M}$ is bounded and, by eq.\ \eqref{eq:dissipation!} (with $w=0$), has strictly positive real part.  Since
$$ \frac{t}{|z|} \norm{\Gamma_{\nicefrac{\vec{k}}{\sqrt{t}}}(\nicefrac{z}{t}) Q_{\nicefrac{\vec{k}}{\sqrt{t}}}}^2 \ = \ \norm{\Gamma_{\nicefrac{\vec{k}}{\sqrt{t}}}(\nicefrac{z}{t})^\dagger - \Gamma_{\nicefrac{\vec{k}}{\sqrt{t}}}(\nicefrac{z}{t}) \mc{M}\Gamma_{\nicefrac{\vec{k}}{\sqrt{t}}}(\nicefrac{z}{t})^\dagger    } ,$$
we find that
\begin{equation}\label{eq:GammaQk} \norm{ \Gamma_{\nicefrac{\vec{k}}{\sqrt{t}}}(\nicefrac{z}{t}) Q_{\nicefrac{\vec{k}}{\sqrt{t}}}} \ = \ \mc{O} ( \nicefrac{1}{\sqrt{t}}).
\end{equation}

Let $\Pi_{\vec{k}} = $ projection onto the range of $Q_{\vec{k}}$ and let $\Pi_\vec{k}^\perp=I - \Pi_{\vec{k}}$.  Since $Q_{\vec{k}}^\dagger \Pi_\vec{k}^\perp=0$,
\begin{equation}\label{eq:GammaPperp}
\Gamma_{\nicefrac{\vec{k}}{\sqrt{t}}}(\nicefrac{z}{t}) \mc{M} \Pi_{\nicefrac{\vec{k}}{\sqrt{t}}}^\perp \ = \ \Pi_{\nicefrac{\vec{k}}{\sqrt{t}}}^\perp + \Oh{\nicefrac{1}{\sqrt{t}}}.
\end{equation}
Now $Q_{\vec{k}}$ is the map from $\wc{H}_1 =  L^2_0(\Omega)\otimes \{\delta_0\}$ into $\wc{H}_2= L^2(\Omega \times \Z^d\setminus \{0\})$ given by 
$$Q_{\vec{k}} \ = \ \sum_{x\neq 0} h(x) J_x^\dagger \left [ I - \e^{\im \vec{k}\cdot x} S_x \right ] J_0$$
where $S_xf(\omega) \ = \ f(\sigma_x \omega)$, $J_x^\dagger f = f\otimes \delta_x$ and $J_0 f\otimes \delta_0 = f$.  By Ass.\ \ref{ass:sigmax}, there is a strongly continuous unitary group $\vec{k}\mapsto U_{\vec{k}}$  on $L^2(\Omega)$ such that $U_{\vec{k}} S_x U_{-\vec{k}} = \e^{\im \vec{k}\cdot x} S_x$ on $L^2_0(\Omega)$.  Lifting $U_{\vec{k}}$ to $\wc{H}_1$ and $\wc{H}_2$ by tensoring with the identity map we  find that
$$Q_{\vec{k}} \ = \ U_{\vec{k}} Q_0 U_{-\vec{k}}.$$  It follows that $\Pi_{\vec{k}} = U_{\vec{k}} \Pi_{\vec{0}} U_{-\vec{k}}$ and thus, by eqs.\ \eqref{eq:GammaQk} and \eqref{eq:GammaPperp}, that
\begin{equation}\label{eq:GammaQk'} \norm{ \Gamma_{\nicefrac{\vec{k}}{\sqrt{t}}}(\nicefrac{z}{t}) U_{\nicefrac{\vec{k}}{\sqrt{t}}} Q_{\vec{0}} } \ = \ \mc{O} ( \nicefrac{1}{\sqrt{t}}) 
\end{equation}
and
\begin{equation}\label{eq:GammaPperp'}
\Gamma_{\nicefrac{\vec{k}}{\sqrt{t}}}(\nicefrac{z}{t}) \mc{M} U_{\nicefrac{\vec{k}}{\sqrt{t}}} \Pi_{\vec{0}}^\perp \ = \ U_{\nicefrac{\vec{k}}{\sqrt{t}}} \Pi_{\vec{0}}^\perp + \Oh{\nicefrac{1}{\sqrt{t}}}.
\end{equation}
Eq.\ \eqref{eq:GammaQk'} implies that $ \Gamma_{\nicefrac{\vec{k}}{\sqrt{t}}}(\nicefrac{z}{t}) U_{\nicefrac{\vec{k}}{\sqrt{t}}} \Pi_{\vec{0}} \rightarrow 0$ in the strong operator topology (SOT). By strong continuity of $U_{\vec{k}}$ and eq.\ \eqref{eq:uniformnorm}, we conclude that
$$ \Gamma_{\nicefrac{\vec{k}}{\sqrt{t}}}(\nicefrac{z}{t})  \Pi_{\vec{0}} \ \xrightarrow[]{\text{SOT}} \  0.$$
Thus, by eq.\ \eqref{eq:GammaPperp'}, and strong continuity of $U_{\vec{k}}$ again, 
 $$ \Gamma_{\nicefrac{\vec{k}}{\sqrt{t}}}(\nicefrac{z}{t})\Pi_{\vec{0}}^\perp \mc{M} \Pi_{ \vec{0} }^\perp  \ \xrightarrow[]{\text{SOT}} \   \Pi_{\vec{0}}^\perp.$$
Therefore
\begin{equation}\label{eq:finallimit}
\Gamma_{\nicefrac{\vec{k}}{\sqrt{t}}}(\nicefrac{z}{t})  \ \xrightarrow[]{\text{SOT}} \    \left ( \Pi_{\vec{0}}^\perp \mc{M} \Pi_{ \vec{0} }^\perp \right )^{-1} \Pi_{\vec{0}}^\perp ,
\end{equation}
where $(\Pi_{\vec{0}}^\perp \mc{M} \Pi_{ \vec{0} }^\perp)^{-1}$ is bounded on $\ran \Pi_{\vec{0}}^\perp$ by eq.\ \eqref{eq:dissipation!}. 

The functions $f_{\vec{k}}^{(i)}$, $i=1,\ldots,d$, appearing in eq.\ \eqref{eq:Lambdareduc} do not depend on $\omega$.  Since $\int_\Omega Q_{\vec{0}}f(\omega)=0$ for any $f\in L^2_0(\Omega)$, we conclude that $\Pi_0^\perp f_{\vec{0}}^{(i)}= f_{\vec{0}}^{(i)}$.  Thus, by eq.\ \eqref{eq:finallimit}, 
$$\lim_{t\rightarrow \infty } \Lambda(t,\vec{k},z)  \ = \  \frac{1}{2} \sum_{i,j} \vec{D}_{i,j} \vec{k}_i \vec{k}_j $$
where
\begin{multline}\label{eq:diffusionconstant} \vec{D}_{i,j} \ := \  \ipc{  f_{\vec{0}}^{(i)}} { \left ( \Pi_{\vec{0}}^\perp ( \wc{L}_{\vec{0};2} + g^2 \wh{V}^\dagger\wc{L}_{\vec{0};3}^{-1} \wh{V})\Pi_{\vec{0}}^\perp \right )^{-1}  f_{\vec{0}}^{(j)} }_{\ran \Pi_{\vec{0}}^\perp} \\ + \ipc{   f_{\vec{0}}^{(j)}} { \left ( \Pi_{\vec{0}}^\perp ( \wc{L}_{\vec{0};2} + g^2 \wh{V}^\dagger\wc{L}_{\vec{0};3}^{-1} \wh{V}) \Pi_{\vec{0}}^\perp \right )^{-1}  f_{\vec{0}}^{(i)} }_{\ran \Pi_{\vec{0}}^\perp} . 
\end{multline}
It is clear from the definition eq.\ \eqref{eq:diffusionconstant} that $\vec{D}$ is symmetric.   Furthermore,
$$\Re \ipc{\vec{k}}{\vec{D} \vec{k}} \ = \ 2  \Re \ipc{   \sum_{i} \vec{k}_i f_{\vec{0}}^{(i)}} {\left ( \Pi_{\vec{0}}^\perp ( \wc{L}_{\vec{0};2} + g^2 \wh{V}^\dagger\wc{L}_{\vec{0};3}^{-1} \wh{V})\Pi_{\vec{0}}^\perp \right )^{-1}  \sum_{i} \vec{k}_i f_{\vec{0}}^{(i)} }_{\ran \Pi_{\vec{0}}^\perp} \ > \  0,$$
where strict inequality holds because  $\norm{\sum_{i} \vec{k}_i f_{\vec{0}}^{(i)}} >0$ by the non-degeneracy part of Ass.\ \ref{ass:H0}.  Thus $\vec{D}$ is positive definite provided it is real. That $\vec{D}$ \emph{is} real is proved in the following section by identifying the matrix elements $\vec{D}_{i,j}$ with limits of diffusively scaled position moments. 

\subsection{Diffusive scaling and reality of the diffusion matrix} Let $\psi_0\in \ell^2(\Z^d)$ and suppose $(1+ |x|)\psi_0(x) \in \ell^2(\Z^d)$. By Lem.\ \ref{lem:quadratic} in appendix \S \ref{sec:quadratic} below, $\sum_{x} (1 + |x|^2) |\psi_t(x)|^2 \le \e^{Ct}$ for each $t >0$.  Thus the second moments of the position
$$M_{i,j}(t) \ := \  \sum_{x\in \Z^d} x_i x_j \Ev{\abs{\psi_t(x)}^2}$$
are well defined and finite. We will show that $M_{i,j}(t) \sim \vec{D}_{i,j}t$.  

Because
$$M_{i,j} (t) \ = \ - \left .  \partial_{k_i} \partial_{k_j} \sum_{x\in \Z^d} \e^{\im \vec{k}\cdot x}  \Ev{\abs{\psi_t(x)}^2} \right |_{\vec{k}=0},$$
we can use eq.\ \eqref{eq:identity} to obtain an expression for $M_{i,j}$.  A key simplification occurs because
$\e^{-t\wc{L}_\vec{0}}$ and $\e^{-t\wc{L}_{\vec{0}}^\dagger}$ act trivially on $\wc{H}_0$, since  \begin{equation}\label{eq:Lkills}\wc{L}_{\vec{0}} \bb{1} \otimes \delta_0 \ = \ \wc{L}_{\vec{0}}^\dagger \bb{1} \otimes \delta_0 \ = \ 0,
 \end{equation}
as may be read off of the block decomposition eq. \eqref{eq:block}.   Thus
\begin{align*} M_{i,j}(t) \ =& \  - \ipc{\bb{1}\otimes \delta_0}{ \bb{1}\otimes \partial_i\partial_j\wh{\rho}_{0;\vec{0}}} + \im \int_0^t\di s \ipc{\bb{1}\otimes \delta_0}{\partial_i \partial_j \wc{K}_{\vec{0}} \e^{-s \wc{L}_{\vec{0}}} \bb{1}\otimes  \wh{\rho}_{0;\vec{0}}} \\
& \ 
 +\im  \int_0^t\di s \bigg[ \ipc{\bb{1}\otimes \delta_0}{\partial_i \wc{K}_{\vec{0}} \e^{-s \wc{L}_{\vec{0}}} \bb{1}\otimes \partial_j \wh{\rho}_{0;\vec{0}}} \
 + \ipc{\bb{1}\otimes \delta_0}{\partial_j \wc{K}_{\vec{0}} \e^{-s \wc{L}_{\vec{0}}} \bb{1}\otimes \partial_i \wh{\rho}_{0;\vec{0}}} \bigg ] \\
 & \  + \int_0^t\di s\int_0^s \di r 
\bigg[ \ipc{\bb{1}\otimes \delta_0}{   \partial_i\wc{K}_{\vec{0}}\e^{-(s-r) \wc{L}_{\vec{0}}} \partial_j\wc{K}_{\vec{0}} \e^{-r \wc{L}_{\vec{0}}} \bb{1}\otimes \wh{\rho}_{0;\vec{0}}} \\
& \ \phantom{+ \int_0^t\di s\int_0^s \di r 
\bigg[ } \quad  + \ipc{\bb{1}\otimes \delta_0}{   \partial_j\wc{K}_{\vec{0}}\e^{-(s-r) \wc{L}_{\vec{0}}} \partial_i\wc{K}\e^{-r \wc{L}_{\vec{0}}} \bb{1}\otimes \wh{\rho}_{0;\vec{0}}} \bigg ].
\end{align*}
Since $$\partial_j \wc{K}_{\vec{0}} \bb{1} \otimes \delta_0 \ = \ \partial_j \Phi_{\vec{0}} \bb{1}\otimes \delta_0 \ = \ -\im f_{\vec{0}}^{(j)},$$this simplifies to
\begin{align*} M_{i,j}(t) \ =& \  M_{i,j}(0) + \im \int_0^t\di s \ipc{\partial_i \partial_j \wc{K}_{\vec{0}} \bb{1}\otimes \delta_0}{ \e^{-s \wc{L}_{\vec{0}}} \bb{1}\otimes  \wh{\rho}_{0;\vec{0}}} \\
& \ 
  - \int_0^t\di s \bigg[ \ipc{f_{\vec{0}}^{(i)}}{ \e^{-s \wc{L}_{\vec{0}}} \bb{1}\otimes \partial_j \wh{\rho}_{0;\vec{0}}} \
 + \ \ipc{f_{\vec{0}}^{(j)}}{ \e^{-s \wc{L}_{\vec{0}}} \bb{1}\otimes \partial_i \wh{\rho}_{0;\vec{0}}} \bigg ] \\
 & \  + \im \ \int_0^t\di s\int_0^s \di r 
\bigg[ \ipc{f_{\vec{0}}^{(i)}}{\e^{-(s-r) \wc{L}_{\vec{0}}} \partial_j\wc{K}_{\vec{0}} \e^{-r \wc{L}_{\vec{0}}} \bb{1}\otimes \wh{\rho}_{0;\vec{0}}} \\
& \ \phantom{+ \im \ \int_0^t\di s\int_0^s \di r 
\bigg[ } \quad   +\ \ipc{f_{\vec{0}}^{(j)}}{   \e^{-(s-r) \wc{L}_{\vec{0}}} \partial_i\wc{K}\e^{-r \wc{L}_{\vec{0}}} \bb{1}\otimes \wh{\rho}_{0;\vec{0}}} \bigg ].
\end{align*}

By the Tauberian theorem, as formulated in Feller \cite[Chapter XIII]{FellerII}, a necessary and sufficient condition for $M_{i,j}(t)\sim \vec{D}_{i,j} t$ as $t\rightarrow \infty$ is that
$$ \widetilde{M}_{i,j}(\eta) \ := \ \int_0^\infty \e^{-\eta t} dM_{i,j}(t) \ \sim \ \vec{D}_{i,j} \frac{1}{\eta}
$$
as $\eta \rightarrow 0$. However,
\begin{align}\nonumber
\wt{M}_{i,j}(\eta) \ =& \  \im  \ipc{\partial_i \partial_j \wc{K}_{\vec{0}} \bb{1}\otimes \delta_0}{ \left (  \eta +  \wc{L}_{\vec{0}} \right )^{-1}\bb{1}\otimes  \wh{\rho}_{0;\vec{0}}} \\ \nonumber
& \ -   \ipc{f_{\vec{0}}^{(i)}}{ \left ( \eta + \wc{L}_{\vec{0}} \right )^{-1} \bb{1}\otimes \partial_j \wh{\rho}_{0;\vec{0}}} \
 - \ \ipc{f_{\vec{0}}^{(j)}}{ \left ( \eta + \wc{L}_{\vec{0}} \right )^{-1} \bb{1}\otimes \partial_i \wh{\rho}_{0;\vec{0}}} \\ \nonumber 
  & \  + \ \im   \ipc{f_{\vec{0}}^{(i)}}{\left ( \eta + \wc{L}_{\vec{0}} \right )^{-1} \partial_j\wc{K}_{\vec{0}} \left ( \eta + \wc{L}_{\vec{0}} \right )^{-1}\bb{1}\otimes \wh{\rho}_{0;\vec{0}}} \\ \label{eq:Mtilde} & \ + \ \im  \ipc{f_{\vec{0}}^{(j)}}{   \left ( \eta + \wc{L}_{\vec{0}} \right )^{-1} \partial_i\wc{K}\left ( \eta + \wc{L}_{\vec{0}} \right )^{-1} \bb{1}\otimes \wh{\rho}_{0;\vec{0}}} .
\end{align}
In each inner product, the left hand vector is in $\wc{H}_2$ while the right hand vector is in $\wc{H}_0 \oplus \wc{H}_2$.  Since $\wc{L}_{\vec{0}}$ vanishes on on $\wc{H}_0$, we can compute the limit by looking at $P_2 (\eta + \wc{L}_{\vec{0}})^{-1} P_2$. By reasoning similar to what led to eq.\ \eqref{eq:finallimit}, 
\begin{multline} P_2 \frac{1}{\eta + \wc{L}_{\vec{0}}} P_2 \ = \ \frac{1}{\eta + \wc{L}_{\vec{0;2}} + g^2 \wh{V}^\dagger\wc{L}_{\vec{0};3}^{-1} \wh{V} + \frac{1}{\eta} Q_{\vec{0}} Q_{\vec{0}}^\dagger} \\ \xrightarrow[]{\text{strong operator topology}} \ \left ( \Pi_{\vec{0}}^\perp ( \wc{L}_{\vec{0};2} + g^2 \wh{V}^\dagger\wc{L}_{\vec{0};3}^{-1} \wh{V})\Pi_{ \vec{0} }^\perp \right )^{-1} \Pi_{\vec{0}}^\perp \label{eq:yetagain}
\end{multline}
where, as above, $\Pi_{\vec{0}}$ is the projection onto the range of $Q_0$.  Since $\wh{\rho}_{0;\vec{0}}(0)= \norm{\psi_0}^2=1$, \begin{multline*} \wt{M}_{i,j}(\eta)\  = \ \frac{1}{\eta} \left [ \ipc{f_{\vec{0}}^{(i)}} {\left ( \eta + \wc{L}_{\vec{0}} \right )^{-1}  f_{\vec{0}}^{(j)} } + \ipc{   f_{\vec{0}}^{(j)}} { \left ( \eta + \wc{L}_{\vec{0}} \right )^{-1}  f_{\vec{0}}^{(i)} } \right ] \wh{\rho}_{0;\vec{0}}(0) \ + \Oh{1} \\
\sim \ \frac{1}{\eta} \Bigg [ \ipc{f_{\vec{0}}^{(i)}} {\left ( \Pi_{\vec{0}}^\perp ( \wc{L}_{\vec{0};2} + g^2 \wh{V}^\dagger\wc{L}_{\vec{0};3}^{-1} \wh{V})\Pi_{ \vec{0} }^\perp \right )^{-1} f_{\vec{0}}^{(j)} }_{\ran \Pi_{\vec{0}}^\perp} \\ \qquad + \ipc{   f_{\vec{0}}^{(j)}} { \left ( \Pi_{\vec{0}}^\perp ( \wc{L}_{\vec{0};2} + g^2 \wh{V}^\dagger\wc{L}_{\vec{0};3}^{-1} \wh{V})\Pi_{ \vec{0} }^\perp \right )^{-1}  f_{\vec{0}}^{(i)} }_{\ran \Pi_{\vec{0}}^\perp} \Bigg ] \ = \ \frac{1}{\eta} \vec{D}_{i,j}  
\end{multline*}
as $\eta \rightarrow 0$.   

Therefore
$$\vec{D}_{i,j} \ = \ \lim_{\eta \rightarrow 0} \eta \wt{M}_{i,j}(\eta) \ = \ \lim_{t\rightarrow \infty} \frac{1}{t} M_{i,j}(t).$$
Therefore diffusive scaling eq.\ \eqref{eq:diffusionconstant} holds and $\vec{D}_{i,j}$ is real since it is the limit of the real quantities $t^{-1} M_{i,j}(t)$.

\subsection{Central limit theorem for quadratically bounded $f$} Let $\psi_0$ be given such that $\norm{\psi_0}=1$ and $\norm{(1+ |X|)\psi_0} < \infty$. Let $\nu_t$ denote the Borel measure on $\R^d$ defined by
$$\int_{\R^d} f(x) \nu_t(\di x) \ = \ \sum_{x\in \Z^d} f(\nicefrac{x}{\sqrt{t}}) \Ev{\abs{\psi_t(x)}^2}.$$ 
The central limit theorem for bounded continuous $f$ proved above implies that $\nu_t$ converges vaguely to the Gaussian measure
$\nu_\infty(\di \vec{r})  =   (  2\pi )^{-\nicefrac{d}{2}} \exp(-{\scriptstyle \frac{1}{2}}\ipc{\vec{r}}{\vec{D}^{-1} \vec{r}}) \di \vec{r}.$
Furthermore by diffusive scaling, obtained in the previous section, 
\begin{equation}\label{eq:diffusivescaling2} N_t \ := \ \int_{\R^d} (1+ |x|^2) \nu_t(\di x) \ \xrightarrow[]{t\rightarrow \infty} \ \int_{\R^d} (1+ |x|^2) \nu_\infty (\di x) \ =: \ N_\infty \ < \ \infty.
\end{equation}

Let $\wt{\nu}_t(\di x) \ = \ N_t^{-1} (1+|x|^2) \nu_t(\di x)$ for $0 <t \le \infty$, so $\wt{\nu}_t$ are probability measures. The vague convergence  of $\nu_t$ to $\nu_\infty$ and eq.\ \eqref{eq:diffusivescaling2} directly imply that
$$\lim_{t\rightarrow \infty} \int_{\R^d} f(x) \wt{\nu}_t(\di x) \ = \ \int_{\R^d} f(x) \wt{\nu}_\infty(\di x)$$
whenever $f$ is compactly supported.  It follows that $\wt{\nu}_t$ converges vaguely to  $\wt{\nu}_\infty$.  Since $\wt{\nu}_\infty$ is a probability measure,
$$\lim_{t\rightarrow \infty} \int_{\R^d} f(x) \wt{ \nu}_t(\di x) \ = \ \int_{\R^d} f(x) \wt{\nu}_\infty (\di x)$$
for all bounded continuous $f$ (see \cite[Theorem I Chapter VIII]{FellerII}).  Thus, by eq.\ \eqref{eq:diffusivescaling2},
$$\lim_{t\rightarrow \infty} \int_{\R^d} f(x) (1+ |x|^2) \nu_t(\di x) \ = \ \int_{\R^d} f(x)(1+|x|^2) \nu_\infty(\di x)$$
for bounded continuous $f$. That is, eq.\ \eqref{eq:genCLT} extends to quadratically bounded $f$.  
\subsection{Analyticity of $\vec{D}$ as a function of $g$}Let 
$$
\wc{A}(g) \ = \ \wh{V}^\dagger\left (\im \wc{K}_{\vec{0};3} + \im \wc{U}_3 + B_3 + g \im \wc{V}_3  \right )^{-1}  \wh{V}.$$
Since $\Re B_3 > \nicefrac{1}{\tau}$ and $\wc{V}_3$ is bounded and self-adjoint, $\wc{A}(g)$ is easily seen to be an analytic function of $g$ for $g$ in a strip of width $\nicefrac{1}{2\tau}$ around the real axis (recall that $\|\wc{V}\|$ was taken to be $2$ by scaling).  
Eq.\ \eqref{eq:dissipation!} shows that $\Re \wc{A}(g) \ge c >0$ for real $g$.  Following the proof of eq.\ \eqref{eq:dissipation!}, we see that $\Re \wc{A}(g) > 0$ for $|\Im g| <\nicefrac{1}{2\tau}$. It follows that the elements of the diffusion matrix
\begin{multline}\label{eq:diffusionconstant'} \vec{D}_{i,j}(g) \ := \  \ipc{f_{\vec{0}}^{(i)}} { 
\left ( \Pi_{\vec{0}}^\perp ( \im \wc{K}_{\vec{0};2} + \im \wc{U}_2 + g^2 \wc{A}(g) )\Pi_{ \vec{0} }^\perp \right )^{-1}f_{\vec{0}}^{(j)} }_{\ran \Pi_0^\perp} \\ + \ipc{   f_{\vec{0}}^{(j)}} { \left ( \Pi_{\vec{0}}^\perp ( \im \wc{K}_{\vec{0};2} + \im  \wc{U}_2 + g^2 \wc{A}(g)) \Pi_{ \vec{0} }^\perp \right )^{-1}  f_{\vec{0}}^{(i)} }_{\ran \Pi_0^\perp}, \end{multline}
are analytic in the region
$$\setb{ g \in \bb{C}}{ |g| >0 , \ |\arg g| < \nicefrac{\pi}{4} \ \text{and} \ |\Im g| < \nicefrac{1}{2\tau}}.$$

\subsection{Limiting behavior of $\vec{D}(g)$ for small $g$} Suppose Anderson localization eq.\ \eqref{eq:AL} holds for the evolution generated by $H_\omega = H_0 + U_\omega$. Let $$M_{i,j}(t,g) \ = \  \sum_{i,j} x_i x_j \Ev{\abs{\psi_t(x)}^2}, $$
where $\psi_t$ solves eq.\ \eqref{eq:genSE} with $\psi_t(0)=\delta_0$. Thus for $g>0$,  $\vec{D}_{i,j}(g) = \lim_t M_{i,j}(t,g)$.  For $g=0$, 
$$M_{i,j}(t,0) \ = \ \sum_{i,j} x_i x_j \Ev{\abs{\ipc{\delta_x}{\e^{-\im t H_\omega}\delta_0} }^2}$$
and localization implies that $\vec{D}_{i,j}(0) :=  \lim_{t\rightarrow \infty} t^{-1} M_{i,j}(t,0)  =  0,$
since $\limsup_t |M_{i,j}(t,0)|<\infty$ by eq.\ \eqref{eq:AL}.  

For the initial condition $\psi_0=\delta_0$, we have $\wh{\rho}_{0;\vec{0}}=\delta_0$ and therefore, by eqs.\ \eqref{eq:Mtilde} and \eqref{eq:Lkills}, 
\begin{align*} \wt{M}_{i,j}(\eta,g) \ :=& \ \int_0^\infty \e^{-\eta t} \frac{\partial_t M_{i,j}(t,g)}{\partial t} \di t  \ = \ \eta \int_0^\infty \e^{-\eta t} M_{i,j}(t,g) \di t \\
=& \ \frac{1}{\eta} \left [ \ipc{f_{\vec{0}}^{(i)}} {\left ( \eta + \wc{L}_{\vec{0}}(g) \right )^{-1}  f_{\vec{0}}^{(j)} } + \ipc{   f_{\vec{0}}^{(j)}} { \left ( \eta + \wc{L}_{\vec{0}}(g) \right )^{-1}  f_{\vec{0}}^{(i)} } \right ] ,
\end{align*}
where we have written $\wc{L}_{\vec{0}}(g)$ to indicate the dependence of the generator on $g$. This identity holds also for $g=0$:
$$ \wt{M}_{i,j}(\eta,0) \ = \ \frac{1}{\eta} \left [ \ipc{f_{\vec{0}}^{(i)}} {\left ( \eta + \im \wc{K}_{\vec{0}} +\im \wc{U} \right )^{-1}  f_{\vec{0}}^{(j)} }  + \ipc{   f_{\vec{0}}^{(j)}} { \left ( \eta + \im \wc{K}_{\vec{0}} +\im \wc{U} \right )^{-1}  f_{\vec{0}}^{(i)} } \right ].$$
Here the inner products on the right hand side may be restricted to $L^2(\Omega \times \Z^d) = \wc{H}_0 \oplus \wc{H}_1\oplus \wc{H}_2$ since this space is invariant under $\wc{L}_{\vec{0}}(0)$.  

Because the exponentially averaged moments $\wt{M}_{j,j}(\eta,0)$, $j=1,\ldots,d$, are real, 
\begin{multline*}\ipc{f_{\vec{0}}^{(j)}} {\left ( \eta + \im \wc{K}_{\vec{0}} +\im \wc{U} \right )^{-1}  f_{\vec{0}}^{(j)} } \ = \ \Re \ipc{f_{\vec{0}}^{(j)}} {\left ( \eta + \im \wc{K}_{\vec{0}} +\im \wc{U} \right )^{-1}  f_{\vec{0}}^{(j)} } \\ = \ \eta \ipc{f_{\vec{0}}^{(j)}} {\left ( \eta^2 + \left ( \wc{K}_{\vec{0}} + \wc{U} \right )^2 \right )^{-1}  f_{\vec{0}}^{(j)} }. \end{multline*}
Thus, since $\wt{M}_{j,j}(\eta,0)$ is bounded as $\eta\rightarrow 0$,
$$\limsup_{\eta \rightarrow 0} \ipc{f_{\vec{0}}^{(j)}} {\left ( \eta^2 + \left ( \wc{K}_{\vec{0}} + \wc{U} \right )^2 \right )^{-1}  f_{\vec{0}}^{(j)} } \  < \ \infty.
$$
It follows that $f_{\vec{0}}^{(j)}$, $j=1,\ldots,d$, are in the domain of the inverse of the self-adjoint operator $\wc{K}_{\vec{0}} + \wc{U}$.  Since $\wt{M}_{i,j}(\eta,0)$ are real, 
$$\lim_{\eta \rightarrow 0} \wt{M}_{i,j}(\eta,0) \ = \ 
2 \Re \ipc{(\wc{K}_{\vec{0}} + \wc{U})^{-1}f_{\vec{0}}^{(i)}}{(\wc{K}_{\vec{0}} + \wc{U})^{-1}f_{\vec{0}}^{(j)}}. $$

For $\eta >0$, $\wt{M}_{i,j}(\eta,g)$ is an analytic function of $g$ and furthermore its derivative at $g=0$ vanishes,
\begin{align*}\left . \frac{\partial}{\partial g} \wt{M}_{i,j}(\eta,g) \right |_{g=0} \ &= \  -\frac{\im}{\eta}  \bigg [ \ipc{\left ( \eta - \im \wc{K}_{\vec{0}} -\im \wc{U} \right )^{-1} f_{\vec{0}}^{(i)}} {\wc{V}\left ( \eta + \im \wc{K}_{\vec{0}} +\im \wc{U} \right )^{-1}   f_{\vec{0}}^{(j)} } \\  & \qquad \qquad + \ipc{ \left ( \eta - \im \wc{K}_{\vec{0}} -\im \wc{U} \right )^{-1}  f_{\vec{0}}^{(j)}} { \wc{V}\left ( \eta + \im \wc{K}_{\vec{0}} +\im \wc{U} \right )^{-1}  f_{\vec{0}}^{(i)} } \bigg ] \ = \ 0 ,
\end{align*}
because $( \eta - \im \wc{K}_{\vec{0}} -\im \wc{U}  )^{-1}  f_{\vec{0}}^{(i)} \in \wc{H}_2$ but $\wc{V}\left ( \eta + \im \wc{K}_{\vec{0}} +\im \wc{U} \right )^{-1}  f_{\vec{0}}^{(i)} \in \wc{H}_3$, for $i=1,\ldots,d$.  Thus, by the resolvent identity,
\begin{multline*}
\eta \wt{M}_{i,j}(\eta,g) = \ \eta \wt{M}_{i,j}(\eta,0) \\
\begin{aligned}   & + g^2 \bigg [ \ipc{\left ( \im \eta + \wc{K}_{\vec{0}} + \wc{U} \right )^{-1}f_{\vec{0}}^{(i)}} { P_2 V^\dagger P_3 \left ( \eta + \wc{L}_{\vec{0}}(g) \right )^{-1} P_3 V P_2 \left ( -\im \eta + \wc{K}_{\vec{0}} + \wc{U} \right )^{-1}   f_{\vec{0}}^{(j)} } \\
& \ \ \ +\ipc{\left ( \im \eta + \wc{K}_{\vec{0}} + \wc{U} \right )^{-1}f_{\vec{0}}^{(j)}} {P_2 V^\dagger P_3 \left ( \eta + \wc{L}_{\vec{0}}(g) \right )^{-1} P_3 V P_2 \left ( -\im  \eta +  \wc{K}_{\vec{0}} + \wc{U} \right )^{-1}   f_{\vec{0}}^{(i)}} \bigg ].
\end{aligned}
\end{multline*}
Furthermore
$$  P_3\left ( \eta + \wc{L}_{\vec{0}}(g) \right )^{-1} P_3
\ = \ \left ( \eta + \wc{L}_{\vec{0};3}(g) \right )^{-1} 
 -g^2 \left ( \eta + \wc{L}_{\vec{0};3}(g) \right )^{-1}  V \Gamma_{\vec{0}}(\eta) V^\dagger  \left ( \eta + \wc{L}_{\vec{0};3}(g) \right )^{-1},$$
 where $\Gamma_{\vec{0}}(\eta)= P_2 (\eta + \wc{L}_{\vec{0}})^{-1} P_2$, as above. By eq.\ \eqref{eq:yetagain},  
 \begin{multline}\label{eq:SOT}
 P_3\left ( \eta + \wc{L}_{\vec{0}}(g) \right )^{-1} P_3 \xrightarrow[]{\text{SOT}}\\
 ( \wc{L}_{\vec{0};3}(g)  )^{-1}  -g^2 ( \wc{L}_{\vec{0};3}(g) )^{-1}  V  \Pi_{\vec{0}}^\perp \left (\Pi_{\vec{0}}^\perp ( \wc{L}_{\vec{0};2} + g^2 \wc{A}(g) )\Pi_{ \vec{0} }^\perp \right )^{-1} \Pi_{\vec{0}}^\perp  V^\dagger   (  \wc{L}_{\vec{0};3}(g)  )^{-1}
 \end{multline}
as $\eta \rightarrow 0$.  Here $\wc{L}_{\vec{0};3}(g)$ is boundedly invertible since $\Re \wc{L}_{\vec{0};3} \ge \nicefrac{1}{\tau}$. Thus 
\begin{align*} \vec{D}_{i,j}(g) \ &= \ \lim_{\eta \rightarrow 0} \eta \wt{M}_{i,j}(\eta,g) \\ &= \ g^2 \lim_{\eta \rightarrow 0} \left [ 
\ipc{F_{\vec{0}}^{(i)} } { P_3 \left ( \eta + \wc{L}_{\vec{0}}(g) \right )^{-1} P_3 F_{\vec{0}}^{(j)}}_{\wc{H}_3} \ + \ \ipc{F_{\vec{0}}^{(j)} } { P_3 \left ( \eta + \wc{L}_{\vec{0}}(g) \right )^{-1} P_3 F_{\vec{0}}^{(i)}}_{\wc{H}_3} \right ]
\end{align*}
where
$$ F_{\vec{0}}^{(i)} \ := \ V P_2 \left ( \wc{K}_{\vec{0}} + \wc{U} \right )^{-1} f_{\vec{0}}^{(i)} \ \in \  \wc{H}_3$$
for $i=1,\ldots,d$. 

To complete the proof it remains to take the limit $g\rightarrow 0$ of $\vec{D}_{i,j}(g)/g^2$.  To do this we must compute the limit:
\begin{equation}\label{eq:limit} \mc{D} \ := \ \lim_{g\rightarrow 0} \lim_{\eta \rightarrow 0}  P_3 \left ( \eta + \wc{L}_{\vec{0}}(g) \right )^{-1}P_3.\end{equation}
The first thing to note is that, since $\Re B \ge \nicefrac{1}{\tau}$, 
$$\lim_{g \rightarrow 0} \left (\wc{L}_{\vec{0}}(g) \right )^{-1} \ = \ \left ( \im \wc{K}_{\vec{0};3} + \im \wc{U}_3 + B_3 \right )^{-1},$$
with convergence in operator norm. On the other hand
$$g^2 \Pi_{\vec{0}}^\perp \left (\Pi_{\vec{0}}^\perp ( \wc{L}_{\vec{0};2} + g^2 \wc{A}(g) )\Pi_{ \vec{0} }^\perp \right )^{-1} \Pi_{\vec{0}}^\perp \ = \ \Pi_{\vec{0}}^\perp \left (\Pi_{\vec{0}}^\perp (g^{-2} \wc{L}_{\vec{0};2} +  \wc{A}(g) )\Pi_{ \vec{0} }^\perp \right )^{-1} \Pi_{\vec{0}}^\perp,$$
where $\Re \wc{A}(g) \ge c >0$ and $\wc{A}(g)$ converges in norm to $\wc{A}(0)$ as $g\rightarrow 0$. Since $\wc{L}_{\vec{0};2} = \im (\wc{K}_{\vec{0};2} + \wc{U} )$ with $\wc{K}_{\vec{0};2} + \wc{U} $ self-adjoint, it follows that 
$$\left (\Pi_{\vec{0}}^\perp ( g^{-2} \wc{L}_{\vec{0};2} +  \wc{A}(g) )\Pi_{ \vec{0} }^\perp \right )^{-1} \ \xrightarrow[]{\text{weak operator topology}} \ \Pi \left ( \Pi \wc{A}(0) \Pi \right )^{-1} \Pi $$
as $g\rightarrow 0$, where $\Pi$ denotes the projection onto\footnote{In all likelihood, $\Pi=0$ and the second term in the expression for $\mc{D}$ vanishes.  However I do not have a proof of this and it is not necessary to verify it to prove the existence of the small $g$ limit.}
$$\setb{f\in \ran P_0^\perp}{ \Pi_{\vec{0}}^\perp \wc{L}_{\vec{0}} f = 0}.$$
(See Lem.\ \ref{lem:limres} in Appendix \S\ref{sec:limres} below.)
Thus the limit in eq.\ \eqref{eq:limit} defining $\mc{D}$ exists in the weak operator topology, 
\begin{multline*}\mc{D} \ = \ \left (\im \wc{K}_{\vec{0};3} +\im \wc{U}_3 + B_3 \right )^{-1} \\ - \left (\im \wc{K}_{\vec{0};3} +\im \wc{U}_3 + B_3 \right )^{-1} V  \Pi \left ( \Pi \wc{A}(0) \Pi \right )^{-1} \Pi V^\dagger  \left (\im \wc{K}_{\vec{0};3} +\im \wc{U}_3 + B_3 \right )^{-1},
\end{multline*}
and
$$\lim_{g \rightarrow 0} \frac{1}{g^2} \vec{D}_{i,j}(g) \ = \ \ipc{F_{\vec{0}}^{(i)}}{ \mc{D} F_{\vec{0}}^{(j)}} + \ipc{F_{\vec{0}}^{(j)}}{ \mc{D} F_{\vec{0}}^{(i)}}.$$
This completes the proof of Theorem \ref{thm:genCLT}. \qed 

\appendix

\section{Equivalence of twisted shifts on product spaces}\label{app:unitarygroup}
\begin{thm} Let $\nu$ be a Borel probability measure on $\bb{R}$ and let $\mu = \bigtimes_x \nu$ on $\Omega = \bb{R}^{\Z^d}$. If
$$\sigma_x \omega(y) \ := \ \omega(y-x) \quad \text{and} \quad S_x f(\omega) \ = \ f(\sigma_x\omega)  $$
for $f\in L^2(\Omega)$, then $\sigma_x$ are $\mu$ measure preserving maps and there is a strongly continuous $d$-parameter unitary group $\vec{k}\in \R^d \mapsto U_{\vec{k}}$ on $L^2(\Omega)$ such that  $U_{\vec{k}}\bb{1} = \bb{1}$ and
$$U_{\vec{k}}S_x f \ = \ \e^{\im \vec{k} \cdot x} S_x U_{\vec{k}}f$$ 
for $f\in L^2_0(\Omega)=\setb{f\in L^2(\Omega)}{\int f \di \mu =0}.$
\end{thm}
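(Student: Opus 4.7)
Measure preservation of $\sigma_x$ is immediate from the product structure of $\mu$, so the content is the construction of $U_{\vec{k}}$. The plan is to exploit the Wick (or Fock) chaos decomposition of $L^{2}(\Omega)$ induced by the product measure, and to define $U_{\vec{k}}$ as a phase on each chaos sector in such a way that the required twisted-intertwining identity holds sector by sector.

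For each finite $F\subset\Z^{d}$, let $\mc{H}_{F}\subset L^{2}(\Omega)$ be the closed subspace of functions depending only on the coordinates $\{\omega(x):x\in F\}$ and satisfying $\int f\,\nu(\di\omega(x))=0$ for every $x\in F$; equivalently, $\mc{H}_{F}$ is the image of $\bigotimes_{x\in F}L_{0}^{2}(\nu)$ under the natural embedding, with $\mc{H}_{\emptyset}=\C\1$. The first step is to establish the orthogonal decomposition
\begin{equation*}
L^{2}(\Omega) \;=\; \bigoplus_{F\subset\Z^{d},\ |F|<\infty} \mc{H}_{F}.
\end{equation*}
Orthogonality of distinct sectors follows from Fubini together with the zero-mean condition (any $x\in F\triangle F'$ gives a vanishing integral). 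Completeness reduces, by density of cylinder functions in $L^{2}(\Omega)$, to the factorization identity $I=\prod_{x\in G}(E_{x}+Q_{x})$ applied to functions depending only on coordinates in a finite set $G$, where $E_{x}$ is the conditional expectation integrating out $\omega(x)$ and $Q_{x}=I-E_{x}$.

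The second step is the shift covariance: since $\sigma_{x}\omega(y)=\omega(y-x)$, the unitary $S_{x}$ maps $\mc{H}_{F}$ onto $\mc{H}_{F-x}$. Now fix any translation-invariant total order on $\Z^{d}$ (e.g.\ the lexicographic one) and set $c(F):=\min F$ for nonempty finite $F$; this yields a $\Z^{d}$-equivariant selection $c(F-x)=c(F)-x$. Define
\begin{equation*}
U_{\vec{k}}\big|_{\mc{H}_{F}} \;=\; \e^{-\im\vec{k}\cdot c(F)}\,I_{\mc{H}_{F}}\quad (F\neq\emptyset),\qquad U_{\vec{k}}\big|_{\mc{H}_{\emptyset}}\;=\;I.
\end{equation*}
Each summand is unitary, the group law $U_{\vec{k}+\vec{k}'}=U_{\vec{k}}U_{\vec{k}'}$ is immediate, $U_{\vec{k}}\1=\1$ since $\1\in\mc{H}_{\emptyset}$, and strong continuity follows from dominated convergence in the chaos expansion $f=\sum_{F}f_{F}$. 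Finally, for $f\in\mc{H}_{F}$ with $F\neq\emptyset$,
\begin{equation*}
U_{\vec{k}}S_{x}f \;=\; \e^{-\im\vec{k}\cdot c(F-x)}S_{x}f \;=\; \e^{\im\vec{k}\cdot x}\e^{-\im\vec{k}\cdot c(F)}S_{x}f \;=\; \e^{\im\vec{k}\cdot x}S_{x}U_{\vec{k}}f,
\end{equation*}
which, by density of finite chaos sums in $L_{0}^{2}(\Omega)=\{\1\}^{\perp}=\bigoplus_{F\neq\emptyset}\mc{H}_{F}$, gives the required twisted-intertwining identity.

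The principal technical step is the chaos decomposition itself, which is a standard result for product probability spaces but must be verified carefully (orthogonality via Fubini plus mean-zero cancellations; completeness via the $(E_{x}+Q_{x})$-factorization on cylinder functions and density). Once it is in hand, the rest of the argument is essentially algebraic, with the choice of an equivariant ``leftmost site'' $c(F)$ providing the phase needed to convert a shift of the \emph{support} $F\mapsto F-x$ into the prescribed multiplier $\e^{\im\vec{k}\cdot x}$.
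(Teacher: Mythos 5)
Your proof is correct and takes a genuinely different route from the paper's. The paper builds an explicit orthonormal basis from products of one-variable orthogonal polynomials, $p_{\vec{n}}=\prod_x p_{\vec{n}(x)}(\omega(x))$ with $\vec{n}:\Z^d\to\N$ finitely supported, notes that $S_x$ acts freely on the nonzero multi-indices by the shift $\tau_x$, chooses an (otherwise arbitrary) representative $\vec{n}_{\vec{c}}$ in each translation orbit $\vec{c}$, and assigns to $p_{\tau_y\vec{n}_{\vec{c}}}$ the eigenvalue $\e^{\im\vec{k}\cdot y}$. You instead use the coarser Wick/chaos decomposition $L^{2}(\Omega)=\bigoplus_{F}\mc{H}_{F}$ over finite $F\subset\Z^{d}$ and let $U_{\vec{k}}$ act as the single phase $\e^{-\im\vec{k}\cdot c(F)}$ on each sector $\mc{H}_{F}$, where $c(F)=\min F$ in a translation-invariant total order; the equivariance $c(F-x)=c(F)-x$ does the work. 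This buys you two things: you bypass orthogonal polynomials entirely (so no case-splitting over the cardinality $N$ of the support of $\nu$), and $c(F)=\min F$ is a completely explicit equivariant selection, whereas the paper's choice of orbit representatives is left unspecified. The resulting operators $U_{\vec{k}}$ are in general different — yours depends only on the support $F$ of a chaos component, while the paper's can assign distinct phases to basis vectors $p_{\vec{n}}$, $p_{\vec{m}}$ with $\supp\vec{n}=\supp\vec{m}$ lying in different translation orbits — but both verify the twisted intertwining relation, and the paper itself remarks that $U_{\vec{k}}$ is far from unique.
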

\begin{rem*} As will be clear from the proof, the group $\vec{k} \mapsto U_{\vec{k}}$ is far from unique.  
\end{rem*}
\begin{proof}
It is standard that the shifts $\sigma_x$ define $\mu$-measure preserving maps on $\Omega$.  To construct the unitary group $U_{\vec{k}}$ we will use an explicit basis for $L^2(\Omega)$. Let $N$ denote the number of points in an essential support for $\nu$.  It may happen that $N <\infty$ or $N=\infty$. Let $p_j(s)$ for $0\le j < N$ denote the $L^2$-normalized orthogonal polynomials with respect to $\nu$, where $p_j$ has degree $j$.  So
$$p_0(s) = 1, \quad  p_1(s) = \frac{1}{\sqrt{\int_\R t^2 \nu(\di t) - (\int_\R t \nu(\di t))^2}} \left (s - \int_\R t \mu(\di t) \right ), \quad \text{etc.}$$

Let $\Gamma$ denote the set of functions $\vec{n}:\Z^d \rightarrow \bb{N}$ such that $\vec{n}(x) <N$ for all $x$ and  $\vec{n}(x) =0$ for all but finitely many $x$.  To each $\vec{n}\in \Gamma$, we associate the product
$$p_{\vec{n}}(\omega) \ := \ \prod_{x} p_{\vec{n}(x)}(\omega(x)).$$
The set $\setb{p_{\vec{n}}}{\vec{n}\in \Gamma}$ is an orthonormal basis for $L^2(\Omega)$.  Furthermore, $\setb{p_{\vec{n}}}{\Gamma_0 }$  is an orthonormal basis for $L^2_0(\Omega)$, where 
$\Gamma_0=\setb{\vec{n}\in \Gamma}{\vec{n}(x) \neq 0 \text{ for some } x }.$ 

Now $S_x p_{\vec{n}} = p_{\tau_x\vec{n}}$ where $\tau_x\vec{n}(y)  := \vec{n}(y+x)$.   Define an equivalence relation on $\Gamma_0$ by $\vec{n} \sim \vec{m}$ if $\vec{n} = \tau_x \vec{m}$ for some $x\in \Z^d$ and let $\mc{C}_0$ denote the set of equivalence classes. Note that $\tau_x \vec{n} \neq \vec{n}$ for $\vec{n}\in \Gamma_0$ and $x\neq 0$. Thus each equivalence class $\vec{c}\in \mc{C}_0$ is in one-to-one correspondence with $\Z^d$, via the map $x \mapsto \tau_x \vec{n}$ for any fixed element $\vec{n}$ of $\vec{c}$.  For each $\vec{c} \in \mc{C}_0$ \emph{choose} a distinguished representative $\vec{n}_\vec{c} \in \vec{c}$, and define
$$U_{\vec{k}} f \ := \ p_{\vec{0}} \ipc{p_{\vec{0}}}{f} \ + \ \sum_{\vec{c} \in C_0} \sum_{y\in \Z^d} \e^{\im \vec{k} \cdot y}  p_{\tau_y \vec{n}_{\vec{c}}}\ipc{ p_{\tau_y \vec{n}_{\vec{c}}}}{f}.  $$
Then $U_{\vec{k}}\bb{1} = \bb{1}$, since $\bb{1}=p_{\vec{0}}$, and
\begin{multline*}U_{\vec{k}} S_x f \ = \ \sum_{\vec{c} \in C_0} \sum_{y\in \Z^d} \e^{\im \vec{k} \cdot y}  p_{\tau_y \vec{n}_{\vec{c}}}\ipc{ p_{\tau_{y-x} \vec{n}_{\vec{c}}}}{f}  \\ = \  \sum_{\vec{c} \in C_0} \sum_{y\in \Z^d} \e^{\im \vec{k} \cdot (y+x)}  p_{\tau_{y+x} \vec{n}_{\vec{c}}}\ipc{ p_{\tau_{y} \vec{n}_{\vec{c}}}}{f} \ = \ \e^{\im \vec{k}\cdot x} S_x U_{\vec{k}} \end{multline*}
for $f\in L^2_0(\Omega)$. It is clear from the definition that $\vec{k} \mapsto U_{\vec{k}}$ is strongly continuous.
\end{proof}

\section{Conditioning on the future and Markov semigroups}\label{sec:conditioning}   Consider a Markov process $\{ \alpha(t)\}_{t\ge 0}$ on a space $\Alpha$ with invariant probability measure $\mu_\Alpha$.  As above, let  $\E_a(\cdot)$ denote averaging over paths of the process with initial value $a\in \Alpha$ and let
$$\Eva{\cdot} \ := \ \int_\Alpha \E_a(\cdot) \mu_\Alpha(\di a).$$
Invariance of the measure $\mu_\Alpha$ is expressed through the identity
$$\Eva{f(\alpha(t))} \ = \ \int_\Alpha f(a) \mu_\Alpha(\di a)$$
valid for all $t\ge 0$.  
\begin{defn}
Let $F$ be an $L^1$ function on path space $\mc{P}(\Alpha)=\Alpha^{[0,\infty)}$. The expectation of $F$   conditioned on the value of the process at time $t$ is the unique element $h\in L^1(\mu_\Alpha)$ such that
$$ \Eva{F(\{\alpha(s)\}_{s\ge 0}) \chi_{E}(\alpha(t))} \ = \ \int_E h(a) \mu_\Alpha(\di a)$$
for all measurable $E\subset \Alpha$.
\end{defn}
\noindent Such a function exists and is unique by the Radon-Nikodym theorem, since 
$$\nu(E) \ := \ \Eva{F(\{\alpha(s)\}_{s\ge 0}) \chi_{E}(\alpha(t))}$$
defines a countably additive complex valued measure on $\Alpha$ absolutely continuous with respect to $\mu_\Alpha$. The ``value'' of the conditional expectation at $a \in \Alpha$ is denoted by
\begin{equation}\label{eq:conditiononfuture}
\Evac{F(\{\alpha(s)\}_{s\ge 0})}{\alpha(t)=a}.
\end{equation}
However, $\Evac{F(\{\alpha(s)\}_{s\ge 0})}{\alpha(t)=a}$ is defined only for $\mu_\Alpha$-almost every $a$. With this definition, specifying the initial value of the process is the same as conditioning on the process at time $t=0$:
$$\E_a(\cdot) \ = \ \Evac{\cdot}{\alpha(0)=a}.$$

Now consider the map $T_tf(a)  = \Evac{f(\alpha(0))}{\alpha(t)=a}$, defined for functions $f\in L^1(\Alpha)$. For each $t>0$ and $p\ge 1$, this defines a contraction $T_t:L^p(\Alpha) \rightarrow L^p(\Alpha)$.  If as above the Markov process has stationary increments, then
$$T_tf(a) = \Evac{f(\alpha(s))}{\alpha(t+s)=a}$$
for any $s>0$.  In particular,
\begin{multline*} T_t T_sf(a) \ = \ \Evac{T_sf(\alpha(s))}{\alpha(t+s)=a}\ = \ \Evac{\Evac{f(\alpha(0))}{\alpha(s)}}{\alpha(t+s)=a} \\ = \ \Evac{f(\alpha(0))}{\alpha(t+s) =a} \ = \ T_{t+s}f(a)
\end{multline*}
by the Markov property.  Thus $T_t$ is a contraction semigroup.   By definition the adjoint semigroup $T_t^\dagger$ satisfies, for real valued $f$ and $g$,
$$\ipc{T_t^\dagger f}{g} \ = \ \ipc{f}{T_t g} \ = \ \int_\Alpha f(a) \Evac{g(\alpha(0))}{\alpha(t)=a}\mu_\Alpha(\di a) \ = \ \Eva{f(\alpha(t)) g(\alpha(0))}.$$
Thus the adjoint of $T_t$ is the backward semigroup
\begin{equation}\label{eq:Tdagcond}
T_t^\dagger f(a) \ = \ \Evac{f(\alpha(t))}{\alpha(0)=a}.
\end{equation}

Provided the semigroup $T_t$ is strongly continuous, it has a generator.\footnote{The existence of a contraction semigroup requires only the Markov property and stationary increments.  However, to obtain strong continuity it is useful to assume some sort of continuity for the paths of the Markov process, as in Assumption \ref{ass:dynamics}.} 
Let $B$ denote the generator of $T_t$ on $L^2(\Alpha)$. If $B$ is sectorial, as above, then for any $f\in L^2(\Alpha)$, $T_t f\in \mc{D}(B)$ for $t >0$.  Thus $\Evac{f(\alpha(0))}{\alpha(t)=a}$ is differentiable and
\begin{equation}\label{eq:Tdagdiff}
\frac{\di }{\di t}  \Evac{f(\alpha(0))}{\alpha(t)=a} \ = \ - B T_t f(a)
\end{equation}
for all $t>0$.

\section{A prior bound on the evolution}\label{sec:quadratic}  In this section we present an estimate on solutions to eq.\ \eqref{eq:genSE} which depends only on the the short range bound of Assumption \ref{ass:H0}.  An elementary consequence of this bound is that
$$\norm{ (1+ |X|)H_{0}  \frac{1}{1+ |X|}}_{\ell^2(\Z^d)\rightarrow \ell^2(\Z^d)} \ \le \ \sum_{\zeta \neq 0} (1 + |\zeta|) |h(\zeta)| \ < \ \infty ,$$
where $[|X|\psi](x) \ = \ |x|\psi(x)$. 

Let $U(t,s)$ be the unitary propagator for eq.\ \eqref{eq:genSE}, which is the unique solution to
\begin{equation}\label{eq:propagator} \partial_t U(t,s) \ = \ -\im \left ( H_{0} + U_\omega  +g V_{\alpha(t),\omega} \right )U(t,s),\quad U(s,s) = I,
\end{equation}
with $I$ the identity map on $\ell^2(\Z^d)$. Solutions $\psi_t$ to eq.\ \eqref{eq:genSE} satisfy $\psi_t=U(t,0)\psi_0$.
\begin{lem}\label{lem:quadratic} With probability one, we have
$$\norm{ (1+ |X|)  U(t,s) \frac{1}{1+|X|}} \ \le \ \e^{m |t-s|} $$
for every $t$, where $m = \sum_{\zeta\neq 0} (1+|\zeta|) |h(\zeta)|$.
\end{lem}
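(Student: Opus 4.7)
The plan is to conjugate the propagator by the weight $1+|X|$ and observe that this weight commutes with every term in the generator of \eqref{eq:propagator} except for $H_0$. Since the conjugation of $H_0$ by $1+|X|$ is bounded by hypothesis, a Grönwall-type estimate will give the claimed bound.

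Concretely, fix $\phi\in\ell^2(\Z^d)$ of compact support (a dense class) and set $\psi_t := U(t,s)\phi$, $\varphi_t := (1+|X|)\psi_t$. The potentials $U_\omega$ and $V_{\alpha(t),\omega}$ are real multiplication operators, so they commute with $1+|X|$ and remain self-adjoint after the conjugation. Differentiating \eqref{eq:propagator} and using this fact gives, at least formally,
\begin{equation*}
\partial_t \varphi_t \ = \ -\im\bigl[C + U_\omega + g V_{\alpha(t),\omega}\bigr]\varphi_t,\qquad \varphi_s = (1+|X|)\phi,
\end{equation*}
where $C := (1+|X|)H_0(1+|X|)^{-1}$ satisfies $\|C\|\le m$ by the short-range hypothesis (Assumption \ref{ass:H0}). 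Because the multiplication part is anti-self-adjoint after multiplication by $-\im$, it contributes nothing to the time derivative of $\|\varphi_t\|^2$:
\begin{equation*}
\frac{\di}{\di t}\|\varphi_t\|^2 \ = \ 2\Re \ipc{\varphi_t}{-\im C \varphi_t} \ \le \ 2\|C\|\,\|\varphi_t\|^2 \ \le \ 2m\|\varphi_t\|^2.
\end{equation*}
Grönwall's inequality yields $\|\varphi_t\|\le \e^{m|t-s|}\|\varphi_s\|$, that is
\begin{equation*}
\|(1+|X|)U(t,s)\phi\| \ \le \ \e^{m|t-s|}\,\|(1+|X|)\phi\|
\end{equation*}
for every compactly supported $\phi$. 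Replacing $\phi$ by $(1+|X|)^{-1}\psi$ for $\psi$ in a dense set gives the claimed operator bound; the case $t<s$ follows identically by running time backward, or by noting $U(t,s)^{-1}=U(s,t)$.

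The main technical point \tem\ rather than a serious obstacle \tem\ is justifying the formal differentiation above, i.e.\ showing that $\varphi_t=(1+|X|)\psi_t$ is genuinely differentiable in $\ell^2$. For $\phi$ of compact support this is straightforward: $\psi_t$ lies in the domain of $1+|X|$ for all $t$ (by approximating with compactly truncated Hamiltonians and passing to the limit using the uniform bound just derived), and the formal identity above holds pointwise. A standard closure argument, using the density of $\{\phi : (1+|X|)\phi\in\ell^2\}$ and the \emph{a priori} bound, then extends the estimate to the full domain and gives the lemma.
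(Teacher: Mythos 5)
Your proposal is correct in substance but follows a genuinely different route from the paper. The paper never differentiates the weighted norm in time; instead it first treats $g=0$, where the conjugated propagator is the exponential of the bounded operator $(1+|X|)H_0(1+|X|)^{-1}+U_\omega$, and reads off $\norm{\cdot}\le \e^{|t-s|\norm{\Im(1+|X|)H_0(1+|X|)^{-1}}}$ directly; it then handles $g>0$ by a Lie--Trotter splitting into $U_{g=0}$ factors and diagonal factors $W_g$, inserting $(1+|X|)(1+|X|)^{-1}$ between every pair and multiplying the norm bounds. Your Grönwall argument on $\norm{(1+|X|)\psi_t}^2$ reaches the same exponent $m$ (with the small, harmless slack of using $\norm{C}$ where the paper uses $\norm{\Im C}\le\norm{C}$) and is arguably cleaner in that it avoids the Trotter limit. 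The trade-off is that Grönwall requires justifying that $t\mapsto (1+|X|)\psi_t$ is differentiable in $\ell^2$, which the Trotter route sidesteps entirely.

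On that point your regularity paragraph is slightly circular as written: you justify that $\psi_t\in\mc{D}(1+|X|)$ ``using the uniform bound just derived,'' but that bound is exactly what you are proving. The fix is simple and worth stating: since $C=(1+|X|)H_0(1+|X|)^{-1}$, $U_\omega$, and $V_{\alpha(t),\omega}$ are all \emph{bounded}, the conjugated Cauchy problem $\partial_t\varphi_t=-\im[C+U_\omega+gV_{\alpha(t),\omega}]\varphi_t$, $\varphi_s=(1+|X|)\phi$, has a unique $\ell^2$ solution; applying the bounded operator $(1+|X|)^{-1}$ and using that it commutes with the multiplication operators and intertwines $C$ with $H_0$ shows $(1+|X|)^{-1}\varphi_t$ solves \eqref{eq:propagator} with data $\phi$, hence equals $\psi_t$ by uniqueness. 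This identifies $\varphi_t=(1+|X|)\psi_t$ with no a priori domain assumption, and then your Grönwall estimate runs unimpeded.
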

\begin{rem*}In particular, we see that if $(1+ |X|)\psi_0\in \ell^2(\Z^d)$ then the solution $(1+|X|) \psi_t \in \ell^2(\Z^d)$ for all time and
$$\norm{(1+|X|) \psi_t} \ \le \ \e^{m |t|} \norm{(1+|X|) \psi_0}.$$ 
\end{rem*}
\begin{proof}  First suppose $g=0$.  Then
$$(1+ |X|)  U(t,s)\frac{1}{1 + |X|}   \ = \ \e^{-\im (t-s) \left \{  (1+ |X|)  H_0 \frac{1}{1+|X|} + U_\omega \right \} },$$
so
$$\norm{ (1+ |X|)  U(t,s)\frac{1}{1 + |X|}} \ \le \ \e^{|t-s| \norm{\Im (1+ |X|)  H_0 \frac{1}{1+|X|}}} \ \le \ \e^{m |t-s|}.$$

For $g>0$, let $W_g(t,s)$ denote the unitary propagator associated to $g V_{\alpha(t),\omega}$.  So
$$W_g(t,s)\psi(x) \ = \ \e^{-\im g \int_s^t V_{\alpha_r,\omega}(x) \di r} \psi(x).$$
Note that $W_g(t,s)$ is diagonal in the position basis and thus $(1+|X|) W_g(t,s) (1+|X|)^{-1} = W_g(t,s)$.  The full propagator can be obtained from a splitting formula analogous to the Lie-Trotter formula:
\begin{multline*} U(t,s) \ = \ \lim_{n\rightarrow\infty} U_{g=0}(t,s_{2n-1}) W_g(s_{2n-1},s_{2n-2}) \times  \\ 
 U_{g=0}(s_{2n-2},s_{2n-3}) W_g(s_{2n-3},s_{2n-4}) 
\cdots  U_{g=0}(s_2,s_1) W_g(s_1,s)
\end{multline*}
where $s_{k}= s + \frac{k}{2n} (t-s)$. The desired estimate now follows by inserting  $I = (1+|X|)(1+|X|)^{-1} $ between every pair of propagators and estimating the norm of a product as the product of the norms.
\end{proof}

\section{A limiting principle for resolvents}\label{sec:limres}
\begin{lem}\label{lem:limres} Let $A$ and $B$ be bounded operators on a Hilbert space $\mc{H}$.  If $A $ is normal, $\Re A \ge 0$ and $\Re B \ge c > 0$, then for any $\phi,\psi\in \mc{H}$, 
$$\lim_{\lambda \rightarrow \infty} \ipc{\phi}{\left ( \lambda A + B \right )^{-1} \psi}_{\mc{H}} \ = \ \ipc{\Pi \phi}{ \left ( \Pi B \Pi \right )^{-1} \Pi \psi}_{\ran \Pi}$$
where $\Pi=$ projection onto the kernel of $A$.
\end{lem}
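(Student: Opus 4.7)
My plan is to exploit the normality of $A$ to reduce the lemma to a block-matrix Schur calculation. Because $A$ is normal, $\ker A = \ker A^*$, so $\Pi$ also projects onto $\ker A^*$; hence $A\Pi = 0$ and $\Pi A = (A^*\Pi)^* = 0$, giving $A = \Pi^\perp A \Pi^\perp$. In the orthogonal decomposition $\mc{H} = \ran \Pi \oplus \ran \Pi^\perp$ this produces the block form
\[
\lambda A + B \ = \ \begin{pmatrix} B_{00} & B_{01} \\ B_{10} & T_\lambda \end{pmatrix}, \qquad T_\lambda \ := \ \lambda A_{11} + B_{11},
\]
with $A_{11} = \Pi^\perp A\Pi^\perp$, $B_{00}=\Pi B\Pi$, $B_{01}=\Pi B \Pi^\perp$, etc.  Since $\Re(\lambda A+B)\ge c$ for all $\lambda\ge 0$, both $(\lambda A+B)^{-1}$ and $T_\lambda^{-1}$ are bounded in norm by $1/c$ uniformly in $\lambda$; the Schur formula then reads $\Pi(\lambda A+B)^{-1}\Pi = (B_{00}-X_\lambda)^{-1}$ with $X_\lambda := B_{01}T_\lambda^{-1}B_{10}$, and expresses the other three blocks in terms of $T_\lambda^{-1}$ and $(B_{00}-X_\lambda)^{-1}$.

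The key technical step will be to show that $T_\lambda^{-1}\to 0$ in the strong operator topology on $\ran\Pi^\perp$. Given $f\in\ran\Pi^\perp$, set $g_\lambda=T_\lambda^{-1} f$; taking the real part of $\ipc{g_\lambda}{T_\lambda g_\lambda}=\ipc{g_\lambda}{f}$ and using $\Re A_{11}\ge 0$ and $\Re B_{11}\ge c$ yields
\[
\lambda \Re \ipc{g_\lambda}{A_{11}g_\lambda} + c\norm{g_\lambda}^2 \ \le \ \Re\ipc{g_\lambda}{f},
\]
which simultaneously gives $\norm{g_\lambda}\le \norm{f}/c$ and $\lambda\Re\ipc{g_\lambda}{A_{11}g_\lambda}\le \norm{f}^2/c$. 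From $\lambda A_{11} g_\lambda = f - B_{11} g_\lambda$, which is uniformly bounded, I obtain $\norm{A_{11}g_\lambda}=O(1/\lambda)$, so $A_{11}g_\lambda\to 0$ in norm. Since $A_{11}$ is injective on $\ran\Pi^\perp$ (its kernel in $\mc{H}$ is exactly $\ran\Pi$), any weak cluster point of the bounded family $\{g_\lambda\}$ must be annihilated by $A_{11}$ and therefore vanishes, so $g_\lambda\rightharpoonup 0$. Re-inserting this weak convergence into $c\norm{g_\lambda}^2\le \Re\ipc{g_\lambda}{f}$, whose right side now tends to $0$ because $f$ is fixed, promotes weak to strong convergence. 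This weak-to-strong upgrade is the main obstacle; it relies crucially on having two non-negativity properties in the energy identity and on the injectivity of $A_{11}$ on $\ran\Pi^\perp$, all of which come from normality together with $\Re A\ge 0$ and $\Re B\ge c>0$.

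Once strong convergence of $T_\lambda^{-1}$ is in hand, the rest is routine.  For each $v\in\ran\Pi$ one has $X_\lambda v = B_{01}(T_\lambda^{-1}B_{10}v)\to 0$ in norm, and combining this with the resolvent identity
\[
(B_{00}-X_\lambda)^{-1} - B_{00}^{-1} \ = \ (B_{00}-X_\lambda)^{-1}X_\lambda B_{00}^{-1}
\]
and the uniform bound $\norm{(B_{00}-X_\lambda)^{-1}}\le 1/c$ shows $(B_{00}-X_\lambda)^{-1}\to B_{00}^{-1}$ strongly on $\ran\Pi$. The three remaining blocks of $(\lambda A+B)^{-1}$ each involve a factor of the form $T_\lambda^{-1} w_\lambda$ where $w_\lambda$ converges in norm; uniform boundedness of $T_\lambda^{-1}$ then forces those blocks to contribute $o(1)$ when tested against fixed vectors $\phi,\psi$. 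The surviving $(0,0)$ block gives exactly $\ipc{\Pi\phi}{(\Pi B\Pi)^{-1}\Pi\psi}$, completing the proof.
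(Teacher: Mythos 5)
Your argument is correct, and it establishes something strictly stronger than the lemma as stated: you obtain norm (hence strong operator) convergence of $(\lambda A + B)^{-1}\psi$ to $(\Pi B\Pi)^{-1}\Pi\psi$, not merely weak convergence. Each step checks out: normality of $A$ gives $\ker A=\ker A^*$ hence $A\Pi=\Pi A=0$, so the block form is as you wrote; the uniform bounds $\norm{(\lambda A+B)^{-1}}\le 1/c$ and $\norm{T_\lambda^{-1}}\le 1/c$ follow from $\Re(\lambda A+B)\ge c$; the energy identity yields both $\norm{g_\lambda}\le\norm{f}/c$ and $\norm{A_{11}g_\lambda}=\mc{O}(1/\lambda)$; injectivity of $A_{11}$ on $\ran\Pi^\perp$ forces weak null cluster points; and re-inserting that weak convergence into $c\norm{g_\lambda}^2\le\Re\ipc{g_\lambda}{f}$ indeed upgrades to norm convergence. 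The remainder is routine bookkeeping with the Schur complement.

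The paper takes a shorter, more economical path and proves only what the lemma asserts. There, one sets $h_\lambda=(\lambda A+B)^{-1}\psi$ directly on $\mc{H}$ (no block decomposition), shows $\norm{h_\lambda}\le\norm{\psi}/c$ and $\norm{Ah_\lambda}=\mc{O}(1/\lambda)$, concludes that every weak cluster point of $h_\lambda$ lies in $\ker A$ so that $\Pi^\perp h_\lambda\wklim{}0$, and then reads off $\Pi B h_\lambda=\Pi\psi$ (from $\Pi A=0$), whence $\Pi B\Pi\, h_\lambda\wklim{}\Pi\psi$ and $\Pi h_\lambda\wklim{}(\Pi B\Pi)^{-1}\Pi\psi$. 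The trade-off is clear: the paper's proof is roughly half the length and works entirely with weak limits, whereas your Schur-complement decomposition isolates the mechanism, makes the role of each block explicit, and delivers the sharper strong-convergence conclusion as a free byproduct of the weak-to-strong upgrade. Both are valid; yours is more informative, the paper's is more parsimonious.
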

\begin{proof}
Let $\psi \in \mc{H}$ be given and let $h_\lambda = (\lambda A + B)^{-1}\psi$. We must prove that $h_\lambda$ converges weakly to $(\Pi B \Pi)^{-1} \Pi \psi$.  
To begin note that  $\|h_\lambda\| \ \le \ c^{-1} \norm{\psi}$, since
$$ c \norm{h_\lambda}^2 \ \le \ \Re \ipc{h_\lambda}{\psi} \ \le \ \norm{h_\lambda} \norm{\psi}.$$
This and the identity $\lambda  A h_\lambda = \psi - B h_\lambda$ imply 
$$|\lambda| \norm{A h_\lambda} \ \le \ \left ( 1 + c^{-1} \norm{B} \right ) \norm{\psi}.$$
and so  $\Pi^\perp h_\lambda$ converges weakly to zero.  Since $A$ is normal, it commutes with $\Pi$ and thus
$$\Pi B h_\lambda \ = \ \Pi \psi.$$
Since $\Pi^\perp h_\lambda$ converges weakly to  $0$ and $\Pi B \Pi$ is boundedly invertible on $\ran \Pi$, it follows that $\Pi h_\lambda$ converges weakly to  $(\Pi B \Pi)^{-1} \Pi \psi$.
\end{proof}

\printbibliography

\end{document}